\newtheorem{observation}{Observation}
\newcommand{\particle}{p}
\newcommand{\timeStepSize}{\Delta t}
\newcommand{\particleSet}{\mathbb{P}}
\newcommand{\triangleSet}{\mathbb{T}}
\newcommand{\searchSet}{\mathbb{S}}
\newcommand{\halo}{\epsilon}
\newcommand{\particlePosition}{x}
\newcommand{\particleVelocity}{v}
\newcommand{\particleRotation}{r}
\newcommand{\particleAngularVelocity}{w}
\newcommand{\snapshotOld}{ t^{\text{(old)}} }
\newcommand{\snapshotCurrent}{ t^{\text{(current)}} }
\newcommand{\snapshotNew}{ t^{\text{(new)}} }
\newcommand{\snapshotCollision}{ t^{\text{(collision)}} }
\newcommand{\timeTerminal}{ t^{\text{(final)}} }
\newcommand{\timeMinCurrent}{ t^{\text{(current)}}_{\text{min}} }
\newcommand{\timeMaxCurrent}{ t^{\text{(current)}}_{\text{max}} }
\newcommand{\tria}{ \tau }
\begin{document}

\newcommand{\TheTitle}{Parallel local time stepping for rigid bodies represented by triangulated meshes}

\ifthenelse{ \boolean{useSISC} }{
 \title{
  \TheTitle
  \thanks{Submitted to the editors DATE.
   \funding{
The work was funded by an EPSRC DTA PhD scholarship (award no. 1764342).
It made use of the facilities of the Hamilton HPC Service of Durham University.
The research aligns and has been supported by EPSRC's Excalibur
programme through its cross-cutting project EX20-9 \textit{Exposing Parallelism: Task Parallelism}
(Grant ESA 10 CDEL) and the DDWG project \textit{PAX--HPC} (Gant EP/W026775/1).
Tobias' group appreciates the support by Intel's Academic Centre of
Excellence at Durham University.
   } 
  }
 }
 \author{
   Peter J.~Noble 
    \thanks{
     Department of Computer Science, 
     Durham University
     (\email{peter.j.noble@durham.ac.uk}).
    }
   \and 
   Tobias Weinzierl
    \thanks{
     Department of Computer Science, 
     Durham University
     (\email{tobias.weinzierl@durham.ac.uk}).
    }
 }
 \headers{Parallel Optimistic Local Time Stepping}{Peter J.~Noble and Tobias Weinzierl}
}{
  \title{\TheTitle}

  \titlerunning{Parallel optimistic local time stepping}

  \author{
   Peter J.~Noble 
    \thanks{
     Department of Computer Science, 
     Durham University
     (\email{peter.j.noble@durham.ac.uk}).
    }
   \and 
   Tobias Weinzierl
    \thanks{
     Department of Computer Science, 
     Institute for Data Science,
     Durham University
     (\email{tobias.weinzierl@durham.ac.uk}).
    }
  }
  \institute{Durham University}
}

\maketitle

\begin{abstract}
    Discrete Element Methods (DEM), i.e.~the simulation of many rigid particles,
suffer from very stiff differential equations plus multiscale challenges in
space and time.
The particles move smoothly through space until they interact
almost instantaneously due to collisions. 
Dense particle packings hence require tiny time step
sizes, while free particles can advance with large time steps.
Admissible time step sizes can span multiple orders
of magnitudes.
We propose an adaptive local time stepping algorithm which identifies
clusters of particles that can be updated independently, advances them
optimistically and independently in time, determines collision time stamps in
space-time such that we maximise the time step sizes used, and resolves the
momentum exchange implicitly.
It is combined with various acceleration techniques which exploit multiscale
geometry representations and multiscale behaviour in time. 
The collision time stamp detection in space-time in combination with the
implicit solve of the actual collision equations avoids that particles get
locked into tiny time step sizes, the clustering yields a high concurrency
level, and the acceleration techniques plus local time stepping avoid
unnecessary computations.
This brings a scaling, adaptive time stepping for DEM for real-world
challenges into reach.

\end{abstract}

\ifthenelse{ \boolean{useSISC} }{
 \begin{keywords}
  Discrete Element Method,
  Local time stepping,
  Multiscale contact detection,
  Task-based parallelisation
 \end{keywords}
 
 \begin{AMS}
  70E55, 70F35, 68U05, 51P05, 37N15
 \end{AMS}
}{
  \keywords{Discrete Element Method \and Local time stepping \and Multiscale
  contact detection \and Task-based parallelisation}
}

\section{Introduction}

%
%
In rigid body simulations using the Discrete Element Method (DEM) \cite{originalDEM},
particle-particle collisions describe the evolution of the system.
Collision detection is expensive and
usually forms the simulation's hot spot
\cite{multires2022,Iglberger:2010:GranularFlowNonSpherical,Li:1998:HierarchicalBoundingVolumes,Rakotonirina:2018:ConvexShapes}.
Therefore, expensive geometric checks should be performed if and only if
necessary,  
and they should run in parallel.
In an ideal world, only those checks identifying a collision are
performed, and we pick time step sizes that match exactly the time in-between two collisions, while particles are updated 
in parallel with minimal synchronisation.

%
%
Collision detection is particularly hard once the topology of the
particle interactions changes over time---if particles
for example settle into compact particle clusters and then spread out again.
Further to that, the shape of particles plays a crucial role
determining the behaviour of physical systems and the complexity of the
calculations.
Since simple analytical shapes
are insufficient to reproduce
large-scale phenomena of scientific
and engineering interest \cite{KRUGGELEMDEN2008153,ShapeSelection}, many codes
use compositions of analytical shapes.
Yet, they refrain from representing objects with 
(triangulated) meshes \cite{multires2022,Krestenitis:17:FastDEM} 
to remain computationally feasible.
Finally, supporting large variations in the sizes of particles 
is essential to understand many challenges of practical
relevance \cite{WEINHART2020107129}.

\begin{figure}[htb]
 \begin{center}
  \includegraphics[width=1.0\textwidth]{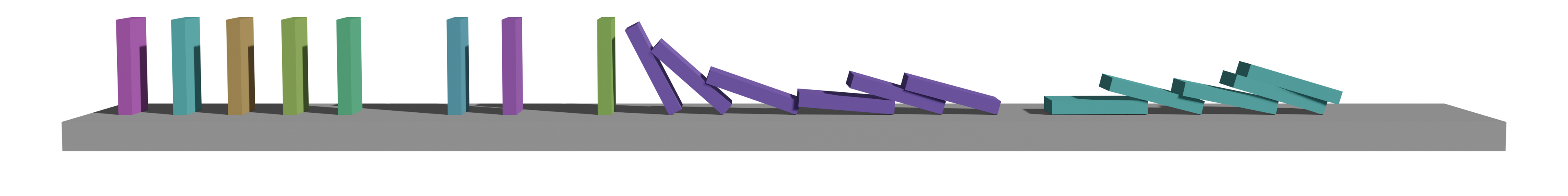}
 \end{center}
 \caption{
 Dominoes where multiple pieces (particles) tumble and push each
 other over. As long as particles do not move, they can advance in time
 with large time steps, while the collisions happen on short time scales
 and change the particle topology (who interacts with whom). Particle
 clusters which can use different time step sizes and advance in time
	 independently are shown in different colours.
  \label{figure:broad-phase-collision:clusters-with-static}
 }
\end{figure}

%
%
Complex, changing particle arrangements culminate in 
a globally stiff system of differential equations.
There is always two particles colliding, while the majority of
particles in the system are reasonably far away from each other and could
progress with larger time steps. 
It becomes impossible to find one reasonable time step
that suits all objects.
Furthermore, contacts affect only small regions
of the involved (large) particles, i.e.~a ``lot of geometry'' is not involved in
collisions at all.
Which and where particles interact changes permanently.
Continuous, implicit collision detection
\cite{Ferguson:2021:RigidIPC,CCDBench,Li2021CIPC,AffineDynamics}, where we
identify contact time stamps (almost) precisely, avoids the repeated
application of tiny time step sizes.
However, it is not trivial to generalise it to complex particle assemblies.
The combination of non-analytical rigid body representations for particles of
massively differing size, local time steps and implicit time
stepping ingredients is barely found in large-scale DEM
simulations.
They are considered to be too costly.

%
%
We propose a novel time stepping workflow, i.e.~a sequence of algorithmic steps,
to narrow the gap between what is computationally feasible and algorithmically desirable.
Hereby, each and every particle has its own time stamp and can advance 
with its own time step size.
First, we work with rough approximations of the particles which help us to
identify particle-particle pairs which can not yield collisions over a certain
time span.
We prune the graph of potential particle-particle interactions.
Second, we group particles into clusters which march in time in-sync. 
We give up on the individualism of the particles and group them, although
the grouping into clusters can change from step to step.
Clusters are chosen such that they cannot interact with each other, and
therefore can be handled in parallel.
Third, we determine an admissible time step size per cluster that is
just big enough to yield a first contact. 
We deal with the stiffness of the underlying differential equations with an
implicit scheme.
Finally, we allow the clusters of particles to progress, but if and only if it
remains possible for us to roll them back in time should we have missed out on
some collisions.

%
%
In three ways, we rephrase a computationally hard problem with an even more
challenging algorithmic language:
We rewrite the geometric problem to find contacts within a cluster in a
multi-resolution/-representation language, we abandon the idea of a
invariant clusters of particles, and we
switch to a space-time formulation.
These ideas can coexist since we carefully switch between implicit-explicit
formulations and conservative and optimistic time stepping, i.e.~techniques
that can guarantee that no physical constraints (incompressibility) are violated
and approaches that occasionally violate them but can, in such cases, roll back
and rerun.
The combination of implicit techniques with local time stepping avoids many tiny
global time steps \cite{adaptiveLocal18}, while we do not have to resort to any time step size discretisation \cite{localSPH}.
At the same time, the optimistic update of clusters keeps
the concurrency level high---few dense particle packings do not
throttle the overall simulation because they are taking tiny steps to catch up
\cite{localSPH,localTimeStepDEM}---while a synchronisation mechanism between
clusters ensures that we do not suffer from too many false negative contact
predictions \cite{CCDBench}.
In our opinion, the combination of these three elaborate ways to phrase the
computational challenge allows us to construct a significantly more efficient
and elegant algorithm than most state-of-the-art DEM time stepping schemes.

%
%
While our techniques bring simulations with large particle sets
represented by triangles into reach, the algorithmic mindset might
have potential impact beyond the regime of DEM, such as fluid-structure
interaction or other Lagrangian schemes.
It however yields a hard dynamic load balancing challenge.
On a single compute node with shared memory, 
work stealing can cope with it. The distributed memory world however
remains beyond scope here.

%
%
The remainder is organised as follows:
We start with a high-level overview of our algorithm, which allows us to
introduce our core physical formulae, algorithmic ingredients as well as the
used notation (Section~\ref{section:nutshell}).
This ``in a nutshell'' part is follow by the core methodological contribution of
the paper, where we first discuss a broad phase collision detection which
identifies the particle clusters (Section~\ref{section:cluster-identification}), 
the configuration and setup of the clusters
(Section~\ref{section:cluster-consolidation}), as well as the calculation of
their admissible time step size
(Section~\ref{section:time-step-size-calculation}).
To keep the simulation data consistent in time, we decide which particle
clusters are allowed to advance in time, before we actually compute the particle
interactions and allow them to progress in time (Section~\ref{section:time-stepping}).
With the algorithmic details at hand, we use 
Section~\ref{section:correctness} to discuss the algorithm's efficacy and
correctness.
Our performance results study various characteristics of
the arising algorithm, before we close the discussion in Section \ref{section:conclusion}.

\section{The algorithm}
\label{section:nutshell}

Let $ \particleSet $ denote our set of particles. 
Each particle is described by its geometry, its dynamic properties such as
velocity and rotation, as well as further meta data. 
Each of these quantities is parameterised through the particle's time
stamp, as the particles move and collide with each other.

\subsection{Physical model}

The ruling physical model is simple:
Each particle $\particle \in \particleSet $ is rigid, yet augmented 
by a small halo layer of width $\halo (\particle)>0$.
The halo serves as cushion weakening the notion of
a contact:
Two particles collide if their $\halo $-layers overlap.

If we have two
snapshots of the particle position $\particlePosition (p,t)$, velocity
$\particleVelocity(p,t)$, rotation $\particleRotation (p,t)$ and the angular
velocity $\particleAngularVelocity(p,t)$ for time
stamps $t_1 < t_2$, we can reconstruct all four quantities at
any point in-between through linear interpolation for the position and velocity
or spherical linear interpolation for quaternion types. 
We neglect higher order effects affecting the trajectory, and we omit
long-range and global forces such as gravity.
Each particle is studied at three snapshots $\snapshotOld (p) \leq
\snapshotCurrent (p) \leq \snapshotNew (p)$, i.e.~our intention is to develop
$\snapshotNew (p)$ from the two previous snapshots of all particles in the
system.

The time span between two snapshots is not constrained at all.
Each particle may reside on different time stamps.
Therefore, we
abandon the classic notion of a time step and 
instead consider a time step to be an operation which
advances a nonempty subset of $\mathbb{P}$.
To simulate the system's behaviour, we have to invoke the time step calculations
iteratively, until 

\begin{equation}
\timeMinCurrent = \min _{p \in \particleSet } \snapshotCurrent(p)
\label{equation:minimum:global}
\end{equation}

\noindent
fulfills $\timeMinCurrent \geq \timeTerminal$.
$\timeTerminal$
is the prescribed final simulation time.


\subsection{Algorithm blueprint} 
Each individual time step consists of six substeps:

\paragraph{Clustering.}
In a first substep, we identify 
clusters, i.e.~sets $\mathbb{P}_1, \mathbb{P}_2 \subset \mathbb{P}$ with $\mathbb{P}_1 \cap
\mathbb{P}_2 = \emptyset$, such
that $p_1 \in \mathbb{P}_1, p_2 \in \mathbb{P}_2$ implies that $p_1$ and $p_2$
can not collide over a time span $\timeStepSize $ of interest.
Let $c: \mathbb{P} \times \mathbb{P} \mapsto \{\bot, \top \}$ encode
if two particles might collide ($\top$) or definitely will not collide ($\bot$).

To make a decision if there might be a collision, we assume particles to be in 
free flight with the velocity at $\particleVelocity (p,\snapshotCurrent)$.
Therefore, we can extrapolate to a hypothetical state at $\snapshotNew =
\snapshotCurrent + \timeStepSize $ and check for 
collisions between any two particles along their hypothetical trajectories.
This yields the initial $c$.
It is symmetric, i.e.~$c(p_1,p_2) = c(p_2,p_1)$, and it
prunes the collision graph:
In theory, we have to assume that each particle could collide with every other.
The graph of potential particle-particle interactions is a clique. 
However, we know that it will be sparse in reality, as we work with
incompressible objects and small $\timeStepSize $.
Most particles cannot collide.
$c$ formalises this and prunes the graph.

The pruned, undirected graph hosts disconnected subgraphs.
All particles within a subgraph form one particle cluster.
As particles from different clusters cannot interact with each other, we handle
the particle sets in parallel and independently from hereon.

\paragraph{Cluster consolidation.}
In a second substep, we ensure that all particles assigned to one cluster
advance at the same pace.
In general, we have no guarantee that two particles $p_1$ and $p_2$ reside on
the same time stamp once we have identified a cluster $\particleSet
_i$ with $p_1,p_2 \in \particleSet _i$.
Therefore, we consolidate the particles within each cluster:
We compute the
cluster equivalent to (\ref{equation:minimum:global}), i.e.

\[
\timeMinCurrent (\particleSet _i) = \min _{p \in \particleSet _i} \snapshotCurrent(p),
\]

\noindent
and subsequently \emph{roll back} all particles within $\particleSet _i$ to this time
stamp by interpolating between $\snapshotOld$ and $\snapshotCurrent$.

\begin{definition}
 After the consolidation, a
 \emph{cluster} is a set of particles which all
 reside at the same $\snapshotCurrent$ and will advance with
 the same time step size. 
 \label{definition:cluster}
\end{definition}

\paragraph{Time step size calculation.}
In a third substep, we compute an admissible time step size per cluster subject to an admissible time span
$\timeStepSize $, which is, so far, only a crude upper limit for a time step
size.
Let $p_1, p_2 \in \mathbb{P}_i$ be two particles within one cluster.
The function $\snapshotCollision: \mathbb{P}_i \times \mathbb{P}_i \mapsto
[\timeMinCurrent (\particleSet _i), \timeMinCurrent (\particleSet _i) + \timeStepSize(\particleSet _i)] $ describes when the first collision within these
two particles happens.
$\snapshotCollision(p_1,p_2)=(\timeMinCurrent  + \Delta
t)(\particleSet _i) $ means that the two particles 
definitely do not collide over the time interval of interest or are, relative
to each other, at rest.
They might be stacked upon each other, e.g.
Other $\snapshotCollision(p_1,p_2)$ values mean the particles 
collide ``properly''.
We take the minimum of the latter values to determine a cluster's time step
size, i.e.~the minimal proper time step size which does not leapfrog any other
collision within the cluster:

\begin{definition}
 The \emph{effective time step size} of a cluster is chosen such that no two
 particles within this cluster collide before the end of the spanned time
 interval unless they are already at rest relative to each other.
 \label{definition:effective-time-step-size}
\end{definition}

\paragraph{Cluster masking.}
In a fourth substep, we ensure that clusters do not run ahead too quickly. 
This ensures that we can always roll back in time, i.e.~consolidate
the particles, if required.
The global reduction 

\begin{equation}
\snapshotCollision = \min _{\particleSet _i \subset \particleSet}
\snapshotNew(\particleSet _i)
  \label{equation:cluster-masking:min-collision-time}
\end{equation}

\noindent
indicates the earliest collision globally in this time step sweep.
The $\snapshotNew(\particleSet _i)$s result directly from the 
effective time step size calculation. 
Multiple collisions might happen later in time and might be updated as well in
this ``time step", but clusters with $ \snapshotCurrent
(\particleSet _i) > \snapshotCollision $ are disqualified from advancing:

\begin{definition}
 An \emph{active cluster} is a cluster which is allowed to progresses in time
 with its effective time step size, i.e.~is not masked out.
 \label{definition:active-cluster}
\end{definition}

\paragraph{Collision and update.}
In the fifth substep, we finally make the particles within
a cluster interact.
Algorithms with an $\epsilon$-formalism often replace the real contact force
equation described by a Dirac distribution with a smooth function whose support
is truncated to the halo.
We work with sequential impulses \cite{catto2006fast} instead.
Once an overlap of the $\halo$-regions is identified, we assume the point
within the overlap that sits right in-between
the two particles to be the 
exact contact point:
We temporarily ``expand'' the object geometries such that they fill out the
$\halo$ void, and then tailor the
interaction function such that the total momentum is preserved and the
particles are not squeezed together any further.
With the momentum exchange in place, we can update all particles within the
cluster.

\paragraph{Snapshot roll-over.}
Finally, we take those clusters that have advanced in time and roll them over:
$\snapshotOld \gets \snapshotCurrent$
and
$\snapshotCurrent \gets \snapshotNew$. All active clusters now
have advanced in time.
We evaluate the termination (or plotting) criteria and return to substep one for
the next time step.

\subsection{Correctness and efficacy}
Writing correct local adaptive time stepping methods for systems where each
particle is allowed to move at its own pace is non-trivial.

\begin{theorem}
 All particle snapshots at $t \leq \timeMinCurrent$ are \emph{valid snapshots}.
 If we advance the particles from $\particleSet $ in time, we know that no valid snapshot will
 ever turn out to be a wrong, i.e.~overly optimistic, guess.
 We notably will never have to roll back before $\timeMinCurrent $.
 \label{theorem:simulation-valid}
\end{theorem}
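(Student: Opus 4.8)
The plan is an induction over the sequence of time steps. First I would fix the notion used in the statement: call a snapshot of a particle $\particle$ at time stamp $t$ \emph{valid} if the recorded state agrees with the one the true dynamics produce, which --- because the model neglects higher-order trajectory effects and all long-range forces --- is the same as saying that no collision involving $\particle$ was skipped at any time $\le t$. The invariant carried through the induction is: at the start of every time step, (i) every committed snapshot with time stamp $\le \timeMinCurrent$ from (\ref{equation:minimum:global}) is valid, (ii) for every particle $\particle$ one has $\snapshotOld(\particle)\le \snapshotCurrent(\particle)$ and, moreover, any roll-back the algorithm performs this step reaches only down to a time stamp that is already recorded, and (iii) $\timeMinCurrent$ never decreases from one step to the next. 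The base case is the initial configuration: all particles share the start time and their snapshots are valid by assumption.

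For the inductive step I would go through the six substeps. Consolidation rolls the particles of a cluster $\particleSet_i$ back only to $\timeMinCurrent(\particleSet_i)\ge \timeMinCurrent$, rebuilding position, velocity and rotation by the prescribed (spherical) linear interpolation between the already-committed $\snapshotOld$ and $\snapshotCurrent$; under the model this interpolation is exact, so it neither corrupts a valid snapshot nor manufactures a new committed snapshot below $\timeMinCurrent$. Inside an active cluster, Definition~\ref{definition:effective-time-step-size} guarantees that no intra-cluster collision is skipped: on the open span $(\snapshotCurrent(\particleSet_i),\snapshotNew(\particleSet_i))$ every member is in genuine free flight with its consolidated velocity, and any contact at the endpoint $\snapshotNew(\particleSet_i)$ is resolved implicitly by the sequential-impulse solve, so the new snapshot of each such particle correctly accounts for every intra-cluster contact up to $\snapshotNew(\particleSet_i)$.

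It remains to handle contacts across clusters and to localise the snapshots that may still be optimistic. The predicate $c$ together with the connected-component construction certifies that two particles placed in different clusters cannot collide when advanced in free flight over the window $\timeStepSize$; since $\timeStepSize$ is a uniform upper bound on every cluster's effective time step size and since every segment a particle re-traverses after a roll-back is re-examined by the effective-time-step search of its new cluster over a window that again does not exceed $\timeStepSize$, no collision of any kind is skipped on the time that an active cluster actually traverses. The masking via $\snapshotCollision=\min_{\particleSet_j}\snapshotNew(\particleSet_j)$ from (\ref{equation:cluster-masking:min-collision-time}) then does the accounting: a cluster advances only if $\snapshotCurrent(\particleSet_i)\le \snapshotCollision$, so the cluster realising the old $\timeMinCurrent$ (whose $\snapshotCurrent$ equals $\timeMinCurrent\le \snapshotCollision$) is always active and advances by a positive amount, whence $\timeMinCurrent$ strictly increases and invariant (iii) holds; and every snapshot newly committed at a time not exceeding the updated $\timeMinCurrent$ is either the output of an exact intra-cluster resolution or of a free flight over a window on which, by the previous sentence, no contact was possible, hence valid, giving invariant (i). The only snapshots that can still be overly optimistic guesses sit strictly above $\snapshotCollision$, i.e.\ strictly above the new $\timeMinCurrent$ --- exactly the region the algorithm is permitted to discard by a later consolidation --- so a rollback never reaches below $\timeMinCurrent$, which is the assertion of Theorem~\ref{theorem:simulation-valid}.

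The step I expect to be the real obstacle is invariant (ii): one must show that the consolidation roll-back of a freshly formed cluster never needs to reach below a member particle's own $\snapshotOld$, and reconcile the windows over which $c$ issued its certificate (computed from the pre-consolidation time stamps) with the trajectories actually executed after consolidation. Neither follows from the clustering alone; both follow from the masking, because any two particles that can land in a common cluster later were, at the step where they last advanced, members of clusters whose $\snapshotCurrent$ did not exceed that step's $\snapshotCollision$, and $\snapshotCollision$ is by construction a lower bound for every $\snapshotNew$ committed there. Turning this observation into a clean auxiliary invariant, together with checking that the continuous / space-time intra-cluster contact search behind Definition~\ref{definition:effective-time-step-size} has no false negatives under the assumption that inter-snapshot motion is exactly the prescribed interpolant, is where the genuine work lies; the remaining bookkeeping over the six substeps is routine.
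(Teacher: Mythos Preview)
Your proposal is correct and the key mechanism---the masking rule in (\ref{equation:cluster-masking:min-collision-time})---is the same one the paper leans on. The packaging, however, is genuinely different. The paper argues by induction on the number of particles combined with a contradiction: it assumes some step $n$ commits an invalid snapshot for $p_i$, uses the induction hypothesis to pin the missed collision on the newly added particle $p_{|\particleSet|}$ sitting in a different cluster, and then observes that the masking rule forbids $p_i$'s cluster from advancing while $p_{|\particleSet|}$ lags behind. Your approach is a direct forward induction on the sequence of time steps, carrying three explicit invariants through the six substeps. What this buys you is a more transparent account of \emph{where} the optimism can live (strictly above the post-step $\timeMinCurrent$) and an honest isolation of the one non-routine obligation---reconciling the pre-consolidation certificate windows of $c$ with the post-consolidation trajectories, and showing roll-back never undershoots a particle's own $\snapshotOld$. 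The paper's argument is terser and avoids tracking invariants, but it also glosses over exactly that window-reconciliation issue; your version names it and sketches how the masking inequality closes it. One small point: your invariant (iii) as stated (non-decrease of $\timeMinCurrent$) is what validity needs; the strict increase you invoke mid-argument is really the content of the paper's subsequent lemmas leading to Theorem~\ref{theorem:simulation-terminates}, so you are implicitly folding a piece of the termination argument into the correctness proof.
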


\noindent
As a direct consequence of the termination criterion 
(\ref{equation:minimum:global}), the simulation will deliver the correct
solution as long as the termination snapshot is reached:

\begin{theorem}
If we start a simulation with a given time stamp and apply our
time step algorithm iteratively,  all particles' $\snapshotCurrent$ eventually
will pass any later time stamp $\timeTerminal$ of interest.
\label{theorem:simulation-terminates}
\end{theorem}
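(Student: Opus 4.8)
The plan is to show that each iteration of the time step algorithm increases $\timeMinCurrent$ by a strictly positive amount that is bounded away from zero, so that after finitely many iterations $\timeMinCurrent \geq \timeTerminal$. First I would argue that in every sweep at least one cluster is active. Indeed, the cluster that attains the global minimum $\timeMinCurrent$ in (\ref{equation:minimum:global}) has, after consolidation, $\snapshotCurrent(\particleSet_i) = \timeMinCurrent$, and since $\snapshotNew(\particleSet_i) \geq \snapshotCurrent(\particleSet_i)$ for every cluster, the reduction (\ref{equation:cluster-masking:min-collision-time}) satisfies $\snapshotCollision \geq \timeMinCurrent = \snapshotCurrent(\particleSet_i)$; hence this cluster is not masked out by the criterion $\snapshotCurrent(\particleSet_i) > \snapshotCollision$ and therefore advances.

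Next I would establish that the active clusters advance by at least some fixed $\delta > 0$. The effective time step size of a cluster (Definition~\ref{definition:effective-time-step-size}) is either the crude upper bound $\timeStepSize(\particleSet_i)$ or the time to the first \emph{proper} collision within the cluster. In the first case the step is bounded below by the (positive) value of $\timeStepSize$. In the second case I would use the physical model: particles are rigid bodies with finite velocities and a halo of width $\halo(\particle) > 0$, and the consolidation in substep two guarantees the $\halo$-layers of any two particles that are about to collide properly are not yet overlapping at $\snapshotCurrent$ (otherwise the collision would already have been resolved, or they would be treated as at rest). A finite separation closed at a finite relative speed takes a strictly positive time, so the collision time stamp is strictly later than $\snapshotCurrent$. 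Taking $\delta$ to be the infimum over clusters of these lower bounds — which is positive because velocities are bounded, halos are bounded below, and there are finitely many particles — gives the uniform progress bound.

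Finally I would combine the two facts: after substep six, every active cluster has $\snapshotCurrent \gets \snapshotNew$, so its time stamp has increased by at least $\delta$. The clusters that were masked out did not regress. Therefore the \emph{multiset} of particle time stamps only ever increases, and the particles currently realising $\timeMinCurrent$ are exactly those in the active minimal cluster, which moves forward by at least $\delta$. Hence $\timeMinCurrent$ strictly increases each sweep by at least $\delta$ (possibly after a bounded number of consolidation-only adjustments, which do not decrease it below its previous value), and after at most $\lceil (\timeTerminal - \timeMinCurrent^{(0)})/\delta \rceil$ sweeps we obtain $\timeMinCurrent \geq \timeTerminal$, as required.

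\textbf{Main obstacle.} The delicate point is the strict positivity and, more importantly, the \emph{uniform} lower bound $\delta$ on the progress of the minimal cluster. One has to rule out a Zeno-type scenario in which successive collisions occur at times accumulating to a finite limit below $\timeTerminal$. This requires leaning on the halo layers $\halo(\particle) > 0$ together with a bound on relative velocities (guaranteed here since we neglect long-range forces and interpolate linearly between snapshots, so speeds only change at collisions which conserve momentum), plus the consolidation step ensuring non-overlapping $\halo$-regions at the start of each cluster's step. Making the bound on relative speeds precise — in particular that the sequential-impulse collision resolution of substep five does not amplify the maximal speed in the system — is the part I expect to need the most care.
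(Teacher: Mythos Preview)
Your skeleton matches the paper's: the paper also proves termination via three lemmas --- (i) the particle with the smallest $\snapshotCurrent$ always sits in an active cluster (your Step~1), (ii) the effective time step sizes of active particles are bounded away from zero (your Step~2), and (iii) every particle is eventually updated. The paper's concluding step differs slightly from your direct ``$\timeMinCurrent$ increases by $\delta$'' argument: it instead tracks, for a fixed particle $p_i$, the sum $T=\sum_{p_j}(\snapshotCurrent(p_i)-\snapshotCurrent(p_j))$ over particles behind $p_i$ and shows this is strictly decreasing with decrement bounded away from zero. Your route is arguably cleaner, since once every cluster's $\snapshotNew$ exceeds $\snapshotCurrent$ by at least $\delta$, the masking criterion forces every particle --- active or not --- to sit at or above $\snapshotCollision\geq\timeMinCurrent+\delta$ after the sweep.

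Where your argument diverges substantively is in the justification of the uniform lower bound $\delta$, and here there is a genuine gap. You argue that because the $\halo$-layers of a properly colliding pair do not overlap at $\snapshotCurrent$, ``a finite separation closed at a finite relative speed takes a strictly positive time,'' and then take the infimum over clusters. But non-overlapping halos can be arbitrarily close, so the separation at the start of a step is not bounded below by anything involving $\halo$; your infimum over \emph{all} sweeps could be zero, which is precisely the Zeno scenario you flag. The paper's Lemma avoids this by reasoning about trajectory \emph{segment lengths} between successive collisions rather than about the gap at $\snapshotCurrent$: it argues (under the explicit assumption that kinetic energy is non-increasing, and using friction) that either a segment has length $|l_n|\geq\halo$ or the two particles are in permanent contact and hence excluded from the time-step computation by case~2 of Section~\ref{section:time-step-size-calculation}. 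Combined with a speed bound this yields a uniform time between collisions. Two smaller points: it is the case distinction in the time-step-size calculation, not the consolidation substep, that removes already-overlapping pairs from consideration (you effectively say this in your parenthetical, but misattribute it); and momentum conservation alone does not bound individual speeds --- a light particle struck by a heavy one can speed up --- so you need the paper's non-growing kinetic energy hypothesis (or energy conservation) rather than momentum.
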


\section{Broad phase collision detection to identify clusters}
\label{section:cluster-identification}

The function $c: \particleSet \times \particleSet \mapsto \{\top, \bot\}$
serves as guard for follow-up checks and sorts the particles into clusters.
Clustering is potentially expensive and
runs with quadratic complexity $\mathcal{O}(|\mathbb{P}|^2)$.
Our work hence
uses a combination of two techniques to find $\bot $ entries, i.e.~to narrow
down the number of potential collision pairs.

Let each particle move through space and time without any interaction. 
Each particle then spans a
space-time tube:
Let $B(p,\snapshotCurrent)$ be a bounding shape around a particle
at time $\snapshotCurrent$. 
It covers all of the particle including its $\halo $-halo.
We extrapolate this state to construct a bounding geometry 
$B(p,\snapshotNew)$.
The two bounding shapes can now be connected in space-time, i.e.~each object's
projected trajectory is now covered by a space-time bounding geometry
$B^{\text{(space-time)}}(p)$, and we can intersect these space-time bounding
objects of any particle pair $p_1,p_2$ to determine their $c(p_1,p_2)$ entry.
These entries are overly pessimistic if $B(p,\snapshotCurrent)$ is a poor
approximation, and likely wrong after a few collisions of two particles.
However, we only need it as pre-check and will later only advance up to the
first collision of two particles anyway.

\paragraph{Particle-specific time step sizes.}

No two particles might reside on the same time stamp.
Therefore, it is important to make the space-time
formalism incorporate $\snapshotOld$, too:
We study a space-time bounding geometry from $t \in (\snapshotOld,
\snapshotCurrent) \cup (\snapshotCurrent, \snapshotNew)$ which consists of two
interpolated segments which are concatenated.
One bounding geometry might ``start earlier'' than the other particle's.
Also the span of the geometry in time can differ.
Let

\begin{equation}
 \timeStepSize(p_i)  = \min(
    \alpha (\snapshotCurrent(p_i) - \snapshotOld(p_i)),
    \timeStepSize
   ) 
 \label{equation:cluster-step:damped-particle-time-step-size}
\end{equation}

\noindent
identify a particle's time step size at this point.
Our code implicitly memorises the previously used time step through the two time
stamps $\snapshotCurrent(p_i)$ and $\snapshotOld(p_i)$.
These quantities are stored per particle.
Due to the constant $\alpha>1$,
the time step size that feeds into the
trajectory prediction becomes a creeping average between the maximum global time
step size $\timeStepSize $ and previous time steps.

Follow-up steps have to take into account that the $c$-statements
refer to a particle-specific time span only.
Notably, $c$ only makes statements over  

 \vspace{-0.25cm}
\begin{equation}
 \Big(
   \timeMinCurrent(p_1,p_2)
   ,
   \timeMinCurrent(p_1,p_2)
   + 
   \min(\timeStepSize(p_1), \timeStepSize(p_2))
  \Big), 
  \quad
  \text{where}
  \label{equation:nutshell:max-time-span-between-two-particles}
\end{equation}

\vspace{-0.15cm}
\[
    \timeMinCurrent(p_1,p_2) =
    \min(\snapshotCurrent(p_1),\snapshotCurrent(p_2)),
\]

\noindent
as we construct the bounding geometries over 
$\snapshotOld(p_i)$, $\snapshotCurrent(p_i)$ and $\snapshotNew(p_i)$, where the
latter is determined the particle-specific time step size.

\begin{figure}[htb]
 \begin{center}
  \includegraphics[width=0.5\textwidth]{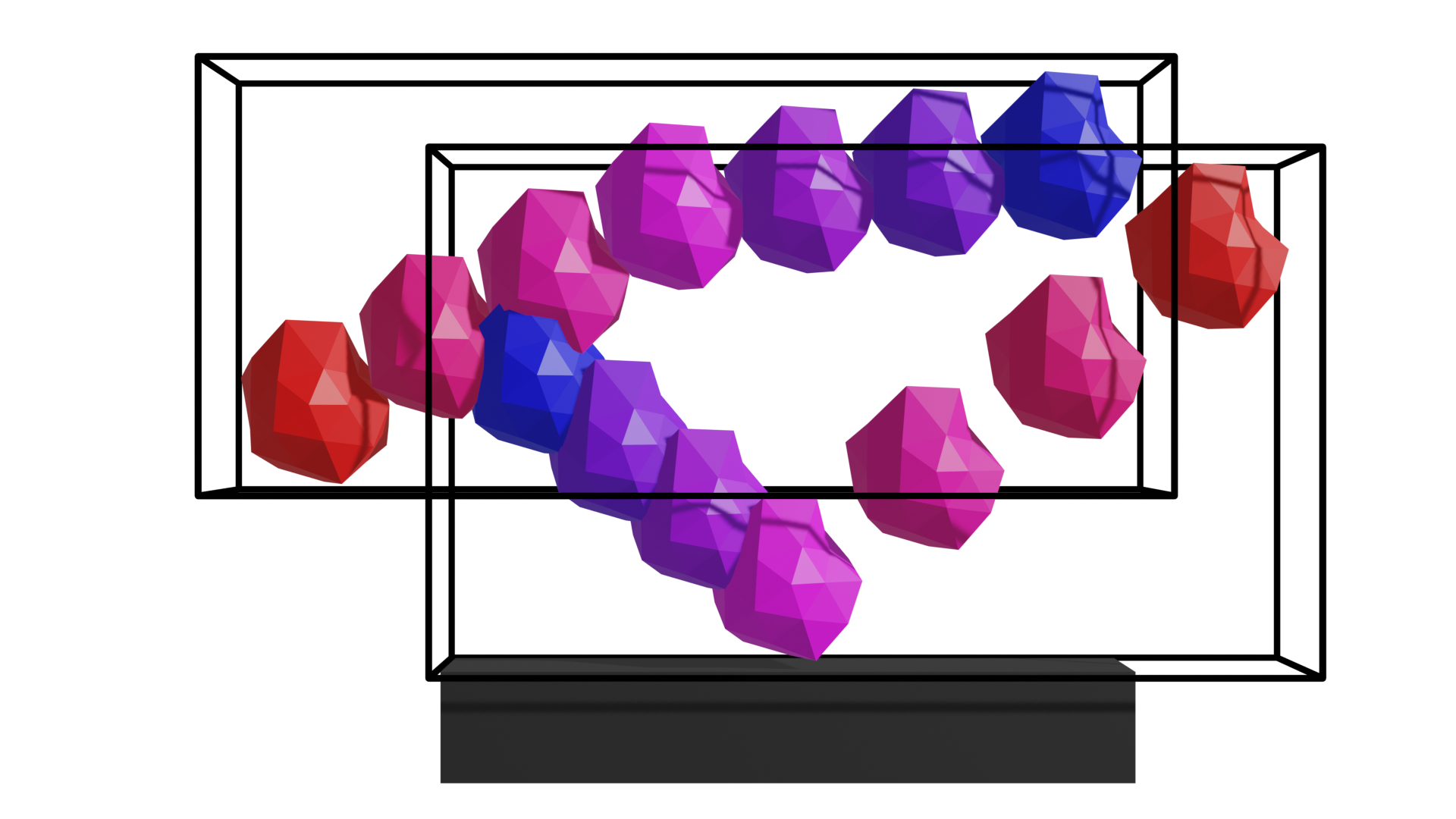}
 \end{center}
 \caption{
  The last state (red), current state (pink) and future state (blue) and the overall bounding box
  for two particles in motion.
  One particle moves left to right following an arc.
  The other particle moves right to left and bounces on the ground.
  The overall bounding boxes of both objects suggests that a collision might occur.
  However, when we consider the space-time volume spanned by the particles it is clear that these particles will pass by each other without interacting.
 }
\label{figure:broad-phase-collision:bounding-box}
\end{figure}

\paragraph{Check 1: Space-time bounding boxes.}

With a well-defined time step size per particle at hand, our first technique to
construct $c$ uses axis-aligned bounding boxes for $B(p)$ and constructs a
space-time bounding box for $B^{\text{(space-time)}}(p)$. These space-time bounding boxes between
any two particles are compared.

As we use a space-time bounding box,
the algorithm can be rewritten as a purely spatial approach.
The space-time check collapses along the time dimension (Figure
\ref{figure:broad-phase-collision:bounding-box}): 
We take the particle positions at $\snapshotOld(p_i)$ and
$\snapshotCurrent(p_i)$ and surround both
including their $\halo$-environment with a bounding box. 
If this super-bounding box of two particles does not overlap, $c(p_1,p_2) =
\bot$.
The handling of $\snapshotCurrent(p_i)$ and
$\snapshotNew(p_i)$ follows the same pattern.

\paragraph{Check 2: Space-time tubes.}

Having two space-time bounding boxes overlapping can result in a lot of
false-positives, i.e.~potential collisions which are not really confirmed later
on.
This happens, for example, if two fast moving particles move parallel to each
other.
The second particle might take the first particle's position after one time
step, while the first particle has moved on meanwhile.
To eliminate such false-positives, a second test 
doublechecks and revises $c(p_1,p_2)=\top$ entries.

This second technique embeds each particle including its $\halo $ into a
bounding sphere $B$.
The space-time object $B^{\text{(space-time)}}(p)$ now resembles a hose with two 
linear segments, as we
model each particle as a sphere moving linearly along 
$\snapshotOld$ to $\snapshotCurrent$ and then from $\snapshotCurrent$
to $\snapshotNew$. 
If the hoses do not
intersect, we reset $c(p_1,p_2)=\bot$.
Otherwise, this second test confirms $c(p_1,p_2)=\top$.

\paragraph{Static geometric objects.}
The treatment of static geometry parts requires special attention throughout the
clustering:
Floors for example are often modelled with few huge triangles
(Figure~\ref{figure:broad-phase-collision:clusters-with-static}).
Without special rules, they ``couple'' all particle assemblies hitting
this wall globally, even though the assemblies of particles could be handled as
separate clusters.
We therefore treat static objects specially:

\begin{definition}
 Static geometric objects are parts of all the clusters which host at
 least one particle which directly interacts with them.
 However, their collisions do not feed into the cluster identification.
\end{definition}

\paragraph{Efficiency, contextualisation and implementation.}
The implementation of the two-step check to compute $c$ has to be fast. Our 
realisation relies on Intel Embree's raytracing kernels \cite{Embree}.
This raytracing software is tuned to handle bounding
box checks and variants thereof efficiently.
For the actual clustering, i.e.~the identification of unconnected subgraphs,
a parallel iterative method is used:
Nodes are repeatedly coloured with the lowest colour id of their neighbours plus
one, until a further iteration produces no changes anymore.

The first pre-check with cubes stands in the tradition of pessimistic
algorithms, and yields information of limited interest if particles move very fast or over a
long time span, while the second step actually takes rapid changes of the
trajectory and fast velocities into account.
It prevents too many collision pairs to be added to $c$, while the first step
handles quasi-stationary setups quickly and eliminates them from further checks.

Stationary objects such as walls are explicitly omitted from the initial
clustering.
They are invisible to the extrapolation. 
Once the clustering algorithm has terminated, we replicate
all static objects, such that each cluster sees ``its own version'' of the static object such as a
wall.
Many particles that hit the same wall in some distance thus tend to end up in
different clusters whereas the wall as proper geometric particle in the
computation of $c$ would merge all of these objects into one big cluster and
remove concurrency.

The efficiency of our overall algorithm depends upon the fact if we
eventually manage to use as big time steps as possible to reduce the total
number of time steps taken.
The time span feeding into the clustering serves as natural constraint for the
final time step size.
A too large span hampers the expressiveness of $c$ unless particles are
in totally free flight.
We hence make is depend on historic data.
Particles which previously have used a very small time step size
will continue to try out rather small time steps, while others might use a time
step size close to $\timeStepSize$.
Through $\alpha $, we can control 
how quickly a previously slow particle
approaches the maximum time step $\timeStepSize$ and hence steer the
``optimism'' in the prediction.
We use $\alpha = 2$.

The global $\timeStepSize $ is in itself a tuning
parameter without physical meaning, which controls the optimism of our time
stepping.
By picking small $\timeStepSize $, we run risk to constrain those particles
overly pessimistically which could travel free of any collision over longer time
spans.
Large $\timeStepSize $ however make the $c$ checks overly pessimistic and lead
to large clusters.

All steps in this first phase yield classic data parallelism (parallel fors),
often combined with a Boolean reduction.
The individual checks (per particle pair, e.g.) are reasonably cheap such that
further inner parallelisation seems to be unreasonable.
However, once identified, clusters form independent units of work.
We can handle each cluster as
an independent task within this step and from hereon.
All tasks and data parallelism are realised via Intel's TBB \cite{TBB}.

\begin{observation}
 Particle sets that form a cluster can be treated independently from each
 other from hereon as separate tasks. 
\end{observation}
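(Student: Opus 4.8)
The plan is to prove the observation by establishing two disjointness facts — disjointness of the \emph{mutable state} attached to each cluster and disjointness of the \emph{physical influence} across clusters — and then to check every remaining substep against one of these two facts. The first fact makes the concurrent execution free of data races; the second makes it physically correct, i.e.~guarantees that running the clusters side by side cannot miss a collision that a single sequential run would have caught.

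First I would fix the data view. Because the clusters are exactly the connected components of the graph pruned by $c$, every non-static particle lies in precisely one component, so the clusters partition $\particleSet$ up to the static geometry; by the definition of static geometric objects each static object is replicated, so every cluster that touches one owns a private copy. Hence all per-particle snapshots $\snapshotOld,\snapshotCurrent,\snapshotNew$ and dynamic quantities $\particlePosition,\particleVelocity,\particleRotation,\particleAngularVelocity$ that are read and written during consolidation, the effective-time-step-size computation, the collision/update substep and the roll-over live in pairwise disjoint memory. The only objects shared across clusters in the later substeps are $c$ itself, which is read-only from clustering onwards, and the scalar reduction in \eqref{equation:cluster-masking:min-collision-time}, which consumes one number per cluster; neither induces a write conflict.

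Second I would argue that no cross-cluster collision can be missed. Take $p_1\in\particleSet_i$ and $p_2\in\particleSet_j$ with $i\neq j$. They lie in different components, so $c(p_1,p_2)=\bot$, which by soundness of the two broad-phase checks means the space-time bounding geometries of $p_1$ and $p_2$ are disjoint over the span \eqref{equation:nutshell:max-time-span-between-two-particles} \emph{under the free-flight extrapolation}. The subtlety is that real trajectories deviate from free flight once a particle collides; here I would invoke Definition~\ref{definition:effective-time-step-size}: a cluster advances only up to its own first internal collision, so over $[\timeMinCurrent(\particleSet_i),\snapshotNew(\particleSet_i)]$ every particle of $\particleSet_i$ is still in free flight, and likewise for $\particleSet_j$; on the interval during which both clusters actually move, both particles therefore genuinely follow their extrapolated trajectories, stay inside their disjoint tubes, and cannot collide. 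Consequently the momentum exchange, which couples only colliding particles, never reaches across clusters, and the update of $\particleSet_i$ depends solely on intra-cluster contacts. Combined with the first fact, each of the five remaining substeps either touches the disjoint data of a single cluster (consolidation, effective-time-step-size calculation, collision/update, roll-over) or communicates only through the scalar reduction (cluster masking, Definition~\ref{definition:active-cluster}), which is what the observation claims.

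The main obstacle I expect is the free-flight argument in the third paragraph — specifically, checking that the span over which a $\bot$ certificate is valid dominates the effective time step size that the cluster actually takes. This hinges on the fact that the time step size feeding the clustering, \eqref{equation:cluster-step:damped-particle-time-step-size} and \eqref{equation:nutshell:max-time-span-between-two-particles}, is an upper bound for the cluster's effective time step size, and that cluster masking only shortens advances further. If a cluster could ever advance past the horizon on which its cross-cluster $\bot$ certificates were issued, a collision could slip through and independence would fail; everything else is bookkeeping about which substep writes which particle.
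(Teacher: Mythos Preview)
The paper does not give a separate proof of this observation; it is stated as an immediate consequence of the construction in Section~\ref{section:cluster-identification}. Clusters are the connected components of the pruned graph~$c$, so distinct clusters share no non-static particles; static geometry is replicated per cluster; and the only cross-cluster communication in later substeps is the scalar reduction~\eqref{equation:cluster-masking:min-collision-time}. That data-disjointness is the entirety of the paper's justification, and your first fact captures it precisely.

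Where you go beyond the paper is in your second fact, the physical-influence argument. You try to show that no cross-cluster collision can be missed within the current sweep because both particles are still in free flight on the interval where their clusters actually advance. This is a stronger statement than the observation makes, and it is \emph{not} how the paper establishes correctness. The scheme is explicitly optimistic: a cluster may well advance past the horizon on which its $\bot$~certificates were issued (exactly the scenario you flag in your last paragraph), and a cross-cluster collision can therefore be overlooked in a given sweep. The paper repairs such errors later, via the consolidation rollback and the masking constraint~\eqref{equation:cluster-masking:min-collision-time}, and proves this repair is always possible in Theorem~\ref{theorem:simulation-valid}. So the obstacle you identify is genuine, but it is not an obstacle to the observation --- it is an obstacle to per-sweep physical correctness, which the observation does not claim. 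The observation is about task independence in the concurrency sense (disjoint writes, hence no interference), and for that your first fact alone suffices; your second fact, and the free-flight bookkeeping needed to close it, belong to the argument for Theorem~\ref{theorem:simulation-valid} rather than here.
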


\section{Cluster setup and consolidation}
\label{section:cluster-consolidation}

As we want to advance all the particles within a cluster in-sync, we 
have to ensure that they all start from the same snapshot.
Our clusters are re-determined in each and every sweep.
Consequently, a common time stamp $\snapshotCurrent$ for all particles of a cluster
is not guaranteed once the cluster identification in
Section~\ref{section:cluster-identification} terminates.
We hence determine $\timeMinCurrent (\particleSet _i)$ from
(\ref{equation:minimum:global}) and interpolate between the particles'
snapshots to reconstruct a joint, synchronised state.
From hereon, $\timeMinCurrent (\particleSet _i) = \snapshotCurrent (p_i) \
\forall p_i \in \particleSet _i$.

Further to the consolidation of the particle snapshot, we also assign each
cluster $\particleSet _i$ a unique
\begin{equation}
 \timeStepSize (\particleSet _i) = \min _{p \in \particleSet _i} \timeStepSize
 (p).
 \label{equation:cluster-consolidation:cluster-time-step}
\end{equation}

\noindent
This is a quantity that is used in follow-up steps.
It is a restricted value over the cluster fed by
historic data of the individual particles.

\paragraph{Efficiency, contextualisation and implementation.}

Per cluster, we run through all particles, but each particle is handled
independently.
It yields a prime example for data parallelism within the cluster task, where we
reduce two scalar quantities, i.e.~time step size and time stamps.

%

\begin{observation}
 Whenever particles change cluster membership, they typically are \emph{rolled
 back} in time.
\end{observation}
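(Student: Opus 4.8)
The plan is to fix a particle $\particle$ and compare the cluster $C_{\mathrm{old}}$ that contains it at the start of one time step (one ``sweep'' of the blueprint in Section~\ref{section:nutshell}) with the cluster $C_{\mathrm{new}}$ that contains it at the start of the next sweep, tracking where $\snapshotCurrent(\particle)$ can possibly change. There are only two such places. It is \emph{advanced} only in the collision/update and snapshot-roll-over substeps, and only if $C_{\mathrm{old}}$ was active, in which case \emph{all} particles of $C_{\mathrm{old}}$ move in sync to the common value $\timeMinCurrent(C_{\mathrm{old}}) + \timeStepSize(C_{\mathrm{old}})$. It is \emph{decreased} only by the consolidation substep of Section~\ref{section:cluster-consolidation}, which rolls every particle of a cluster back to $\timeMinCurrent(\particleSet_i) = \min_{q \in \particleSet_i} \snapshotCurrent(q)$ by interpolating between $\snapshotOld$ and $\snapshotCurrent$. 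Hence ``$\particle$ is rolled back during the new sweep'' is equivalent to $\particle$ \emph{not} realising the minimum current time stamp over $C_{\mathrm{new}}$ at the moment consolidation runs.

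Next I would decompose the membership change $C_{\mathrm{new}} \neq C_{\mathrm{old}}$ according to where the particles of $C_{\mathrm{new}}$ come from. If all of them descend from one and the same old cluster $D$ (a pure split, $C_{\mathrm{new}} \subsetneq D$), then $D$ had been consolidated to a single $\snapshotCurrent$ and thereafter either stood still (masked) or advanced in sync, so every particle now in $C_{\mathrm{new}}$ already carries one common time stamp; then $\timeMinCurrent(C_{\mathrm{new}}) = \snapshotCurrent(\particle)$ and nobody is rolled back. This is precisely the atypical situation. If instead $C_{\mathrm{new}}$ pools particles originating from two or more distinct old clusters $C^{(1)}, \dots, C^{(m)}$ (a merger, possibly mixed with a split, in which case one restricts attention to the $C^{(i)}$ that actually contribute particles), I would use that up to the end of the previous sweep these clusters were treated \emph{independently}: each advanced with its own effective time step size, and cluster masking may have frozen some of them entirely. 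One then argues that distinct clusters carry distinct $\snapshotCurrent$ values except in degenerate, non-generic configurations, so $\timeMinCurrent(C_{\mathrm{new}}) = \min_i \snapshotCurrent(C^{(i)})$ lies strictly below $\snapshotCurrent(C^{(j)})$ for every $j$ but the unique minimiser $j^\star$. Consolidation then rolls back every particle of $C_{\mathrm{new}}$ inherited from a non-minimal $C^{(j)}$; in particular $\particle$ is rolled back unless it happened to sit in $C^{(j^\star)}$.

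Collecting the cases yields the observation in the sense intended by ``typically'': a particle that changes cluster membership is rolled back, except (i) when its cluster merely splits, or (ii) when $\particle$ came from the lagging cluster $C^{(j^\star)}$ of a merger so that $\snapshotCurrent(\particle)$ already equals the new minimum. The main obstacle is exactly that word ``typically'': the paper carries no probability measure over cluster configurations, so the honest route is to name the exceptional set explicitly, as above, and to point out that it is non-generic under the free-flight extrapolation and the independently chosen per-cluster time step sizes, rather than to claim a clean ``always''. A minor bookkeeping point is the mixed split/merge case, handled as indicated by discarding the purely-split contributions and rerunning the merger argument on the sub-collection of contributing old clusters.
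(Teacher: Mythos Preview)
Your analysis is correct and considerably more thorough than what the paper actually offers. In the paper this statement is an \emph{observation}, not a theorem: there is no proof. The only accompanying text is two sentences noting that the consolidation substep ``corrects overly optimistic time step choices'' and clarifying that the rollback is an interpolation between $\snapshotOld$ and $\snapshotCurrent$ rather than a return to a stored previous state. The paper simply trusts the reader to see, from the description of consolidation, why a particle entering a cluster whose other members lag behind will be pulled back to $\timeMinCurrent(\particleSet_i)$.

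Your split/merge decomposition, the explicit identification of the two exceptional cases (pure splits; the particle coming from the lagging cluster $C^{(j^\star)}$ in a merger), and your honest handling of the word ``typically'' all go well beyond the paper. What you gain is a precise characterisation of exactly when rollback does \emph{not} occur, which the paper never spells out. What the paper gains by staying informal is brevity: since the observation is only meant to flag a qualitative feature of the algorithm rather than feed into a later proof, a full case analysis is not strictly needed. Your treatment would be appropriate if the observation were upgraded to a lemma; as it stands, you have over-proved it, which is not a defect.
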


\noindent
Our cluster consolidation substep corrects overly optimistic time step choices.
However, this is not a rollback in the sense that we return to a previous time
stamp.
Instead, we interpolate, i.e.~we only move slightly backwards in time.

As we reduce the time step size of individual particles, we potentially sparsify
$c$ further.
This can lead to situations where a cluster decomposes into several subclusters
which in turn increases the concurrency level.
We did not find any performance gain from a re-clustering and
further decomposition of cluster tasks, and therefore do not exploit this
additional increase of concurrency.

We apply a physical simplification as
we assume that we can interpolate.
Furthermore, the interpolation relies on the assumption that a particle
does not experience any collisions between $\snapshotOld$ and $\snapshotCurrent$
(cmp.~Theorem~\ref{theorem:simulation-valid}).

\section{Time step size calculation}
\label{section:time-step-size-calculation}

The time step calculation identifies the
largest time step size with which all particles of that  cluster can safely
advance without missing out on further collisions.
Let $\snapshotCollision(p_1,p_2) $ denote the time when two particles'
halos intersect for the first time. It is subject to

\vspace{-0.25cm}
\begin{eqnarray}
 \timeMinCurrent (\particleSet _i) \ll \snapshotCollision(p_1,p_2) 
   & \leq &
   \timeMaxCurrent
   \quad
   \forall p_1, p_2 \in \particleSet _i,
   \label{equation:time-step-size-calculation}
   \\
 \text{where} \qquad
   \timeMaxCurrent (\particleSet _i) & = & \timeMinCurrent (\particleSet _i) + 
   \timeStepSize (\particleSet _i).
   \nonumber
\end{eqnarray}

\noindent
The following case distinctions feed into the calculation of
$\snapshotCollision(p_1,p_2)$: 

\begin{enumerate}
  \item If two particles move away from each other,
  $\snapshotCollision(p_1,p_2) = \timeMaxCurrent (\particleSet _i)$. Particles
  that separate at $\snapshotCurrent(\particleSet _i)$ do not constrain the
  collision search.
  \item If two particles' $\halo$-area overlaps at
  $\snapshotCurrent(\particleSet _i)$, 
  $\snapshotCollision(p_1,p_2) = \timeMaxCurrent (\particleSet _i)$. Particles
  that stick together at the begin of a time stamp (they rest upon each other
  or move parallel, e.g.) do not constrain the time step size further.
  \item For all other particles, we compute the collision time stamp.
\end{enumerate}

\noindent
With (\ref{equation:time-step-size-calculation}), we can update
each cluster's time step or new time stamp
respectively:

\begin{eqnarray*}
 \snapshotNew (\mathbb{P}_i)
 & = &
 \min _{p_1,p_2 \in \mathbb{P}_i} \snapshotCollision(p_1,p_2)
 \ < \ 
 \Delta t,
 \qquad \text{and therefore} \\
 \Delta t (\particleSet _i) & \gets & \snapshotCollision
(\mathbb{P}_i) - \snapshotCurrent (\particleSet _i).
\end{eqnarray*}


\paragraph{Geometric model.}

Let $ \triangleSet _h(p) $ describe the set of triangles describing a particle
$p$.
$\tria ^{\halo}$ denotes the volumetric extension of its corresponding triangle
$ \tria \in \triangleSet _h(p,t) $, i.e.~the triangle plus its $\halo $-environment.

\[
 \triangleSet _h ^{\halo} (p,t)
 = 
 \bigcup _{\tria \in \triangleSet _h(p) } \tria ^{\epsilon}
\]

\noindent
yields a shell object, i.e.~a volumetric geometric object into which the surface
of the particle is embedded.


\begin{definition}
 A \emph{contact point} is a tuple of a position in space $x$, a time stamp $t$,
 a normal vector $n$ and two triangles $\tria _1 \in p_1$ and $\tria _1 \in p_2$
 from two different particles $p_1 \not= p_2$.
 The following properties hold:
 \begin{enumerate}
   \item The distance between $x$ and $\tria _1$ or $\tria _2$ is at most
   $\halo$.
   \item Two triangles have at most one
   contact point per time stamp.
   \item $n$ is the shortest distance vector to one of the nearest triangles.
 \end{enumerate}
\end{definition}

\noindent
Two particles are in contact at a certain time, if there is at least one contact
point between their triangles.
We note that two triangulated particles can yield redundant contact points if
any contact lies on a shared edge or vertex and thus arises from multiple
triangle pairs.
Contact points furthermore are not unique for parallel triangles---they
can be located anywhere on a submanifold within the $\halo$-overlap.
Our discussion from hereon assumes that contact points are filtered,
i.e.~contact points that are close in space are fused into one contact point to
avoid spatial replicas and are centred within parallel triangles.

\paragraph{Collision time detection.}

Contacts between two triangles either are placed on the shortest
distance between a vertex of one triangle and the other triangle's surface, or are located in-between two triangle edges.
Efficient code blocks to compute them are known
\cite{multires2022,Krestenitis:17:FastDEM}.
We have to transfer such spatial algorithms into
the space-time domain over $(\snapshotCurrent,
\timeMaxCurrent)(\particleSet _i)$ with a new (minimal)
$\timeStepSize (\particleSet _i)$ in-between to be determined.
For this, we employ an iterative algorithm.

\begin{definition}
 We compute an effective time step size
 (cmp.~Definition~\ref{definition:effective-time-step-size}) from $\timeStepSize
 (\particleSet _i)$ by reducing this time span until any further reduction of
 the time step size would yield no collisions within the cluster. This process
 is called \emph{narrowing}.
 \label{definition:narrowing}
\end{definition}

\noindent
We realise the narrowing iteratively by repeatedly reducing the time span in
which we search for collisions.
Per iteration per triangle pair, we treat vertex-triangle and edge-edge
comparisons separately:
For each vertex-triangle pair, we compute the position of the vertex
relative to the triangle at the begin and end of the maximum admissible time step:
We use a relative coordinate system which moves with the triangle.
After that, the algorithm takes the space-time line segment connecting
the two relative positions.
We can now analytically compute the minimum distance, as the minimum distance is
either observed at the start or the end point, or arises from
the time stamp when the line intersects the (relative) plane in which triangle
lives---if it exists.
For each of these two or three, respectively, situations we can use Barycentric
coordinates to construct the actual contact point plus its time stamp.
It is an approximation as we neglect rotation.
For the edge-edge comparisons, we lack an analytic formula and hence sample
the initial distance, the final distance and the distance in-between.
Both types of distances might identify a contact with a certain time stamp.
As long as the minimum of these contact time stamps remains smaller than the 
time span of interest, we narrow the span just to include this minimum and rerun
the tests.

\paragraph{Accelerated multiscale algorithm.}

To accelerate this narrowing, we use an iterative
multi-resolution scheme exploiting the notion of surrogate geometries
\cite{multires2022}:

\begin{definition}
 Let a \emph{surrogate} representation of a triangle set $\triangleSet _h
 ^{\halo} (p)$ be another triangle set $\triangleSet _{2h} ^{\halo} (p)$ with 
 \begin{eqnarray}
   \triangleSet _h ^{\halo} (p) & \subseteq & \triangleSet _{2h} ^{\halo} (p)
   \ \text{with} \ 
   \forall \tria _h \in \triangleSet _h (p): \ \exists ! 
   \tria _{2h} \in \triangleSet _{2h} (p)
   \ \text{s.th.} \ 
   \tria _{h}^\halo \subseteq \tria _{2h}^\halo,
   \label{equation:surrogate:conservative}
   \ \text{and} \\
   | \triangleSet _h ^{\halo} (p) | & > & | \triangleSet _{2h} ^{\halo}
   (p) |.
   \label{equation:surrogate:efficient}
 \end{eqnarray}
 
 \noindent
 The $\tria _{2h}$ in (\ref{equation:surrogate:conservative}) is the
 \emph{parent} of its corresponding $\tria _h$.
\end{definition}

\noindent
Per particle, we hold a sequence (levels) of surrogates
$\triangleSet _h ^{\halo} (p), \triangleSet _{2h} ^{\halo} (p)$, $\triangleSet _{4h} ^{\halo}
(p), \ldots$.
Each surrogate is conservative with respective to the next finer surrogate due
to (\ref{equation:surrogate:conservative}):
If two surrogates of level $k$ do not collide,
their corresponding surrogates of level $k/2$ do not collide either.
Each surrogate is also effective with respect to the next finer surrogate due to
(\ref{equation:surrogate:efficient}):
It is cheaper to handle a surrogate of level $k$ compared to the
surrogate of level $k/2$, as the former features fewer triangles. 

As (\ref{equation:surrogate:conservative}) formalises a surrogate tree, 
we can introduce an efficient multilevel iterative variant
expanding upon our vanilla code version:

\begin{enumerate}
  \item For two particles $p_1$, $p_2$ of interest, we start with their coarsest
  surrogate representations. They define the search sets $\searchSet (p_1)$ and
  $\searchSet (p_2)$.
  \label{algorithm:time-step-size-calculation:kick-off}
  \item Per pair $(\tria _1, \tria _2) \in
  \searchSet (p_1) \times \searchSet (p_2)$, we approximate the minimum time of
  contact. $\tria_1$ and $\tria_2$ are removed from their search sets once we
  have evaluated all the pairs they are involved in.
  \label{algorithm:time-step-size-calculation:iteration}
  \item If a new contact times exists and it falls into our search span defined
  by $\snapshotCurrent (\particleSet _i)$ and $\snapshotCollision (\particleSet
  _i)$,
  \label{algorithm:time-step-size-calculation:add-data}
  \begin{itemize}
    \item we update $\snapshotCollision (\particleSet _i)$ if they reduce this
    quantity further and if $\tria _1$ and $\tria _2$ are both taken from
    the fine mesh triangulations, i.e.~$(\tria _1, \tria _2) \in \triangleSet
    _{kh} ^{\halo} (p_1) \times \triangleSet _{kh} ^{\halo} (p_2)$;
    \item otherwise, we throw away these contacts and the contact times and
    instead insert all children of $\tria _1$ and $\tria _2$ into $\searchSet (p_1)$ or
    $\searchSet (p_2)$, respectively.
  \end{itemize}
  \item If $\searchSet (p_1) \not= \emptyset \wedge \searchSet (p_2) \not=
  \emptyset $, we continue with
  Step~\ref{algorithm:time-step-size-calculation:iteration}.
\end{enumerate}

\paragraph{Efficiency, contextualisation and implementation.}

Our particle-to-particle comparison only studies particle combinations with
$c(p_1,p_2) = \top$.
Even though two particles reside within one cluster, we might have come to the
conclusion earlier that these particles cannot collide.
This information is used here.

Within each particle-particle comparison, the multiscale variant terminates the
exploration of branches of the surrogate trees early.
Even triangle pairs that would result in a collision later within the search 
time span are omitted, once we know that another triangle pair yields a
collision earlier on.
We reduce the number of triangles to study within the tree, but also decrease
$\snapshotCollision (\particleSet _i)$ monotonously.
Therefore, it makes sense to study all first levels of all particles within a
cluster first, 
before we switch to the next finer resolution for further geometry checks.
For this later unfolding, we might already have reduced 
$\snapshotCollision (\particleSet _i)$ and hence study fewer triangles from the
second resolution level.

We assume that triangles move linearly through space.
This is not valid once $\particleRotation (p) \not=0 $.
Our current implementation ignores the possibility that a section of 
an object rotates fully through another one.
For small admissible time step sizes and slow rotations, this is reasonable.
In an application with finer interacting geometry (for example, fine teeth on interlocking gears)
it may be necessary to add extra checks here.
An alternative option is it to increase the $\halo$ region of each triangle and
surrogate to ensure the real volume swept by a rotation triangle is covered.
Notably, we can make $\halo $ the larger the coarser the surrogate.
While this would introduce more false-positive collisions early throughout the
multiscale algorithm, it would ensure there are no false-negatives.
More sophisticated geometric methods exist \cite{CCDBench} which might
be able to provide additional efficiency and correctness guarantees for such
challenging cases.

Experimental evidence suggests that floating point round-off errors lead to
situations, where the algorithm yields no contact points
at all, as we just slightly underestimate a valid $\timeStepSize (\particleSet
_i)$.
Further to that, we will need a certain non-zero overlap of the particle
$\halo$s in the subsequent step.
Therefore, we make our algorithm artificially increase the time step size
slightly after each iteration
This is admissible, as we work with weakly compressible objects subject to an
$\halo $-layer anyway.
Inspired by static code analysis in Definition~\ref{definition:narrowing}, this
slight increase of the time step size should be labelled \emph{widening}.

\begin{observation}
  Even though we search for a minimum contact time per cluster, multiple
  contacts can arise.
  \label{observation:time-step-size-calculation:contact-approximation}
\end{observation}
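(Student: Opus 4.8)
The plan is to observe that the narrowing in Definition~\ref{definition:narrowing} returns a single scalar — the effective time step size, equivalently the time stamp $\snapshotNew(\particleSet_i) = \min_{p_1,p_2 \in \particleSet_i}\snapshotCollision(p_1,p_2)$ — obtained as a minimum over a finite set of candidate contact times, and that nothing in this construction forces the minimiser, i.e.~the family of contact points actually realised at that time, to be a singleton. It therefore suffices to exhibit mechanisms, each self-contained, that populate the set of simultaneous contacts with at least two elements.

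First I would unfold $\snapshotCollision(\particleSet_i)$: it is the minimum, over all particle pairs with $c(p_1,p_2)=\top$ and over all surviving triangle pairs $(\tria_1,\tria_2)$ in their search sets, of the individual approximated contact times. A minimum over finitely many reals says nothing about the uniqueness of its argmin. I would then list three ways the argmin can be attained more than once, in increasing order of robustness. \emph{(i) Mesh-induced redundancy:} the excerpt already notes that a contact lying on an edge or vertex shared by several triangles of one particle is produced by each incident triangle pair at the \emph{same} time stamp, and that a parallel-triangle contact lives on a whole submanifold of the $\halo$-overlap. \emph{(ii) Symmetry across particle pairs:} take a minimal symmetric cluster — e.g.~a particle released with zero angular velocity exactly on the bisector of a symmetric wedge formed by two congruent particles, or two congruent dominoes whose feet strike a common floor triangle at the same instant — so that, by the symmetry of the linear (rotation-free) extrapolation used in the collision-time detection, two \emph{distinct} $\snapshotCollision(p_1,p_2)$ values coincide and hence two different particle pairs reach $\halo$-overlap simultaneously. \emph{(iii) Widening:} after narrowing, the algorithm deliberately enlarges the returned time span by the small \emph{widening} increment, so the state reported at $\snapshotNew(\particleSet_i)$ contains not only the first contact but every contact whose true time falls within that increment; in any non-trivial cluster there is generically more than one such contact, so several pairs of $\halo$-regions overlap at once.

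The only delicate point — and hence the main obstacle — is to be precise about which notion of ``multiple'' is meant: multiplicity of triangle pairs, of contact points, or of particle pairs. Mechanism (i) gives multiplicity at the triangle-pair level but is largely neutralised by the filtering/fusing convention stated just before this observation; mechanisms (ii) and (iii) survive that filtering and give genuine multiplicity of particle pairs and of (filtered) contact points, with (iii) requiring no fine-tuning and (ii) holding on a non-empty, albeit measure-zero, set of initial states. Establishing either (ii) or (iii) therefore already proves the claim. I would close by remarking — as the paper intends — that this multiplicity is not pathological: it is exactly the situation the subsequent collision-and-update substep with sequential impulses is designed to handle, and it does not endanger the validity statement of Theorem~\ref{theorem:simulation-valid}, since all such contacts occur at or before $\snapshotNew(\particleSet_i)$ and are resolved within the same time step.
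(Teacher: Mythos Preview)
Your argument is correct, but note that the paper does not actually prove this statement: it is an \emph{Observation}, not a lemma, and is simply asserted immediately after the paragraph introducing \emph{widening}. The paper's implicit justification is therefore exactly your mechanism~(iii): because the narrowed time span is deliberately enlarged again before the contacts are collected, any contact whose true time stamp lies within that widening increment shows up at $\snapshotNew(\particleSet_i)$, and there is no reason to expect only one such contact. Your mechanisms~(i) and~(ii) are valid additional sources of multiplicity that the paper alludes to elsewhere (the redundancy discussion preceding the contact-point filtering, and the symmetric setups in the experiments), but they are not part of the paper's reasoning for this particular observation. In short, you have supplied a more systematic and more complete justification than the paper itself; a one-line appeal to widening would already match what the authors intend here.
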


\noindent 
The algorithm's underlying sweeps over particle pairs and triangle pairs
translates directly into data parallelism. 
However, a realisation with parallel for loops over the surrogate levels is
disadvantageous as it introduces synchronisation points after each
level.
It also suffers from non-constant compute cost per compute step.
We hence advocate for a strict dynamic task tree formalism, processing
triangle-triangle pairs within the surrogate hierarchy independently.
The unfolding of this task tree is constrained by a global (atomic) contact
time, i.e.~we might unfold too many tree nodes.

\section{Particle interaction}
\label{section:time-stepping}


Our physical model approximates both the instantaneous collisions between
particles both temporarily and spatially
(Observation~\ref{observation:time-step-size-calculation:contact-approximation}).
Any overlap of the $\halo $-layers is considered an immediate contact
\cite{kane1985dynamics} and feeds into a hard ``sphere'' model with a perfectly
ellastic collision, which we can damp due to material laws.
Let $\mathbb{C}$ denote all contact points within a cluster, and let $p_1(c),
p_2(c) $ identify the two particles associated with a contact
point $c \in \mathbb{C}$.
We end up with a constrained equation system for the exchanged impulse 

\begin{eqnarray*}
 \forall c \in \mathbb{C}: \qquad 
 m(p_1(c))a + m(p_2(c))a
 & = & 0,
  \\
 \text{s.t. }  
 \forall p_1, p_2 \in \particleSet_i, p_1 \not= p_2: \quad
 \quad v(p_1(c),n(c)) - v(p_2(c),n(c)) & \geq & 0,
\end{eqnarray*}

\noindent
where $n(c)$ is the normal of the contact point $c$, $m$ denotes masses of the
involved particles, $a$ the acceleration and $v$ the particles' velocity.
We solve for $a$.   
The arising $v$s depend on the accelerations from all the contact points plus
the incoming velocity as well as the time step size
chosen.

The constraint phrased over the outgoing velocity makes the scheme implicit.
Due to it, particles stop approaching immediately as soon as they get in
contact.
As our time step size is fixed at this
point, we can disregard the mesh, argue solely about the set of identified contact points, and consider each particle to
be represented by a current and future transformation resulting from this set of
contact points.

This is a non-trivial, non-linear equilibrium equation system due to the
constraints.
We apply a Picard iteration to solve it, where we update the individual
contributions per contact point one by one.
The constraints are weakly enforced via a penalty term, and we
damp the iterations' updates using linear combinations of previous and newly
computed solution guesses.
In line with other codes, we assume that the Picard iterations converge
\cite{Schmitt05_17}, which is reasonable as we minimise
$\snapshotCollision$ such that only few contact points remain.

%


Since the constraints are phrased over relative velocities and not on the
lengths of the contact normals, we can end up in a situation where particles
gradually drift towards each other over several time steps, even though
their relative velocity is canceled each time they come into contact.
They might be pressed closer and closer together by other particles in the
system.
Eventually, they might even start to penetrate.
This happens notably once large stacks of objects settle into a pile of
particles on the ground.

Therefore, we add an additional force term to slowly
separate the particles.
This artificial term does not contribute to the resulting velocities.
Further to that, we use a simple Coulomb friction model where an impulse is
applied along a tangent to the contact normal to reduce the sliding velocity.
This impulse is bounded by the impulse along the normal multiplied by the
coefficient of friction between the surfaces.

\paragraph{Efficiency, contextualisation and implementation.}

An alternative, popular force model replaces the Dirac force distribution with a
smooth force function.
Its support is bounded by $\halo$, i.e.~it is zero if the distance between two 
objects or triangles exceeds $2\halo$, i.e.~their contact point $c$ carries
$|n(c)| \geq \halo$.
Finally, the contact force $\lim _{|n(c)| \mapsto 0} f(|n(c)|) = \infty$, as the
distance becomes smaller and smaller.
The smooth approximation of the force function means that particles become
weakly compressible, and their interaction resembles the compression of a
spring.
We employ this model in our previous work \cite{multires2022}.

The present approach also can be read as an approximation of the force.
However, instead of embedding a smooth force function into the $\halo$-area, we employ a
heavyside function which jumps to a constant value at the point that we identify
as collision time stamp.
It is the magnitude of this function which dynamically adapts to the
state of all particles.
It is hence an semi-implicit formalism of particle collisions:
the position updates continue to be handled explicitly, the arising forces aka
accelerations are integrated implicitly.
As we solve the arising equation system (and add an artificial spring term), we
have a conservative physical model which ensure that we never run into
unphysical solutions.
Overall, this part of the scheme is similar to an event-driven
simulation~\cite{Luding:2008:DEMIntro}.

The whole contact resolution is non-trivial to parallelise effectively
without fundamental changes to the algorithm.
Locks or atomics for example allow us to update all particles involved in
contacts in parallel within the Picard iteration.
Our experiments with these techniques however showed a drastic decreases in performance.
We instead colour the per-cluster graph $c$ and update those particle's momentum
in parallel which do not simultaneously feed into other updated particles. 
That is, if there is a collision between $p_1$ and $p_2$ and $p_2$ and $p_3$,
$p_1$ and $p_3$ are updated in one rush, while $p_2$ is updated afterwards.
This bounds the concurrency of the implementation yet outperformed alternative
trials.

While we note that clusters can still be updated in parallel in this phase, we
note that the cluster masking in
(\ref{equation:cluster-masking:min-collision-time}) effectively disables most
clusters from  processing.
The concurrency level resulting from multiple cluster is limited.

\section{Efficacy and correctness}
\label{section:correctness}

The proof of Theorem \ref{theorem:simulation-valid} is technical yet brief
with a combination of an induction over the number of particles plus a
contradiction.
As we terminate our algorithm as soon as the global time stamp $\snapshotCurrent$
exceeds the terminal time, the theorem states that the simulation outcome is
correct.

\begin{proof}
 The correctness is trivially true for one particle.

 Let all state transitions be globally enumerated.
 We use the counter $n$ for the enumeration. 
 Assume that the tuple
 $(\snapshotOld _n, \snapshotCurrent _n)(p_i)$ produced for particle $p_i$ in
 step $n$ is invalid:
 it should not have been made, as we dropped an old solution in this step
 to which we should have rolled back.
 This formalism implies that step $n$ triggered the state transition
 \[
   (\snapshotOld _{n-1}, \snapshotCurrent _{n-1})(p_i)
   \mapsto
   (\snapshotOld _n, \snapshotCurrent _n)(p_i)
   \quad
   \text{with}
   \
   \snapshotCurrent _{n-1}(p_i) = \snapshotOld _n(p_i).
 \]
  
 \noindent
 By induction, we know that $1 \leq i \leq |\particleSet|-1$ and that it has
 been particle $p_{|\particleSet|}$ which would have crashed into 
 $p_i$ at a time $\snapshotCollision < \snapshotOld _n$.
 This is the collision we missed.

 $p_i$ and $p_{|\particleSet|}$ have been in different clusters in step $n$.
 Otherwise, they would have advanced in sync.
 Therefore, there is a step 
 $m<n$ in which $p_{|\particleSet|}$ has updated its position the last
 time.
 \vspace{-0.15cm}
 \[
 \snapshotCurrent _m (p_{|\particleSet|}) = \snapshotCurrent _{m+1} (p_{|\particleSet|}) = \ldots =
 \snapshotCurrent _{n-1} (p_{|\particleSet|}) = \snapshotCurrent _{n} (p_{|\particleSet|})
 \]
 
 \noindent
 from thereon, i.e.~we haven't updated this particle anymore since then.
 According to our assumption, we learn about this missing collision after step
 $n$, but the actual collision happened at
 $\snapshotCurrent _{m} (p_{|\particleSet|}) <
 \snapshotCollision < \snapshotOld _n (p_i)$.
 Now we cannot roll back anymore.
 This can happens if and only if particle $p_{|\particleSet|}$ lacks behind.   
 
 $p_i$'s transition in step $n$ is allowed if and only if the other particles in
 their clusters have not been left behind.
 Otherwise, we violate (\ref{equation:cluster-masking:min-collision-time}).
 This is a contradiction to the time constraints on $\snapshotCollision $ above.
 \vspace{-0.5cm}
 \begin{flushright}
 $\qed$
 \end{flushright}
\end{proof}

\noindent
Theorem~\ref{theorem:simulation-terminates} rephrases that
$\timeMinCurrent \geq \timeTerminal$ after a finite number of time steps.
With a detailed description of the individual algorithmic steps, we can 
indeed show that our algorithm terminates.

\begin{lemma}
 The time step sizes of active particles are significantly bigger than zero,
 i.e.~they do not deteriorate.
\end{lemma}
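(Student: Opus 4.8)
The goal is to exhibit a constant $\tau_{\min}>0$ such that every \emph{active} cluster advances with $\timeStepSize(\particleSet _i)\ge\tau_{\min}$, so that active particles can never be caught in a run of shrinking steps. I would organise this as an induction over the global step counter $n$ introduced in the proof of Theorem~\ref{theorem:simulation-valid}, carrying the hypothesis: \emph{whenever a particle was last advanced, it was advanced by at least $\tau_{\min}$}. Combined with Theorem~\ref{theorem:simulation-terminates}'s bookkeeping this is exactly what rules out an infinite sequence of vanishing steps. The induction step amounts to a lower bound on the effective (narrowed-then-widened, cmp.~Definition~\ref{definition:narrowing}) time step size of the cluster that owns a given particle in step $n$, and the whole argument splits into (i) controlling the candidate span that enters the narrowing and (ii) controlling how far the narrowing can cut it down.

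For (i): by (\ref{equation:cluster-step:damped-particle-time-step-size}) and (\ref{equation:cluster-consolidation:cluster-time-step}), the candidate span is $\timeStepSize(\particleSet _i)=\min_{p\in\particleSet _i}\min\bigl(\alpha(\snapshotCurrent(p)-\snapshotOld(p)),\timeStepSize\bigr)$ with $\alpha=2>1$. The induction hypothesis gives each cluster member a last real advance $\ge\tau_{\min}$, so this "damped average" is non-decreasing until it saturates at the global ceiling $\timeStepSize$; it therefore cannot be the source of deterioration. The one subtlety here is that the consolidation step rolls particles back by interpolating down to $\timeMinCurrent(\particleSet _i)$, which can shrink a member's $(\snapshotOld,\snapshotCurrent)$ gap. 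This roll-back, however, only undoes the \emph{optimistic overshoot} of a cluster relative to $\timeMinCurrent$, and the cluster masking (\ref{equation:cluster-masking:min-collision-time}) bounds that overshoot; so the span feeding the narrowing retains a positive lower bound, with at worst a reduced constant.

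For (ii): a ``proper'' collision (case~3 of the time step size calculation) can only be produced by a pair $(p_1,p_2)$ that at the synchronised $\snapshotCurrent(\particleSet _i)$ is neither $\halo$-overlapping (else case~2) nor relatively receding (else case~1). Such a pair has a strictly positive clearance $d>0$ between its $\halo$-shells closing at relative normal speed $s>0$, and the algorithm's contact time is, up to the rotation approximation and the edge--edge sampling, of order $d/s$. I would bound $s$ from above by $2v_{\max}$, where $v_{\max}$ is a global speed bound obtained from the fact that gravity and long-range forces are omitted, collisions are elastic or damped, and the only remaining forces (the artificial separation term and Coulomb friction of Section~\ref{section:time-stepping}) are bounded and non-injecting, so total kinetic energy — hence every particle speed — stays controlled by the initial data. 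The bound on $d$ from below is where the $\halo$-layer, the implicit solve, and the \emph{widening} do their work: immediately after a resolved contact the constraint $\particleVelocity(p_1,n)-\particleVelocity(p_2,n)\ge0$ makes the pair non-approaching, and over the linear within-step trajectories it stays so; the artificial separation force then drives a drifting pair apart, so a pair that has just collided needs a definite time — depending only on $\halo$, $v_{\max}$ and the widening/separation parameters — before it can re-qualify as ``approaching, non-overlapping'', and its clearance $d$ at that moment is bounded below by a fixed $d_{\min}>0$. Setting $\tau_{\min}$ to the minimum of $d_{\min}/(2v_{\max})$, the guaranteed widening increment (which keeps $\timeStepSize(\particleSet _i)$ bounded away from $0$ even when floating-point round-off collapses the narrowed span), and the candidate-span floor from (i), closes the induction.

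\textbf{Main obstacle.}
The delicate point is the uniform lower bound $d_{\min}$ on the clearance of a pair that re-enters the approaching state: pure kinematics does not forbid other particles in the cluster from squeezing a just-separated pair back into an arbitrarily small gap within an arbitrarily short time — precisely the drift/penetration scenario that Section~\ref{section:time-stepping} adds the artificial separation force to suppress. Turning that force into a \emph{quantitative} statement — a positive minimum outward relative speed once the clearance drops below a threshold, uniform in the surrounding configuration — is the crux; given it, the rest is elementary kinematics and a finite reduction over the finitely many particle pairs.
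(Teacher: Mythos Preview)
Your plan is sound but takes a noticeably harder route than the paper. The paper's argument is a short contradiction at the level of a single particle's \emph{trajectory}: it writes the piecewise-linear path of one particle as segments $l_0,l_1,\ldots$ between successive collisions, notes that time-step deterioration would force $|l_n|\to 0$ (the funnel/hopper picture), and then kills this directly with the $\halo$-layer. Either a neighbour's $\halo$-region already overlaps at $\snapshotCurrent$ --- then case~2 of the time step size calculation removes that pair from the narrowing altogether --- or it does not, and the paper simply asserts $|l_n|\ge\halo$. Friction (under the stated assumption that kinetic energy is non-increasing) is invoked to ensure a particle cannot bounce back and forth without slowing, so shrinking segment \emph{lengths} could not in any case produce shrinking segment \emph{times}.

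By arguing about one particle's segment lengths rather than pairwise clearances, the paper sidesteps your induction on $n$, the roll-back bookkeeping in (i), and the quantitative separation-force bound you correctly flag as the crux of your approach: your squeezing scenario is absorbed into the case-2 exclusion rather than controlled via the artificial separation term. Conversely, your route would yield an explicit $\tau_{\min}$ with traceable dependence on $\halo$, $v_{\max}$, and the widening parameters, which the paper's sketch does not supply --- its $|l_n|\ge\halo$ claim is asserted rather than derived, and making it fully rigorous arguably needs something close to the clearance analysis you outline.
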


\noindent
The lemma can only be proven once we assume that the kinetic energy of a system
is not growing, i.e.~as long as the system is closed and no energy is
injected.

\begin{proof}
 Let a particle that is not subject to changes of its kinetic energy
 follow a trajectory of straight line segments $l_0, l_1, l_2, \ldots $.
 This is the analytic trajectory, i.e.~no two line segments point into
 the same direction.
 To contradict the lemma, $|l_n| < |l_{n-1}|$.
 Such a situation can arise if a particle, for example, falls into a funnel or
 hopper and bounces in-between the walls coming closer and closer, until the
 particle finally gets stuck in-between walls.
 
 We first recognise that the reduction above means $\lim _n |l_n| = 0$, as our
 particle interaction model is subject to friction. 
 A particle notably cannot bounce forth and back between two walls without
 slowing down. 

 However, to construct such a trajectory is not possible for our algorithm, as
 the particles are surrounded by an $\halo $-layer. Hence, the
 segment length is bounded by $|l_n \geq \halo$ or two particles are in a
 ``permanent'' contact, i.e.~stick to each other. In the latter case, the
 interaction is explicitly removed from the time step size calculation.
 \vspace{-0.5cm}
 \begin{flushright}
 $\qed$
 \end{flushright}
\end{proof}

\noindent
The lemma formalises the $\gg $ in (\ref{equation:time-step-size-calculation}).
It becomes clear that friction works in our favour in this context: 
As particles slow down due to friction after each collision, even constant path
segments would lead to larger time intervals in-between collisions.
Even without friction, our $\halo$-formalism effectively truncates analytical
trajectories of line segments which would approach zero.
One can still construct arbitrary long sequences of time step sizes
approaching zero if a system is permanently stimulated by injecting additional kinetic
energy.
We neglect such setups.

\begin{observation}
  The semi-implicit collision model, where the contacts are prescribed and only
  impulses are implicitly determined avoids time step size degeneration and
  bouncing particles when we encounter compact, settled particle geometries.
\end{observation}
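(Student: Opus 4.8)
The plan is to isolate the two ways a penalty/explicit contact scheme degrades on dense, settled packings---runaway time step shrinkage caused by stiff spring oscillations, and spurious high-frequency bouncing of resting particles---and to show that the semi-implicit model structurally cannot exhibit either. The argument reuses three ingredients already established: the case distinction that drives the effective time step size in Section~\ref{section:time-step-size-calculation}, the implicit normal-velocity constraint $v(p_1(c),n(c)) - v(p_2(c),n(c)) \geq 0$ from Section~\ref{section:time-stepping}, and the preceding Lemma together with its friction/$\halo$ argument.

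For the time step size, I would argue as follows. In a compact, settled configuration every pair of particles that is in persistent contact has overlapping $\halo$-layers already at $\snapshotCurrent(\particleSet_i)$, so it falls into case~2 of the case distinction in Section~\ref{section:time-step-size-calculation} and contributes $\snapshotCollision(p_1,p_2) = \timeMaxCurrent(\particleSet_i)$; mutually separating pairs fall into case~1 and do likewise. Hence the minimum that defines $\snapshotNew(\particleSet_i)$, and therefore $\timeStepSize(\particleSet_i)$, is taken only over case~3 pairs, i.e.\ over particles that are genuinely approaching but not yet touching, and the Lemma just proven bounds those collision gaps away from zero. Crucially, the artificial separation (spring) term is constructed so that it does not feed into the resulting velocities, so it never manufactures a case~3 constraint, and the implicit solve resolves the momentum exchange in a single step whose size is not itself limited by contact stiffness---there is no stiff oscillator to resolve. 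This is what rules out time step degeneration on piles.

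For the bouncing, I would use the implicit constraint directly: the solve annihilates the approaching component of the relative velocity of every contact in one go, independently of $\timeStepSize(\particleSet_i)$, so a particle that lands on the stack does not get re-accelerated into it the way a truncated smooth spring force would. Because the contact force here is a heaviside-type jump rather than a smooth function of penetration depth (contrast the model of \cite{multires2022}), no elastic energy is stored as a function of overlap, so there is no restoring force whose period would have to be tracked; the only residual motion is the slow drift corrected by the widening term, and Coulomb friction monotonically drains the tangential kinetic energy. Combined, a stack settles into---and remains in---the case~2 regime, which is precisely the absence of bouncing.

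The main obstacle is to pin down ``compact, settled particle geometry'' rigorously and to rule out self-sustained re-excitation: a newly arriving particle, or an external push, could momentarily knock a settled pair out of case~2 into case~3 with a tiny clearance and thereby shrink one cluster step. I would handle this exactly as the Lemma handles its ``permanent contact'' branch---such pairs are removed from the time step size calculation once they stick---and by observing that any such event is a one-off transient of the kind the termination argument in Theorem~\ref{theorem:simulation-terminates} already absorbs, not a geometrically self-reproducing sequence of vanishing steps, under the standing assumption that no kinetic energy is injected into the system.
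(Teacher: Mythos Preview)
Your proposal is correct and considerably more rigorous than what the paper actually offers. In the paper this statement is an \emph{Observation}, not a lemma, and the accompanying text is essentially a one-paragraph contrast argument: with a smooth spring force, stacked assemblies oscillate around equilibrium and never settle, which forces ever-smaller time steps; the semi-implicit model sidesteps this because contact points and impulses are determined implicitly, so no stiff oscillator has to be resolved. That is the whole justification---no appeal to the case distinction of Section~\ref{section:time-step-size-calculation}, no explicit invocation of the preceding Lemma, and no separate treatment of degeneration versus bouncing.

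Your route decomposes the claim into the two failure modes and then pins each to a concrete mechanism already in the paper: case~2 of the $\snapshotCollision$ calculation removes persistent contacts from the time-step minimum, the Lemma bounds the remaining case~3 gaps away from zero, and the velocity constraint plus the non-velocity-feeding separation term kill re-excitation. This buys you an actual argument rather than a design remark, and it makes explicit \emph{which} algorithmic choices (exclusion of resting contacts, heaviside-type force, separation term not entering velocities) are load-bearing. The paper's version, by contrast, is shorter and reads more as motivation for why the semi-implicit choice was made in the first place; it relies on the reader accepting the spring-model pathology at face value. Your handling of the ``what if a new particle knocks a settled pair out of case~2'' objection is also absent from the paper and is a genuine addition.
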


\noindent
With a smooth spring force, time step sizes have to be dramatically reduced as long as two objects
continue to approach each other despite the actions of the force.
As a consequence, global assemblies of objects with multiple contact points will
rarely reach an exact equilibrium. 
They will always oscillate around the steady state solution as long as an
external force such as gravity holds them together or otherwise disassemble.
While a force-based formalism may have its advantages, such vibrating assemblies
of objects are problematic for local time stepping, as they lead to an effective
time step size degradation. 
It is hence not only an algorithmic decision to determine the contact points and
impulses implicitly each---though the overall schemem remains explicit as these
to ingredients are not coupled---but it is essential to make the local time
stepping work.

\begin{lemma}
 The particle with the smallest $\snapshotCurrent $ is always a member of an
 active cluster.
\end{lemma}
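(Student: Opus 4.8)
The plan is to argue by contradiction using the cluster masking rule in (\ref{equation:cluster-masking:min-collision-time}). Let $p^\star$ be a particle realising the global minimum $\timeMinCurrent = \snapshotCurrent(p^\star)$, and let $\particleSet_j$ be the cluster containing $p^\star$ after consolidation. First I would observe that the consolidation step (Definition~\ref{definition:cluster}) forces $\timeMinCurrent(\particleSet_j) = \snapshotCurrent(p)$ for \emph{every} $p \in \particleSet_j$; in particular $\timeMinCurrent(\particleSet_j) = \snapshotCurrent(p^\star) = \timeMinCurrent$. Hence $\particleSet_j$ is (one of) the globally least-advanced cluster(s) in this sweep.

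Next I would examine the masking criterion. A cluster $\particleSet_i$ is masked out precisely when $\snapshotCurrent(\particleSet_i) > \snapshotCollision$, where $\snapshotCollision = \min_{\particleSet_i \subset \particleSet} \snapshotNew(\particleSet_i)$ from (\ref{equation:cluster-masking:min-collision-time}). Since $\snapshotNew(\particleSet_i) = \snapshotCurrent(\particleSet_i) + \timeStepSize(\particleSet_i) \geq \snapshotCurrent(\particleSet_i) \geq \timeMinCurrent$ for all clusters, we get $\snapshotCollision \geq \timeMinCurrent$. Combined with $\snapshotCurrent(\particleSet_j) = \timeMinCurrent$, this yields $\snapshotCurrent(\particleSet_j) = \timeMinCurrent \leq \snapshotCollision$, so the disqualification condition $\snapshotCurrent(\particleSet_j) > \snapshotCollision$ fails. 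Therefore $\particleSet_j$ is not masked out, i.e.~it is an active cluster in the sense of Definition~\ref{definition:active-cluster}, and $p^\star$ is a member of it.

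The one subtlety I would need to handle carefully is the boundary case where the minimum-advanced cluster cannot move because its effective time step size collapses: one must check that this cannot strictly mask $\particleSet_j$ either. The time step size calculation (Section~\ref{section:time-step-size-calculation}) together with the preceding lemma (time step sizes of active particles do not deteriorate, combined with the widening correction that artificially keeps $\timeStepSize(\particleSet_i) > 0$) guarantees $\snapshotNew(\particleSet_j) = \snapshotCurrent(\particleSet_j) + \timeStepSize(\particleSet_j) > \snapshotCurrent(\particleSet_j) = \timeMinCurrent$ strictly. Thus $\particleSet_j$ contributes a value strictly above $\timeMinCurrent$ to the reduction defining $\snapshotCollision$, so even if some other cluster had a collision exactly at $\timeMinCurrent$, we would have $\snapshotCollision \geq \timeMinCurrent = \snapshotCurrent(\particleSet_j)$, which still leaves $\particleSet_j$ unmasked since masking requires a \emph{strict} inequality.

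I expect the main obstacle to be purely bookkeeping rather than conceptual: reconciling the per-particle $\snapshotCurrent(p)$ notation with the per-cluster $\snapshotCurrent(\particleSet_i)$ notation after consolidation, and making explicit that the widening step rules out the degenerate $\timeStepSize(\particleSet_j) = 0$ scenario so that the argument does not hinge on a non-strict inequality the wrong way. Once those identifications are pinned down, the proof is a two-line chain of inequalities feeding into the definition of an active cluster.
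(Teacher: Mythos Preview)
Your argument is correct and follows essentially the same route as the paper: both use the masking criterion from (\ref{equation:cluster-masking:min-collision-time}) and observe that $\snapshotCollision \geq \snapshotCurrent(\particleSet_j) = \timeMinCurrent$, so the cluster containing the minimal-time particle cannot satisfy the strict disqualification inequality. Your treatment is in fact slightly more careful than the paper's, since you explicitly address the degenerate $\timeStepSize(\particleSet_j)=0$ boundary case via the widening step, whereas the paper leaves this implicit.
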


\begin{proof}
 We assume the lemma were wrong an construct a contradiction:
 Let $\snapshotCurrent(p_i) = \timeMinCurrent$, i.e.~let $p_i$ determine the
 global minimum time stamp. 
 Another particle $p_j$ carries $\snapshotCurrent(p_j)>\snapshotCurrent(p_i)$.
 Let $p_i$ and $p_j$ end up in different clusters, were they both determine the
 minimal cluster time stamp $\timeMinCurrent (\particleSet _i)$ or
 $\timeMinCurrent (\particleSet _j)$ respectively.
 
 If $p_j$ vetos the advance of the $\particleSet _i$, 
 \[
 \snapshotCurrent (\particleSet _i) > \snapshotCollision.
 \]
 However, $\snapshotCollision (p_j)>\snapshotCurrent(p_j)$ by definition.
 \vspace{-0.5cm}
 \begin{flushright}
 $\qed$
 \end{flushright}
\end{proof}

\noindent
What could happen however is that cluster $\particleSet _i$ takes a large time
step and overtakes the particles within $\particleSet _j$, missing out on
collisions at $t>\snapshotCollision(p_j)$. In this case, it will roll back later
to $\snapshotCollision(p_j)$ as particles from $\particleSet _i$ and
$\particleSet _j$ are joined into one cluster.

\begin{lemma}
 Every particle is updated after a finite number of steps, i.e.~every particle
 becomes a member of an active cluster.
\end{lemma}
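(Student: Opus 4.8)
The plan is to combine the previous two lemmas with a well-founded descent argument. By the preceding lemma, the particle $p_i$ realising the global minimum $\timeMinCurrent$ always sits in an \emph{active} cluster, and by Definition~\ref{definition:active-cluster} together with the effective time step size construction, that cluster advances by a strictly positive amount. The earlier lemma on non-deteriorating time step sizes guarantees that this increment is bounded below by some $\delta>0$ (governed by $\halo$ and friction, modulo the neglected energy-injection setups). So in every single time step the quantity $\timeMinCurrent$ either strictly increases by at least a fixed $\delta$, or the particle(s) realising it get consolidated/rolled back into a cluster that then advances. The first step of the proof is therefore to argue that $\timeMinCurrent$ is monotone non-decreasing across time steps and increases by at least $\delta$ infinitely often; this already yields Theorem~\ref{theorem:simulation-terminates} and shows no particle can be the global-minimum particle forever.

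Next I would pass from ``the minimum particle is updated'' to ``\emph{every} particle is updated''. The clean way is to fix an arbitrary particle $p$ and an arbitrary target time stamp; since $\timeMinCurrent \to \infty$ (it passes any $\timeTerminal$ by Theorem~\ref{theorem:simulation-terminates}), and since $\snapshotCurrent(p) \geq \timeMinCurrent$ always, we eventually have $\snapshotCurrent(p)$ bounded below by an ever-growing quantity. But $\snapshotCurrent(p)$ only changes when $p$ lies in an active cluster that performs the snapshot roll-over. Hence if $p$ were never updated after some step $N$, its $\snapshotCurrent(p)$ would be frozen at a finite value $T_p$, while $\timeMinCurrent$ keeps growing past $T_p$ --- contradicting $\timeMinCurrent = \min_{q\in\particleSet}\snapshotCurrent(q) \leq \snapshotCurrent(p) = T_p$. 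This contradiction forces $p$ to be a member of an active cluster infinitely often, which is the claim. One should also note the consolidation/roll-back steps only ever move $\snapshotCurrent(p)$ backwards by interpolation within $(\snapshotOld(p),\snapshotCurrent(p))$, so they cannot by themselves push $\snapshotCurrent(p)$ above $T_p$; the only mechanism that does is an active-cluster advance.

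The main obstacle I expect is making the ``increment bounded below by a fixed $\delta$'' step fully rigorous in the presence of consolidation. Consolidation rolls particles \emph{back}, so a naive reading would allow $\timeMinCurrent$ to decrease; one has to check that the global minimum itself is never lowered by consolidation --- each cluster is rolled back only to its own $\timeMinCurrent(\particleSet_i) \geq \timeMinCurrent$, so the global minimum is untouched --- and that cluster masking never freezes the cluster containing the global-minimum particle (this is precisely the content of the second lemma above). A second subtlety is that the effective time step size, after narrowing, could in principle be tiny even though the lemma bounds the \emph{analytic} inter-collision segment length; here I would invoke the widening remark and the $\halo$-layer argument from the proof of that lemma to conclude the realised step is $\gtrsim \halo$ divided by a bounded speed, hence bounded below. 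With those two points pinned down, the descent argument and the contradiction in the second paragraph go through routinely.
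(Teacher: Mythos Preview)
Your proposal is correct and reaches the same conclusion, but the decomposition differs from the paper's. The paper fixes the tardy particle $p_i$ and introduces the potential
\[
  T \;=\; \sum_{p_j:\ \snapshotCurrent(p_j) < \snapshotCurrent(p_i)} \bigl(\snapshotCurrent(p_i) - \snapshotCurrent(p_j)\bigr),
\]
then argues that $T$ is strictly decreasing with a decrement bounded away from zero (using the two preceding lemmas), so after finitely many steps nobody is behind $p_i$, at which point $p_i$ realises the minimum and is active. You instead push the global quantity $\timeMinCurrent$ to infinity first and then derive the per-particle statement by contradiction from the frozen value $T_p$. Your route has the virtue of delivering Theorem~\ref{theorem:simulation-terminates} before the lemma rather than alongside it, and it makes the role of consolidation explicit (your check that rollback never lowers the global minimum is exactly what is needed). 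The paper's potential-function argument is slightly more economical in that it does not require $\timeMinCurrent\to\infty$ as an intermediate step---it only needs the particles \emph{behind} $p_i$ to catch up---so it tolerates weaker assumptions on the uniform step bound. Both arguments lean on the same two lemmas and carry the same residual hand-waving about the uniform lower bound $\delta$ surviving the $\alpha$-damping and narrowing; the obstacles you flag are real and are treated at the same level of rigour in the paper.
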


\begin{proof}
 The lemma (and proof of Theorem~\ref{theorem:simulation-terminates}) results
 directly from the previous lemmas: 
 A particle $p_i$ is not strictly advancing in time. 
 However, let $T=\sum _{p_j} (\snapshotCurrent (p_i)-\snapshotCurrent (p_j))$ be
 a sum over all particles $p_j$ with $\snapshotCurrent (p_i) > \snapshotCurrent
 (p_j)$. This sum is strictly monotonously decreasing with a decrease that
 bounded away from zero.
 After a finite number of steps, particle $p_i$ is therefore updated.
 \vspace{-0.5cm}
 \begin{flushright}
 $\qed$
 \end{flushright}
\end{proof}

\section{Runtime results}
\label{section:results}

All experiments were
run on AMD EPYC 7702 chips in a two socket configuration with $2 \times 64$ cores.
Each socket is divided into 4 NUMA regions.
%
Our code was translated with the Intel oneAPI C++ Compiler \texttt{icpx}
2021.4.0 using the flags \texttt{-std=c++17 -O3 -march=native},
i.e.~we tailored it to the native instruction set.
Our realisation relies on 
CGAL 5.5.1 \cite{CGAL}, Eigen 3.4 \cite{Eigen}, Embree 3.13 \cite{Embree}, libigl
2.3 \cite{libigl}, SIMDe 0.7.2 \cite{SIMDe}, TetGen 1.6.0 \cite{TetGen}, Catch2 3.1.0
\cite{Catch2} and GLFW 3.3.6 \cite{GLFW}.
As our prime interest is runtime, we hardcode key geometric properties:
$\epsilon = 10^{-2}$ is uniformly used unless specified differently, and 
the Picard solver for the particle interaction steps considers an
estimate of the momentum exchange as converged when the
update to the impulse and torque applied to every contact underruns a
threshold of $10^{-4}$.

We always benchmark our local time stepping against a global approach. The
latter uses the same software architecture, i.e.~works with clusters, an implicit momentum
exchange and per-cluster collision timestamp identification, too.
To make it a globally adaptive scheme,
we use 
(\ref{equation:cluster-masking:min-collision-time}) to constrain all time step
sizes:
Clusters identify their effective time step size, but then reduce  
it further whenever another cluster identifies that it can only advance with a
smaller one.
This mirrors textbook adaptive global time stepping.
Due to the damping of the time step size in
(\ref{equation:cluster-step:damped-particle-time-step-size}), a locally stiff
collision implies that all particles in the stimulation adopt a tiny time step
size.
Besides this additional constraint---which can be read as an explicit further
narrowing of the time step size per cluster once the time step size calculation
has terminated (cmp.~Definition~\ref{definition:narrowing})---the global time
stepping code equals excactly the local version.
Notably, it exploits the same parallelisation techniques and hence provides a
fair comparison baseline.

\subsection{Particle pairs}

\begin{figure}[htb]
 \begin{center}
  \includegraphics[width=0.24\textwidth]{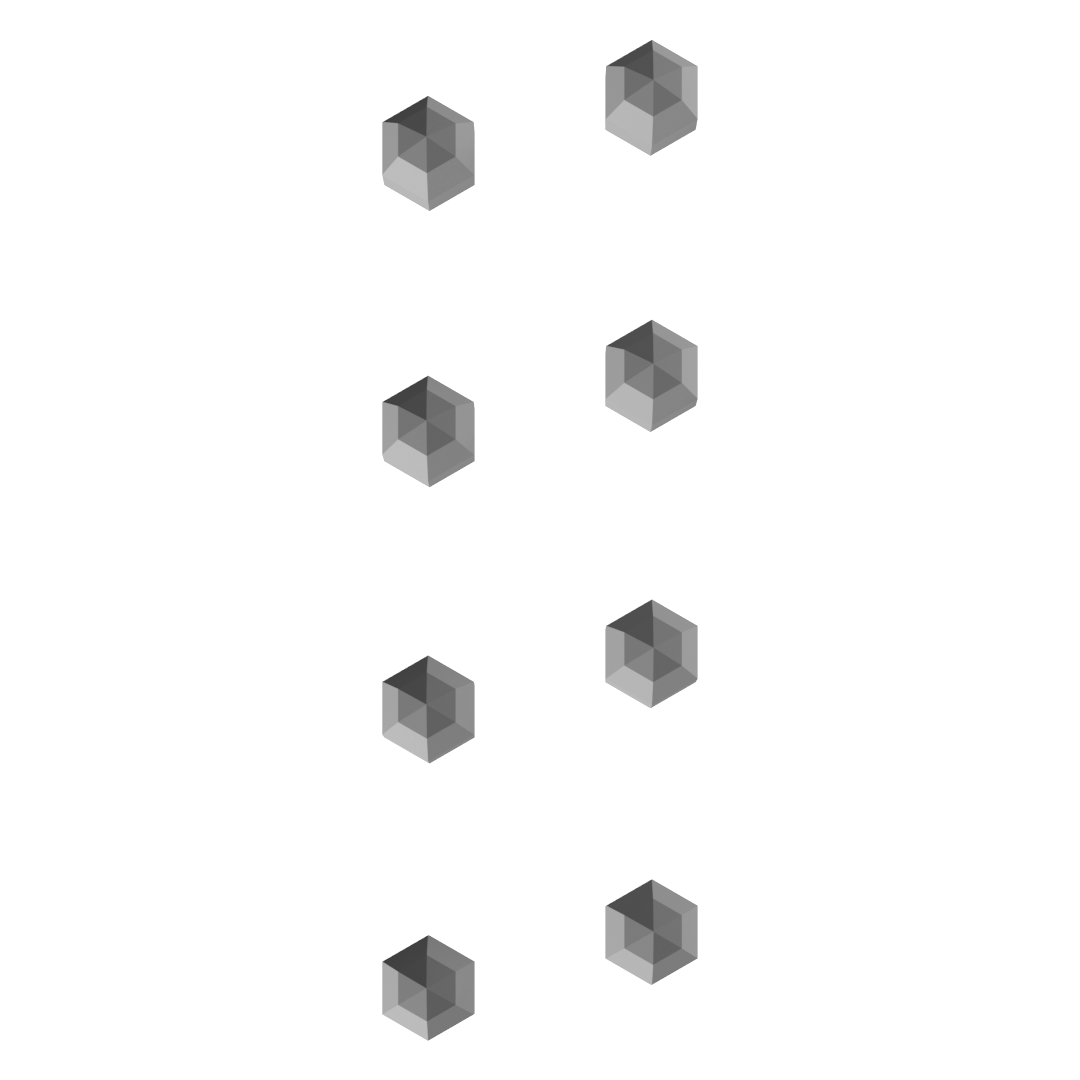}
  \includegraphics[width=0.24\textwidth]{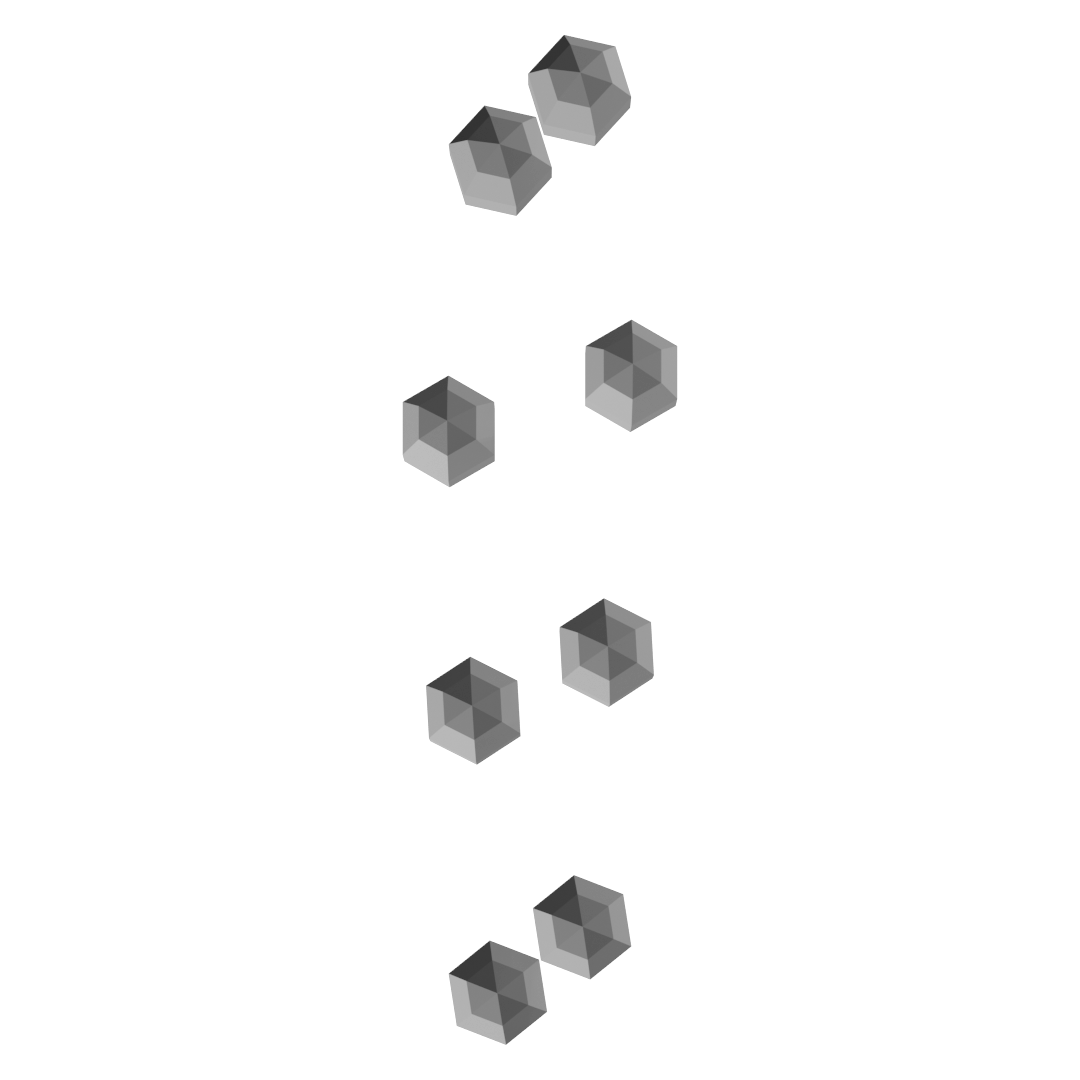}
  \includegraphics[width=0.24\textwidth]{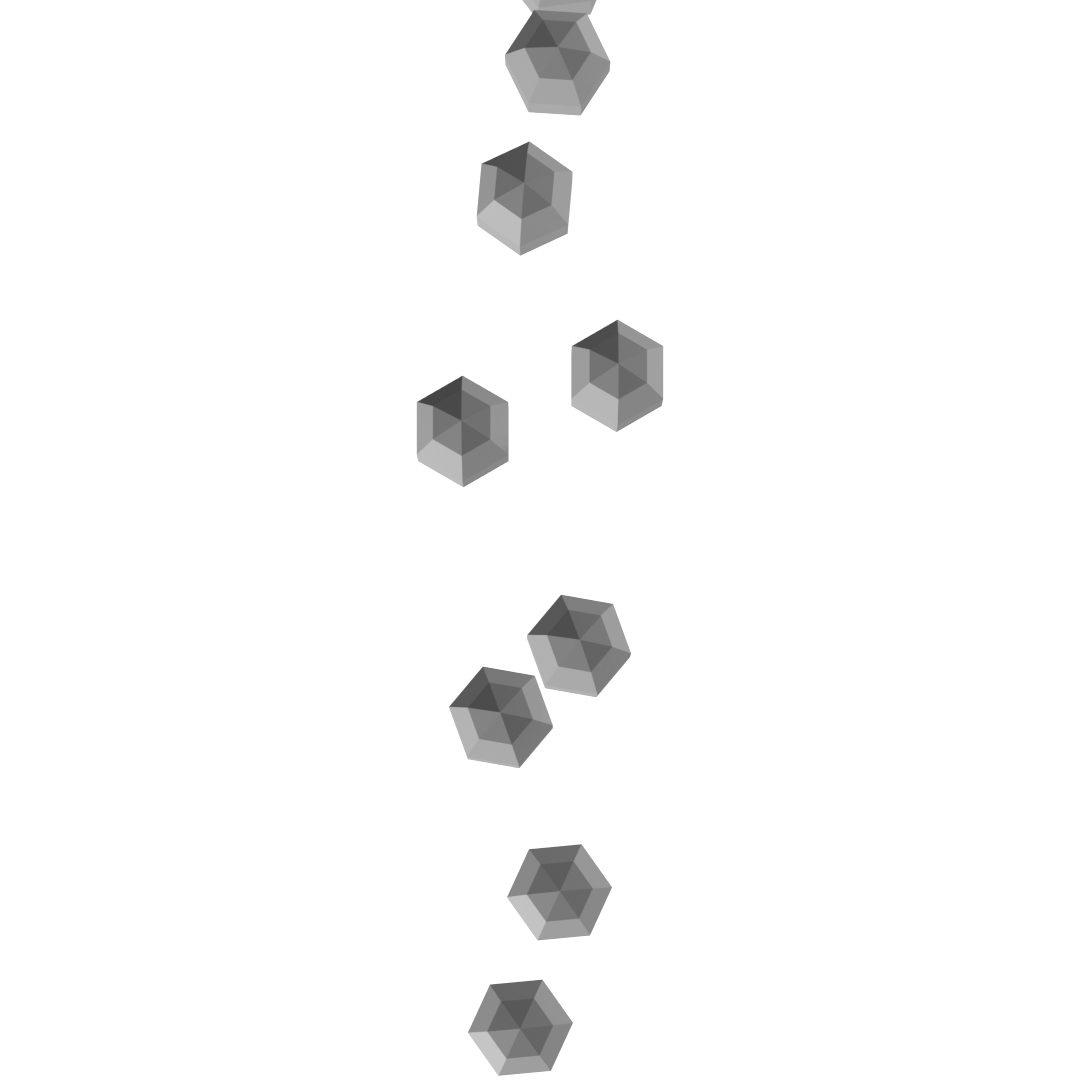}
  \includegraphics[width=0.24\textwidth]{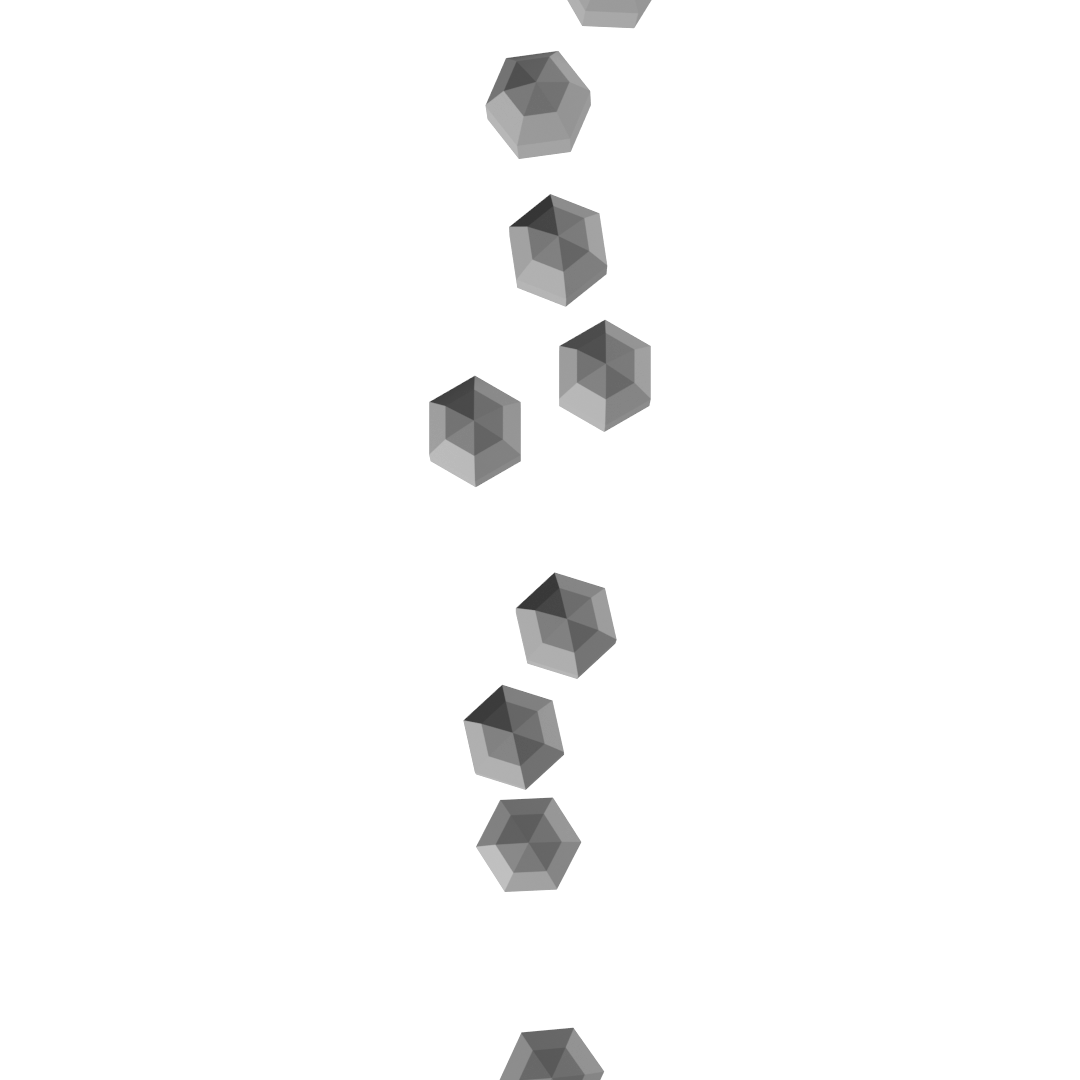}
 \end{center}
 \caption{
  The particle-pairs scenario from left to right:
  Starting from a column arrangement, pairs of particles
  (cubes in this illustration) approach each other with randomised velocity.
  The pairs collide and separate again.
  \label{figure:results:particle-pairs:setup}
 }
\end{figure}

In a first experiment, we create two sets of particles.
Both host the same number of
particles, which are arranged vertically yet spaced out.
They do not touch
each other (Figure~\ref{figure:results:particle-pairs:setup}).
The particles used are of identical size, spherical, and we use 
$|\mathbb{T}|=48$ and $\epsilon = 0.01$.

The two sets are put next to each other like columns.
The two columns in turn are set on a horizontal collision trajectory,
such that always two particles bump into each other.
As we randomly vary the initial horizontal velocity magnitude per particle, the
collision time stamps differ.
We configure the initial geometric arrangement of this
particle-pairs scenario such that each particle collides exactly once with its counterpart.

\begin{figure}[htb]
 \begin{center}
  \includegraphics[width=0.49\textwidth]{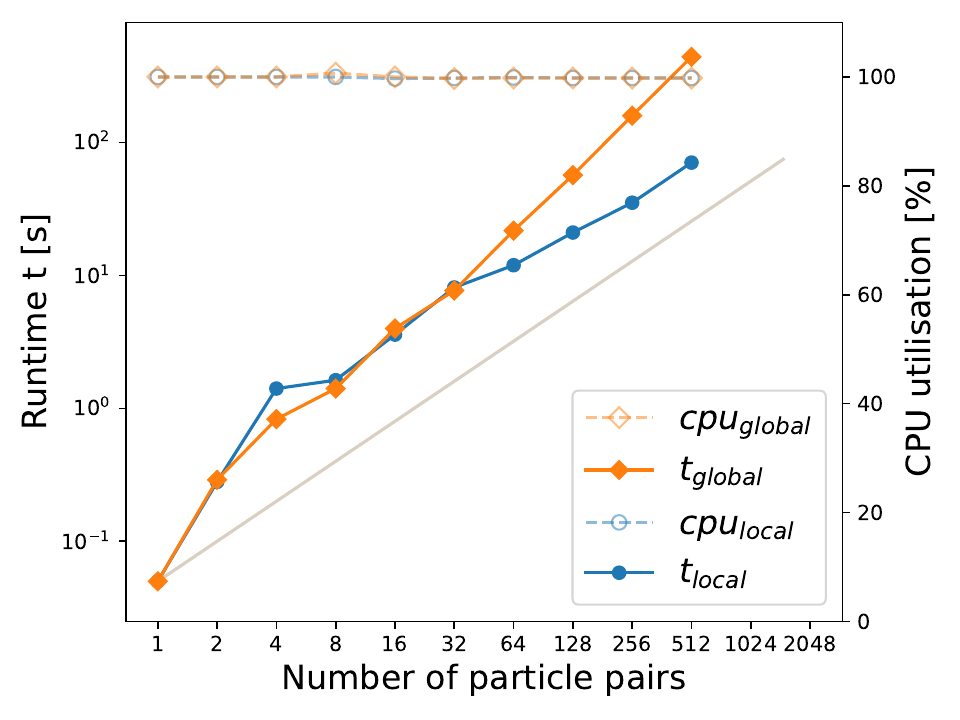}
  \includegraphics[width=0.49\textwidth]{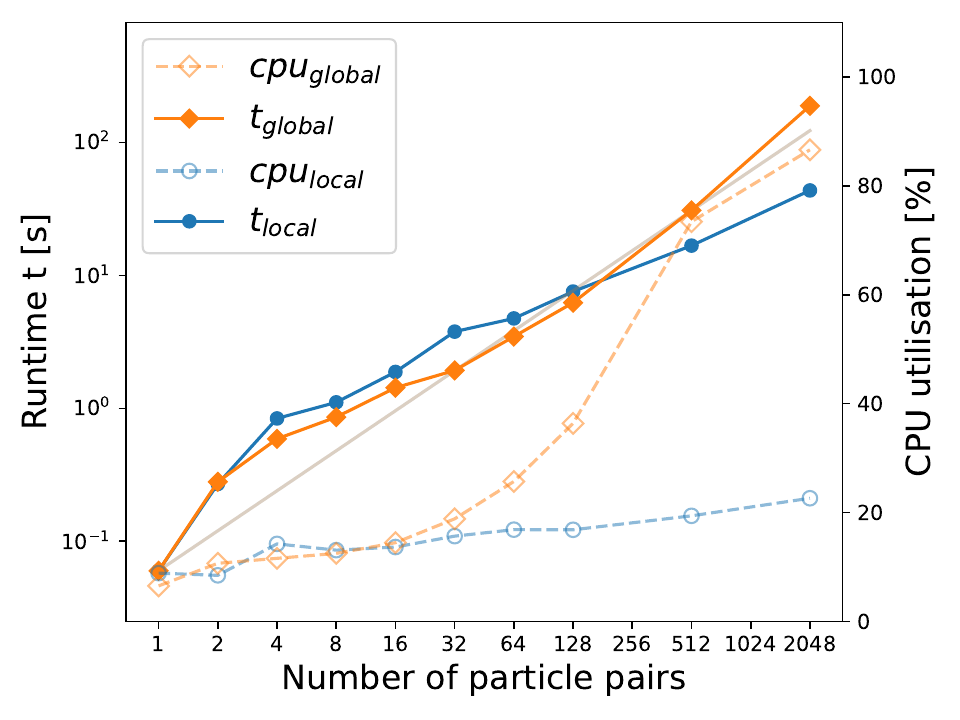}
 \end{center}
 \caption{
 Runtime and CPU utilisation for the collision of particle pairs on a single
 core (left) and the whole node (right). 
 The CPU utilisation is normalised against the number of cores available.
  \label{figure:particle-pairs:runtime}
 }
\end{figure}

%
%
Each particle pair runs through three computational phases:
While the particles approach, there is no collision and no forces act on the
particles, as we omit gravity.
When they collide, the system becomes
very stiff suddenly and the particles exchange forces.
When they separate again, there are no forces.
As we use random initial horizontal velocities, the phases of the different
particle pairs are not synchronised in any way.

\begin{observation}
 On both a single core and on multiple cores, local time stepping outperforms global time stepping for
 the particle-pair setup as soon as we accommodate sufficiently many particles. 
 It scales close to linear with the number of particles,
 while it underutilises the available CPU resources.
\end{observation}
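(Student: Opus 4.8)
The statement is empirical, so the plan is to justify it with a sweep-count complexity model for the two schemes on this scenario and then to read the curves off Figure~\ref{figure:particle-pairs:runtime}. First I would exploit the structure of the particle-pairs setup: the columns are spaced out and each particle collides only with its partner, so the pruned graph $c$ decomposes into $\lvert\particleSet\rvert/2$ disjoint two-particle components (plus isolated particles before and after their single collision), and by the cluster-independence observation the local scheme treats each pair as its own task. A single pair, run in isolation, passes through a bounded number of its own sweeps: a few free-flight steps of size $\timeStepSize$ while the partners approach, one narrowing step that lands on the contact time, one implicit momentum-exchange step, and a few free-flight steps after separation. The Lemma that effective time step sizes do not deteriorate, together with the $\halo$-layer and friction, rules out the pathological case in which a pair is trapped into vanishing steps, so this per-pair count is governed by $\timeTerminal/\timeStepSize$ and the velocity spread, not by the particle number.

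Next I would count the sweeps of the whole system. In the local scheme the clusters desynchronise in time after the first sweep, and the cluster-masking rule~(\ref{equation:cluster-masking:min-collision-time}) keeps the advancing clusters banded just behind the global time frontier; because the damping in~(\ref{equation:cluster-step:damped-particle-time-step-size}) lets the frontier cluster roughly double its step each sweep until it saturates at $\timeStepSize$, the frontier reaches $\timeTerminal$ after $\mathcal{O}(\log\lvert\particleSet\rvert)$ sweeps plus a constant tail. Each sweep costs $\mathcal{O}(\lvert\particleSet\rvert)$ for the per-cluster bookkeeping and the broad-phase check, so the local runtime is $\mathcal{O}(\lvert\particleSet\rvert\log\lvert\particleSet\rvert)$, i.e.\ close to linear. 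For the global baseline the same damping plus the global narrowing pin every cluster to the smallest admissible step, and since the per-pair contact times are distinct and spaced $\Theta(\timeTerminal/\lvert\particleSet\rvert)$ apart, the scheme resolves essentially one collision per sweep; that is $\Theta(\lvert\particleSet\rvert)$ sweeps, each $\mathcal{O}(\lvert\particleSet\rvert)$, hence $\Theta(\lvert\particleSet\rvert^2)$. Subtracting the two models yields a crossover: for small particle counts the fixed per-sweep overheads and the limited concurrency of the local scheme can make the global scheme competitive, which is why the observation is qualified by ``sufficiently many particles''; beyond the crossover the asymptotic gap of roughly $\lvert\particleSet\rvert/\log\lvert\particleSet\rvert$ takes over on a single core and, since both schemes reuse the same parallel kernels, on the full node as well.

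For the utilisation half I would again invoke cluster masking: a cluster advances only while $\snapshotCurrent(\particleSet_i)\le\snapshotCollision$, so after the first (fully parallel) sweep only the handful of clusters sitting in the frontier band are active even though all $\lvert\particleSet\rvert/2$ pairs are mutually independent, and each active task is a tiny two-particle problem with essentially no internal parallelism. Averaged over the run the number of concurrently runnable tasks is far below the core count, so the node-level CPU utilisation normalised by the core count stays well under one; on a single core this clause is vacuous, which is why the statement speaks of underutilising the ``available'' resources. I would then point at the right panel of Figure~\ref{figure:particle-pairs:runtime} for confirmation.

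The main obstacle is turning ``close to linear'' into a theorem. It hinges on the broad-phase clustering not realising its $\mathcal{O}(\lvert\particleSet\rvert^2)$ worst case, which for this sparse column arrangement I would only argue informally from spatial separation and Embree's bounding-volume hierarchy rather than prove, and the precise sweep counts depend on the constant $\alpha$ and on the distribution of the randomised velocities. The honest statement is therefore a complexity argument for the $\lvert\particleSet\rvert\log\lvert\particleSet\rvert$ versus $\lvert\particleSet\rvert^2$ scaling plus the measured data, not a closed-form proof.
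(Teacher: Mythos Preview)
The paper does not prove this observation at all in the sense you attempt; it is a results-section statement read directly off Figure~\ref{figure:particle-pairs:runtime} and Figure~\ref{figure:particle-pairs:scaling}, backed by qualitative tracing of the run (maximum cluster size around five, CPU usage below 25\%, active-cluster count per sweep growing roughly linearly in $|\particleSet|$, and the time-step histograms in Figure~\ref{figure:particle-pairs:time-step-size}). Your proposal is a genuinely different and more ambitious route: you try to derive the scaling from a sweep-count complexity model rather than merely observe it.

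The weak point is the $\mathcal{O}(\log|\particleSet|)$ sweep count for the local scheme. Your argument is that the frontier cluster doubles its step each sweep until it saturates at $\timeStepSize$, but the number of doublings is $\log(\timeStepSize/\timeStepSize_{\min})$, which depends on the velocity spread and the post-collision step, not on $|\particleSet|$; you never connect the two. A tighter argument, and one the paper's own tracing actually supports, is that the number of \emph{active} clusters per sweep scales linearly with $|\particleSet|$ while each pair needs only $\mathcal{O}(1)$ sweeps in which it is active; then the total sweep count is bounded independently of $|\particleSet|$ and the near-linear cost comes entirely from the $\mathcal{O}(|\particleSet|)$ broad phase per sweep. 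For the global baseline your $\Theta(|\particleSet|^2)$ relies on resolving one collision per sweep, which in turn assumes the damped global step recovers fast enough between consecutive collision times spaced $\Theta(\timeTerminal/|\particleSet|)$ apart; that is plausible but not argued, and the paper only supports it qualitatively through the histogram showing orders-of-magnitude more tiny-step updates for the global scheme. Your explanation of the underutilisation via the masking rule~(\ref{equation:cluster-masking:min-collision-time}) does match the paper's reasoning closely.
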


\paragraph{Cluster topology, active clusters and concurrency level.}
%
%
Tracing of the algorithm's internal states provides insight into the
runtime behaviour (Figure~\ref{figure:particle-pairs:runtime}):
The algorithm observes a maximum cluster size of up to five particles.
This is up to 2.5 times worse than the theoretical optimum, as we know that 
at most two particles at a time should be member of one cluster.
As the particle pairs are totally out-of-sync, the clusters ``diverge'' in time
quickly.
Due to the global constraint (\ref{equation:cluster-masking:min-collision-time}), this leads to a situation
where only few clusters can be updated per time step.
We obtain a low CPU usage (below 25\%).

\begin{figure}[htb]
    \centering
    \begin{subfigure}{.39\linewidth}
        \centering
        \includegraphics[width=\textwidth]{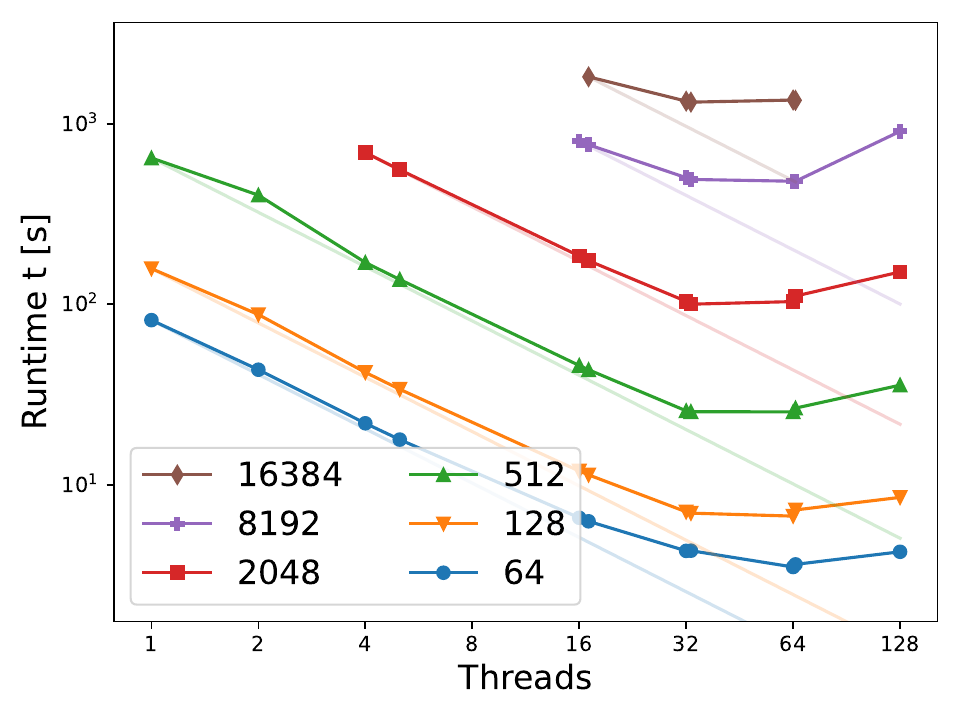}
    \end{subfigure}
    \hfill
    \begin{subfigure}{.6\linewidth}
     \tiny
        \centering
        \begin{tabular}{lccc}
            \textbf{Phase} & \textbf{Run time} & \textbf{Run time} & \textbf{Run time}\\
             & \textbf{64 pairs} & \textbf{512 pairs} & \textbf{8192 pairs}\\
            Broad collision & 0.9\% & 0.5\% & 1.0\% \\
            Clustering & 0.2\% & 0.2\% & 0.7\%\\
            Time step size & 82.6\% & 83.0\% & 82.7\% \\
            Narrow collision & 16.2\% & 16.0\% & 15.0\% \\
            Collision & 0.2\% & 0.2\% & 0.6\% \\
            Snapshot roll-over & 0.0\% & 0.1\% & 0.0\% \\
        \end{tabular}
        \begin{minipage}{.1cm}
        \vfill
        \end{minipage}
    \end{subfigure} 
    \caption{
    Left:
    Various strong scaling curves for the particle-pairs setup.
    Different numbers of particles are used with the local time stepping scheme. 
    Right: Break down of the different algorithm phases.
  \label{figure:particle-pairs:scaling}
  }
\end{figure}

Despite the low CPU utilisation, we
get reasonable weak scaling (Figure~\ref{figure:particle-pairs:scaling}).
Since we know that the clusters all hold similar particle counts (usually two),
we have to assume that there are few active clusters per time step, yet usually
more clusters than we have numbers of cores. 
Furthermore, the data suggests that the number of active clusters per time step
scales almost linearly---the weak scaling is not perfect---with the number of
total particles.

\paragraph{Time step size distribution.}
We initialise our simulation with a global $\Delta t$ threshold which is
equivalent to the mean time span up to the first collision in the system.
For the global time stepping, at least one particle pair will collide slightly
earlier than that, i.e.~we immediately reduce the effective $\Delta t$ to a
fraction of the original choice.
The next few steps reduce the time step size further, as more and more
particles crash into each other (Figure~\ref{figure:particle-pairs:time-step-size}).

Local time stepping allows some clusters to advance with the maximum
initial $\Delta t$.
The time step size histogram hence shows entries for the relative 
time step size of one.
After the first particle-pair has collided, other particles continue to collide
and to ``bounce back''. The time step size histogram fans out due to
(\ref{equation:cluster-step:damped-particle-time-step-size}).

Since the time step size is aggressively reduced immediately by the global time
stepping, very few narrowing is required.
The local time stepping in contrast requires narrowing.
At the same time, the number of cluster updates with the tiny
time step sizes is by magnitudes smaller than for global time stepping, and we
see a more ``spread out'' distribution of the used time step sizes.
This explains the good performance of local time stepping 
compared to its global counterpart.

\begin{figure}[htb]
 \begin{center}
  \includegraphics[width=0.45\textwidth]{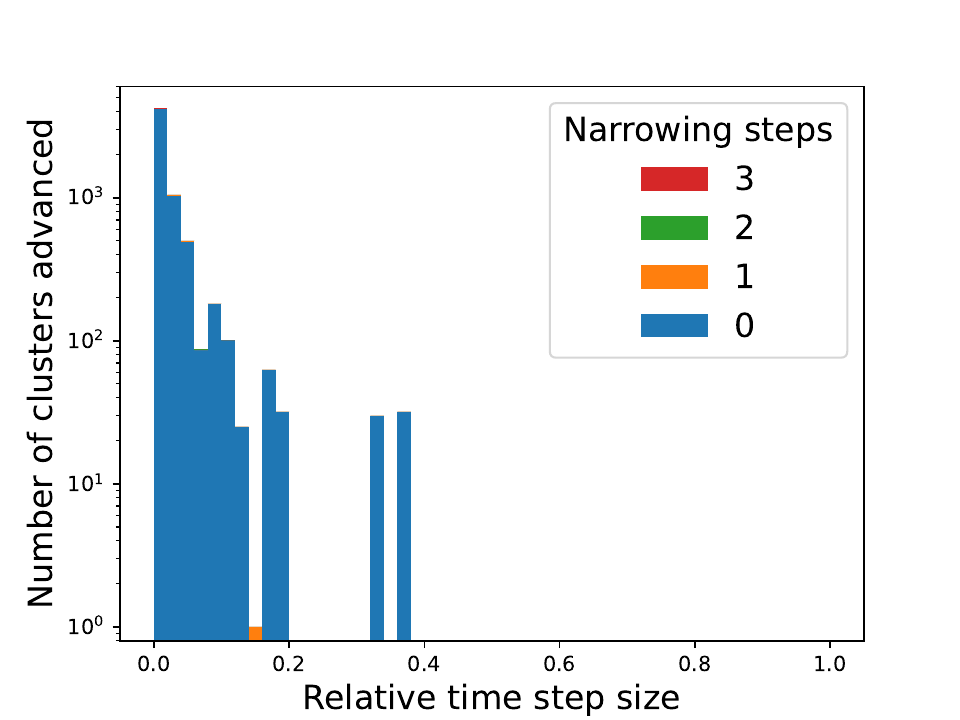}
  \includegraphics[width=0.45\textwidth]{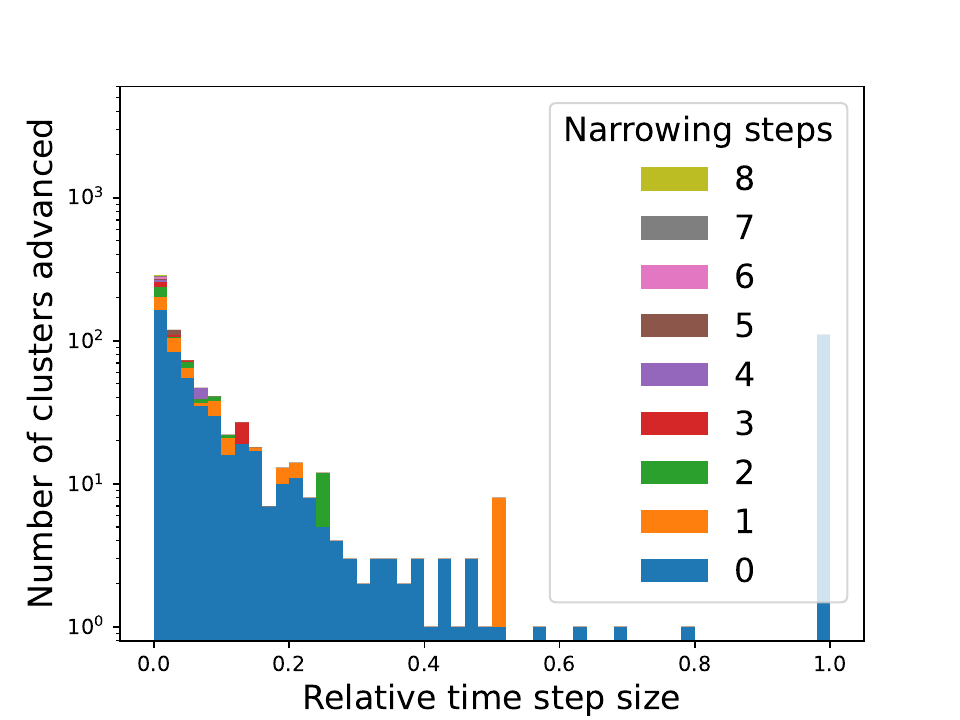}
 \end{center}
 \caption{
   Histogram of how many clusters advance in time with which time step size over
   the whole simulation of the particle-pairs scenario.
   Per chosen time step size, we also track the number of narrowing steps
   to identify an effective time step size.
   \label{figure:particle-pairs:time-step-size}
 }
\end{figure}

\paragraph{Contact points and particle interactions.}
Up to ten contacts are found per cluster, i.e.~particle-particle interaction.
Most of the time however, 
particles collide only in one single point.
The implicit momentum calculation hence becomes close to trivial.
The time step size calculation dominates the runtime
(Figure~\ref{figure:particle-pairs:scaling}).

\begin{figure}[htb]
 \begin{center}
  \includegraphics[width=0.45\textwidth]{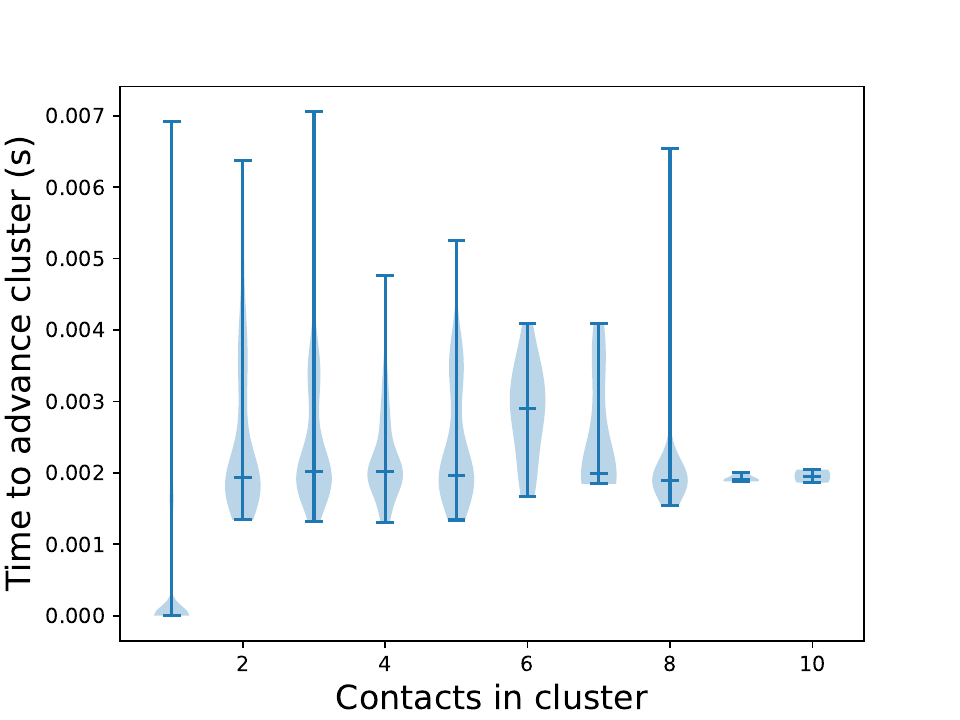}
  \includegraphics[width=0.45\textwidth]{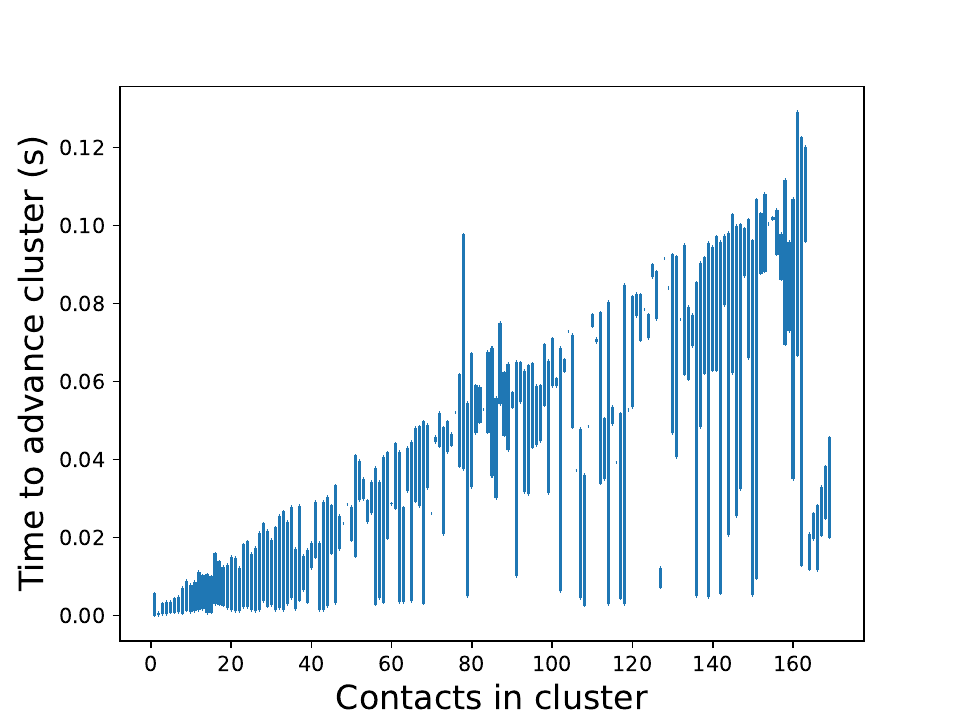}
 \end{center}
 \caption{
   Number of contacts per cluster vs.~time required to advance the cluster.
   Particle-pairs scenario (left) and particle-stack setup (right).
   \label{figure:particle-pairs:cluster-cost}
 }
\end{figure}

\paragraph{Discussion.}
For small particle counts, we obtain a superlinear cost
curve relative to the particle cardinality.
It is not clear if this is due to caching effects---as we add more particles,
they might ``fall out close caches''---a significant management overhead, or a
combination of both.
As the particle count increases, any overhead is amortised, and the local time
stepping cost curve starts to approach a linear trend.
Parallelisation subsequently helps to reduce both the curves' growth.

Our setup is of artificial character.
However, we note that many subsequent experiments lead to small particle groups
``spinning off'' from the main bulk of the objects.
As they fly away, our cluster analysis manages to find out that they barely
interact, packs them into their own cluster and handles them in parallel.
Even if a light, fast particle crashes into an assembly of compact objects, we
see that the local time stepping can handle this situation efficiently.

The setup also uncovers one further detail:
The cluster masking prohibits particles to ``shoot off'' into
the future after their first contact, since the algorithm is unaware that
particles collide at most once.
However, additional domain knowledge such as ``this particle is shooting away''
would permit us to disable the time step size constraints for the particles.
We could evolve them straight to the terminal simulation time and hence reduce
the compute cost further.

\subsection{Particle stacks and the tower setup}

\begin{figure}[htb]
 \begin{center}
  \includegraphics[width=0.24\textwidth]{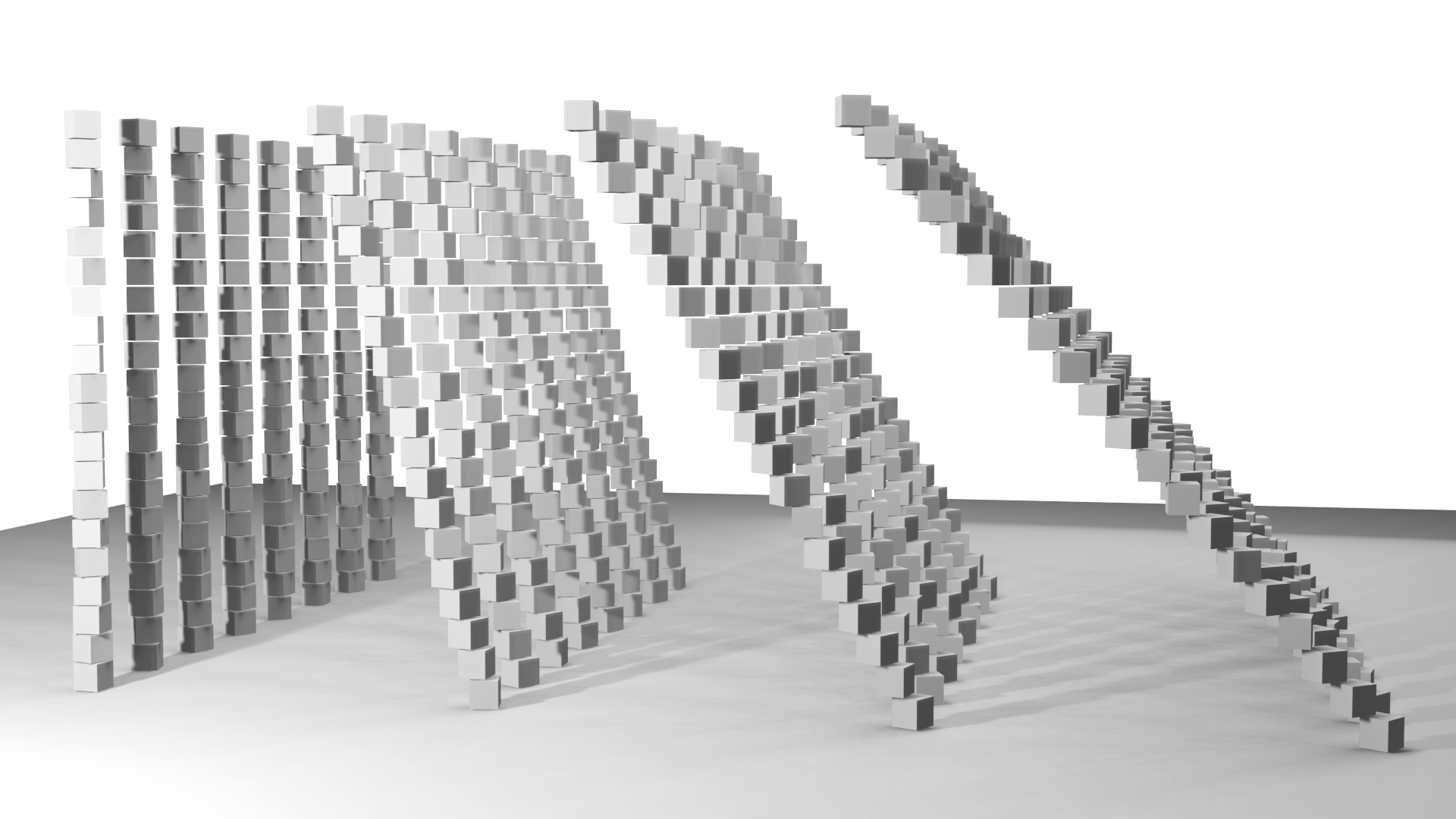}
  \includegraphics[width=0.24\textwidth]{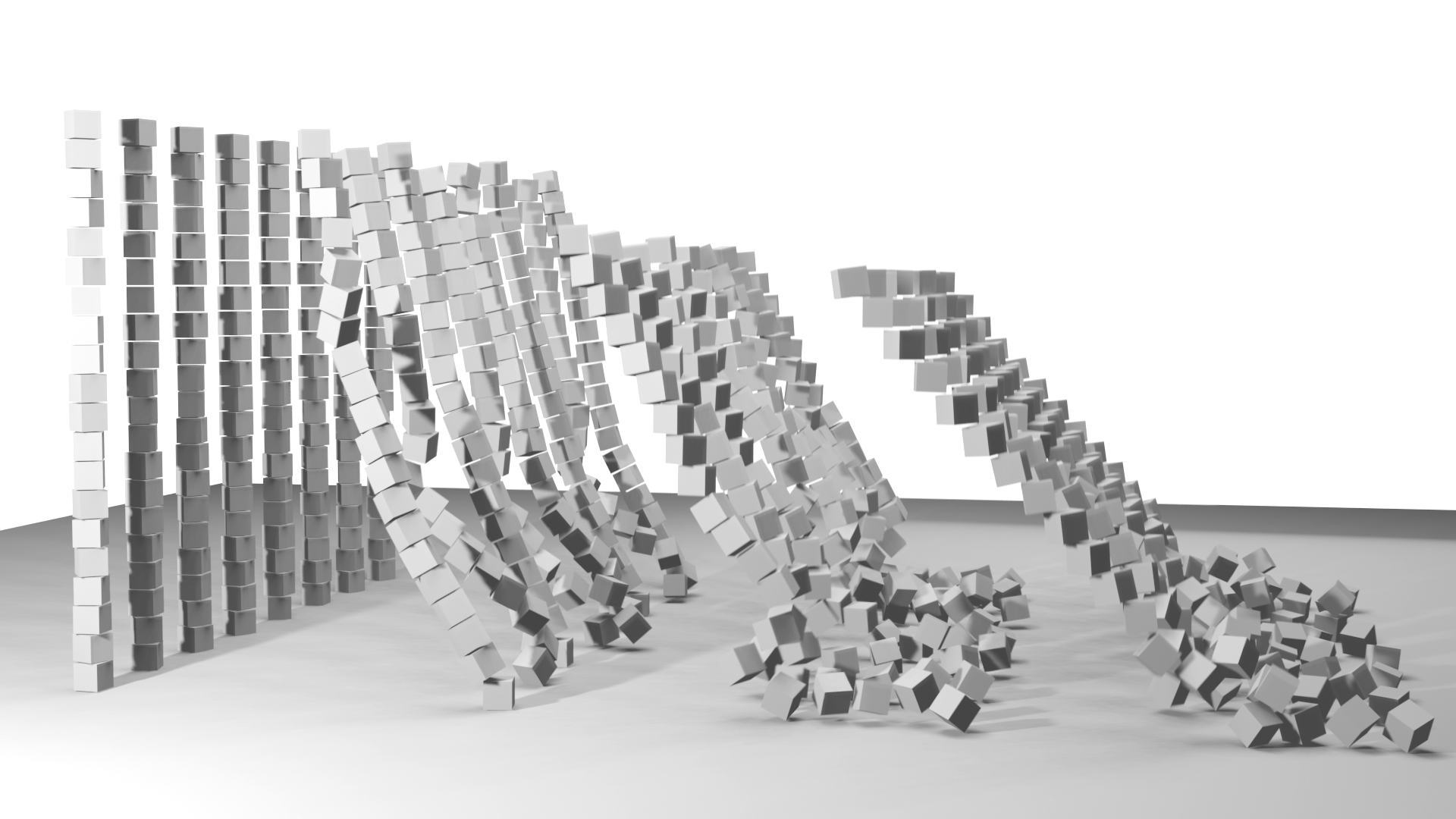}
  \includegraphics[width=0.24\textwidth]{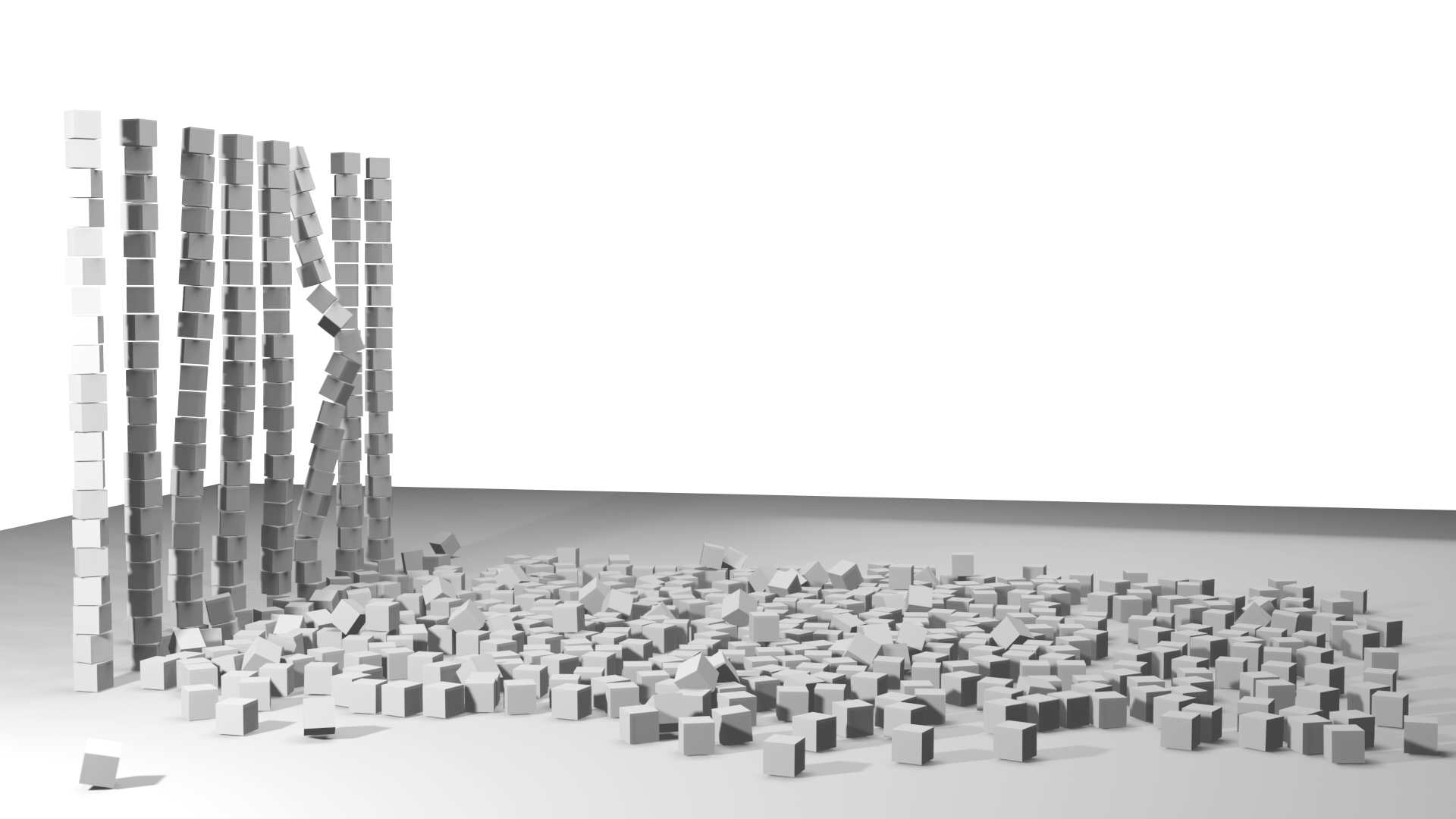}
  \includegraphics[width=0.24\textwidth]{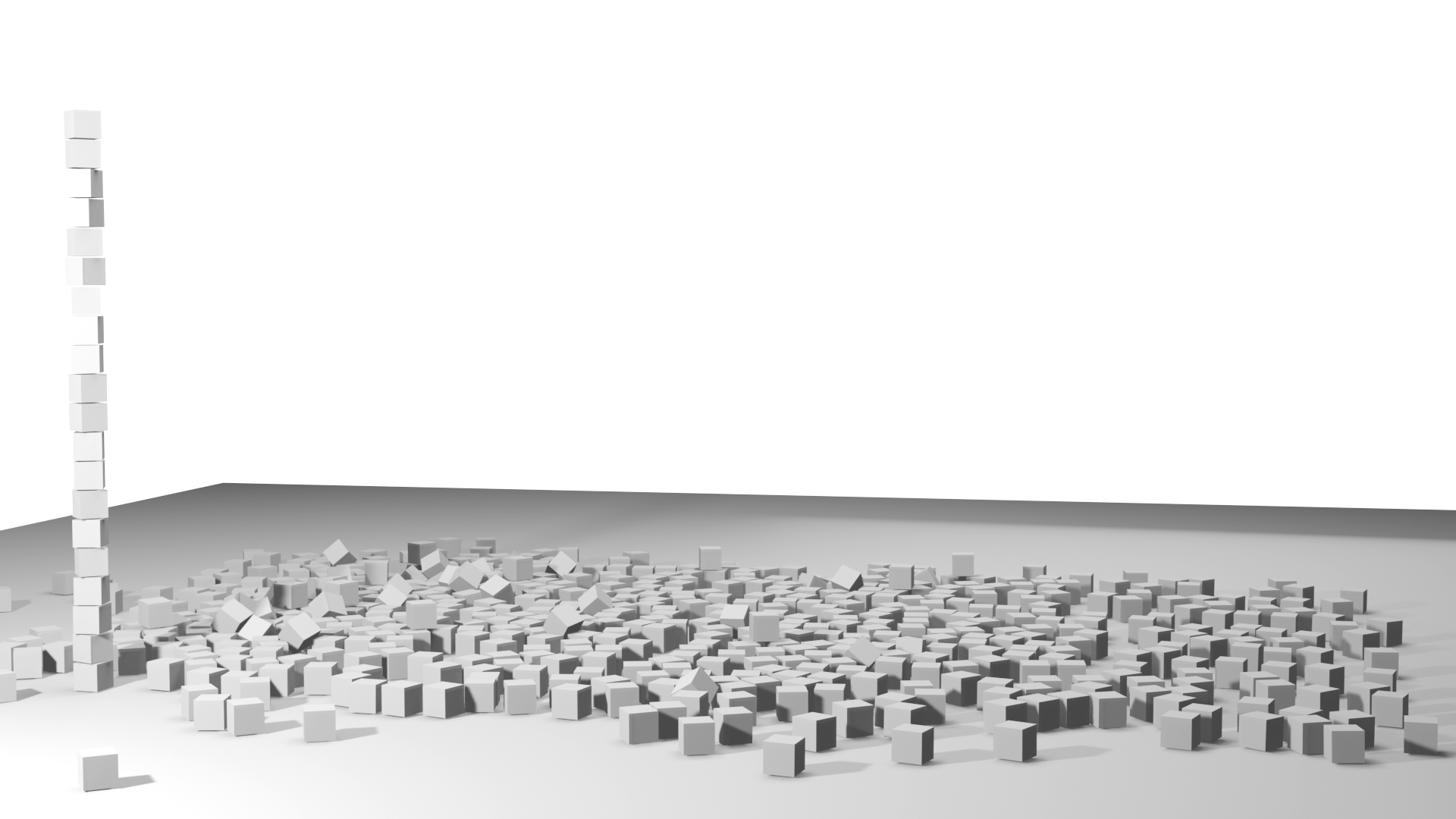}
 \end{center}
 \caption{
  The particle-stacks scenario from left to right:
  The initial state before the towers begin to topple.
  The particles from the collapsed towers disperse and some hit the stable towers causing them to topple.
  A final stationary state is reached with all particles either resting on the floor or part of a stable stack.
  \label{figure:results:particle-stacks:setup}
 }
\end{figure}

In a second setup, we study a series of
stacks of cubes which are initialised with a variety of slanting angles.
The stacks are arranged in a Cartesian layout.
As all but the left-most stacks are unstable, each stack topples over and
interacts with the other stacks (Figure~\ref{figure:results:particle-stacks:setup}).

Again, our simulation progresses through three distinct phases:
Initially, the particles aka cubes form stable clusters. 
They stiffly interact with their adjacent cubes only. 
In the second phase, cubes crash into cubes from other stacks or cubes from their own
stacks before they distribute over the floor.
In the third and final phase, the cubes ``roll'' over the floor before they
finally come to rest.

In our simulation setup, the ground plane is modelled by two $|\particleSet|=2$
triangles, and the cubes use $|\particleSet|=12$.
We scale the setup by increasing the number of rows of stacks, while we
always employ 20 cubes per stack and four stacks per row.

\begin{figure}[htb]
 \begin{center}
  \includegraphics[width=0.45\textwidth]{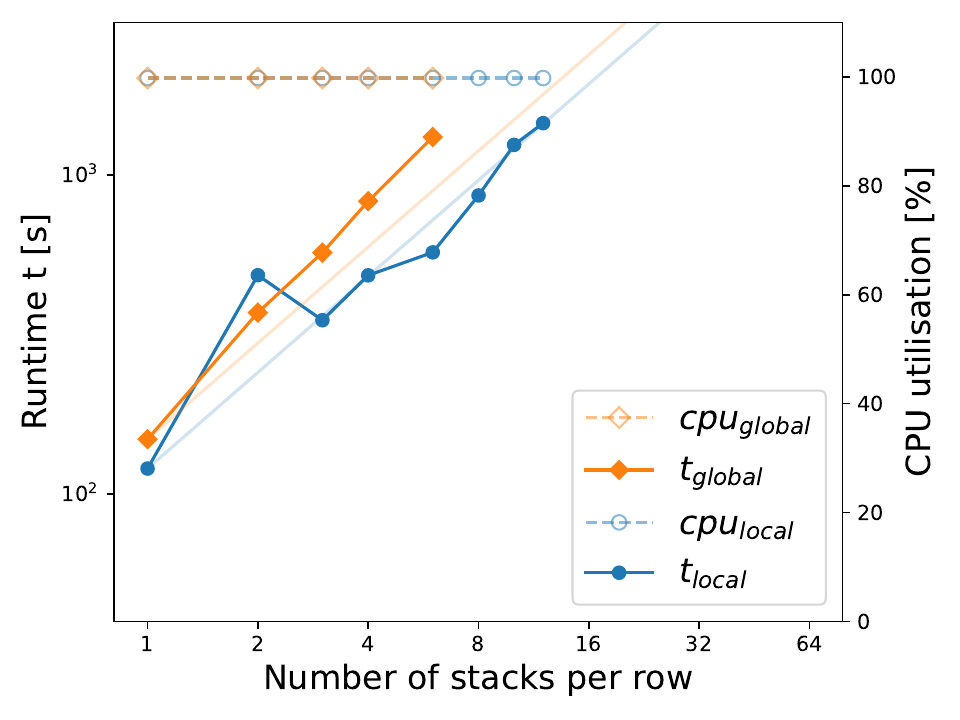}
  \includegraphics[width=0.45\textwidth]{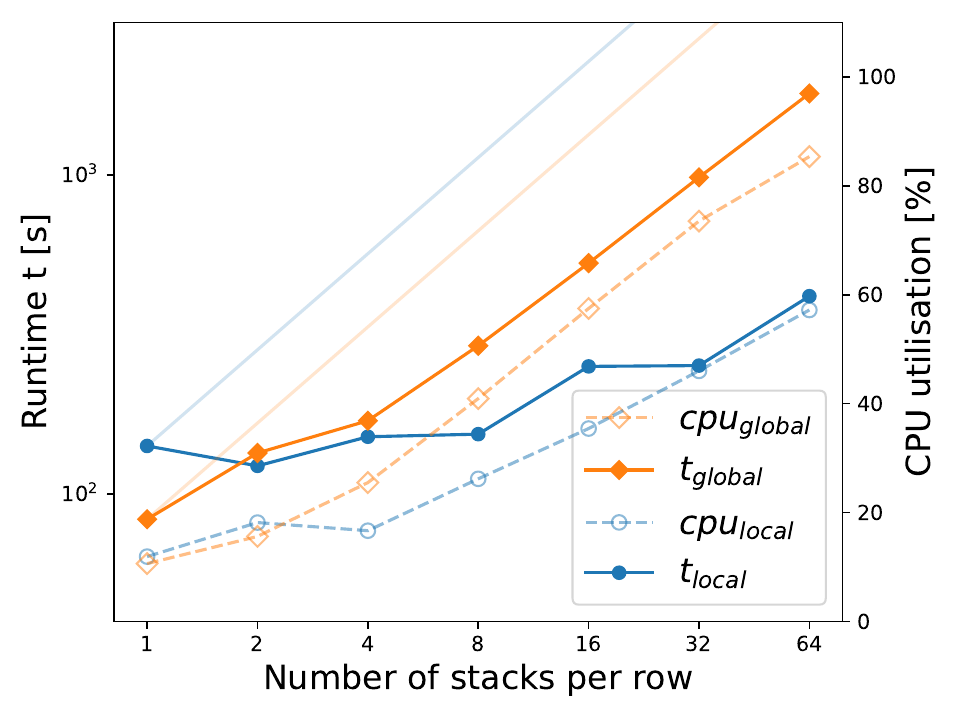}
 \end{center}
 \caption{
   Runtime and CPU utilisation for the collision of the particle/cube stacks on
   a single core (left) and the whole node (right). 
   \label{figure:particle-stacks:runtime}
 }
\end{figure}

\begin{observation}
 Except for small setups with very few stacks, local time stepping
 outperforms the global time stepping approach on a parallel computer.
\end{observation}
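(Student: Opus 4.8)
The statement is an empirical one, so the plan is to argue it the way one argues a performance result: write the wall-clock time of either scheme as (number of cluster updates performed over the whole run) $\times$ (average cost of a cluster update) $\div$ (effective parallel speedup), and then show that local time stepping improves the first factor dramatically in this scenario, loses on none of the others, and is beaten only on configurations too small to dilute its per-sweep bookkeeping. First I would pin down the cluster structure of the stacks setup across its three phases. While the towers stand, each is a small, stiff, self-contained cluster; after they topple, the debris disperses and a collapsed tower becomes a compact pile on the floor, and because the floor triangles are replicated per cluster, spatially separated piles land in different components of the pruned graph $c$. Hence the number of independent clusters grows roughly linearly with the number of rows of stacks. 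The stiffest events are the glancing impacts during collapse and settling---the resting contacts themselves being dropped from the time-step-size calculation by the semi-implicit model---and these are localised to a few clusters at a time.

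Second I would compare the total cluster-update counts. For the globally adaptive baseline, constraint (\ref{equation:cluster-masking:min-collision-time}) together with the damping in (\ref{equation:cluster-step:damped-particle-time-step-size}) drags \emph{every} cluster down to a creeping average of the globally minimal effective time step size; consequently the number of sweeps scales like the simulated time over that minimal step, and each sweep still advances all clusters, so the work grows like (number of clusters) $\times$ (global stiffness ratio). For the local scheme, the lemmas of Section~\ref{section:correctness} guarantee that active clusters keep time steps bounded away from zero and that every particle is nevertheless updated after finitely many sweeps, so a cluster is advanced only at the cadence its own geometry forces; summing over clusters, the small steps are contributed by the stiff piles \emph{in isolation}. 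The difference between these two counts is precisely what the time-step-size histograms and the cost-per-contact scatter (Figure~\ref{figure:particle-pairs:cluster-cost}, right) make visible, and it widens with problem size.

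Third, the parallel side. Both codes use the same task machinery, and the per-sweep cost is dominated by the time-step-size calculation and the narrowing, which are embarrassingly parallel over particle pairs and over triangle pairs in the surrogate task tree; cluster masking throttles only the comparatively cheap collision substep. So the local wall-clock is essentially (local work count) $\div$ (usable cores), which beats the global wall-clock once there are enough clusters to keep the cores busy---this is what Figure~\ref{figure:particle-stacks:runtime} should exhibit on the full node, and, to a weaker extent because there is no parallel amplification, already on a single core. The ``except for small setups'' hedge has two sources: with few rows the toppled debris and neighbouring piles coalesce into one or two large coupled clusters, so there is little concurrency and the cluster sizes stray from the two-particle optimum (as already seen for the pairs); and the fixed per-sweep overhead that only the local scheme pays---re-clustering every sweep, consolidation with its interpolated roll-backs, the scalar reductions---is not diluted by a small run.

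The step I expect to be the main obstacle is making the work-count comparison quantitative rather than merely directional: it requires the distribution of cluster sizes and lifetimes over the three phases, which is governed by the collapse dynamics and is not predictable a priori, so the honest route is to read it off the internal-state traces (maximal cluster size, active clusters per sweep, the time-step-size histograms) instead of deriving it. A secondary obstacle is separating genuine algorithmic gain from cache and management effects at the small-setup end---the same confound flagged in the particle-pairs discussion---which I would handle by demonstrating that the gap grows monotonically with particle count on both a single core and the whole node, and that the crossover point moves in the expected direction as cores are added.
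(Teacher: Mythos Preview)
The paper does not prove this Observation; it substantiates it empirically by presenting wall-clock measurements (Figure~\ref{figure:particle-stacks:runtime}) and then layers on qualitative commentary---tracing the three-phase cluster topology, noting the superlinear growth with particle count, and pointing out that the frequent cluster rearrangements and rollbacks in the middle phase cap scalability (Figure~\ref{figure:particle-stacks:scaling}). Your route is genuinely different: you build a cost model (updates $\times$ per-update cost $/$ speedup) and reason about each factor a priori. That buys a cleaner causal story and makes explicit where the gains originate, whereas the paper's account is descriptive and post hoc. Two places where your framework rubs against the paper's evidence: first, the claim that local time stepping ``loses on none'' of the per-update cost is not obviously safe---for the pairs scenario the paper shows that the global scheme, having already collapsed to tiny steps, needs \emph{less} narrowing per sweep than local (Figure~\ref{figure:particle-pairs:time-step-size}), and the same mechanism is in play for stacks; second, the paper pins the poor strong scaling squarely on the middle-phase topology churn and the induced rollbacks (``many particles either are associated with inactive clusters, may have to roll back in time, or may only advance with tiny time steps''), which you treat as a small-setup overhead rather than as the dominant scalability limiter at all sizes. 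Since you already concede that the quantitative work-count comparison must be read from the traces rather than derived, both approaches ultimately lean on the same experimental evidence; yours is a structured interpretation of it, not an independent argument.
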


\paragraph{Cluster topology, active clusters and concurrency level.}

As long as the particle stacks remain more or less ``intact'', i.e.~while they
tilt, we obtain around one cluster per stack.
This is due to the fact that the stationary floor is explicitly excluded from
the cluster construction.
Once stationary, we get many clusters consisting of few
particle, as particles end up scattered over the floor.
In both situations, local time stepping manages to advance the clusters
aggressively.

In-between these two phases, the cluster topology is quite dynamic.
It is this in-between phase, where the overall system is rather stiff and
employs small time step sizes.
However, individual subclusters of particle constellations might nevertheless
advance aggressively in time; notably when they have not yet crashed into other
pillars or have already come to rest.
Overall, we see the local time stepping outperform a global time stepping once
the particle count is sufficiently high, and we even have superlinear scaling in
the number of particles (Figure~\ref{figure:particle-stacks:runtime}).

\begin{figure}[htb]
 \begin{center}
  \includegraphics[width=0.45\textwidth]{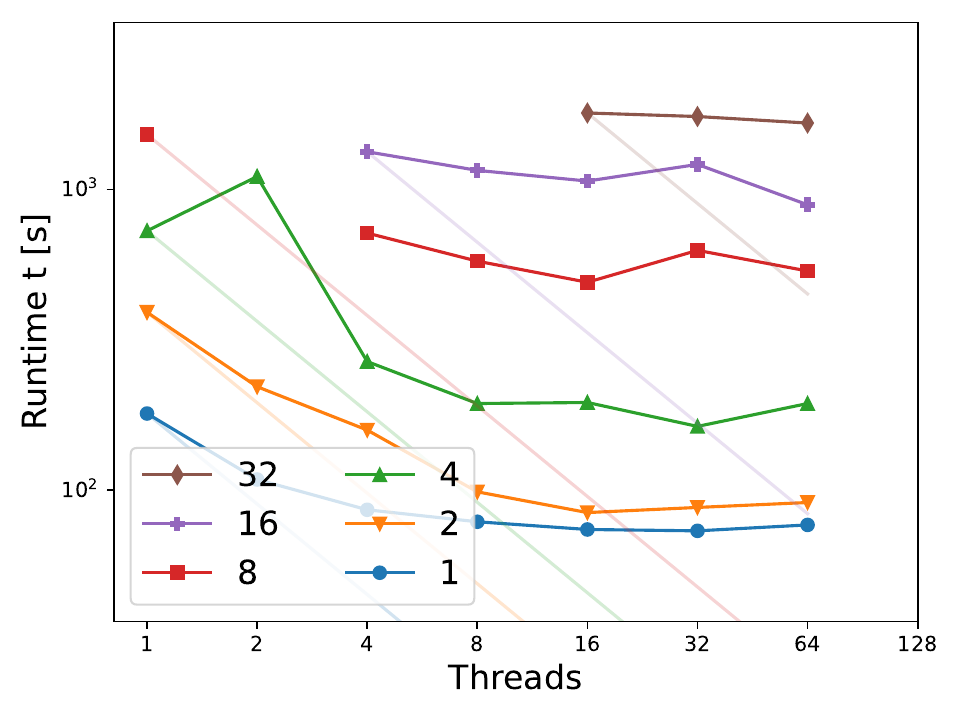}
 \end{center}
 \caption{
 Strong scaling curves for various numbers of rows of particle
 stacks.
  \label{figure:particle-stacks:scaling}
 }
\end{figure}

\paragraph{Time step size distribution.}
At the same time, this frequent change of cluster topologies implies that the
scalability is not particularly good.
This observation is independent of the total particle count
(Figure~\ref{figure:particle-stacks:scaling}).
At any point in the intermediate phase, many particles either are associated
with inactive clusters, may have to roll back in time, or may only advance with
tiny time steps.
Therefore, the algorithm struggles to benefit from additional cores.


\paragraph{Contact points and particle interactions.}
In the first and third simulation phase, particles on average experience either
one or two contact points.
The solution of the arising non-linear equation systems is rather
straightforward.
In the expensive middle phase, some particles face many stiff 
contacts within a single time span.
The arising equation systems become larger and
(potentially) require more iterations.
We end up with a linear correlation trend between the number of contact 
points and the cost to solve the arising system
(Figure~\ref{figure:particle-pairs:cluster-cost}).
The coincidence of expensive contact resolution with phases of low
concurrency explains the limited weak scalability.

\paragraph{The tower setup.}

\begin{figure}[htb]
 \begin{center}
  \includegraphics[width=0.24\textwidth]{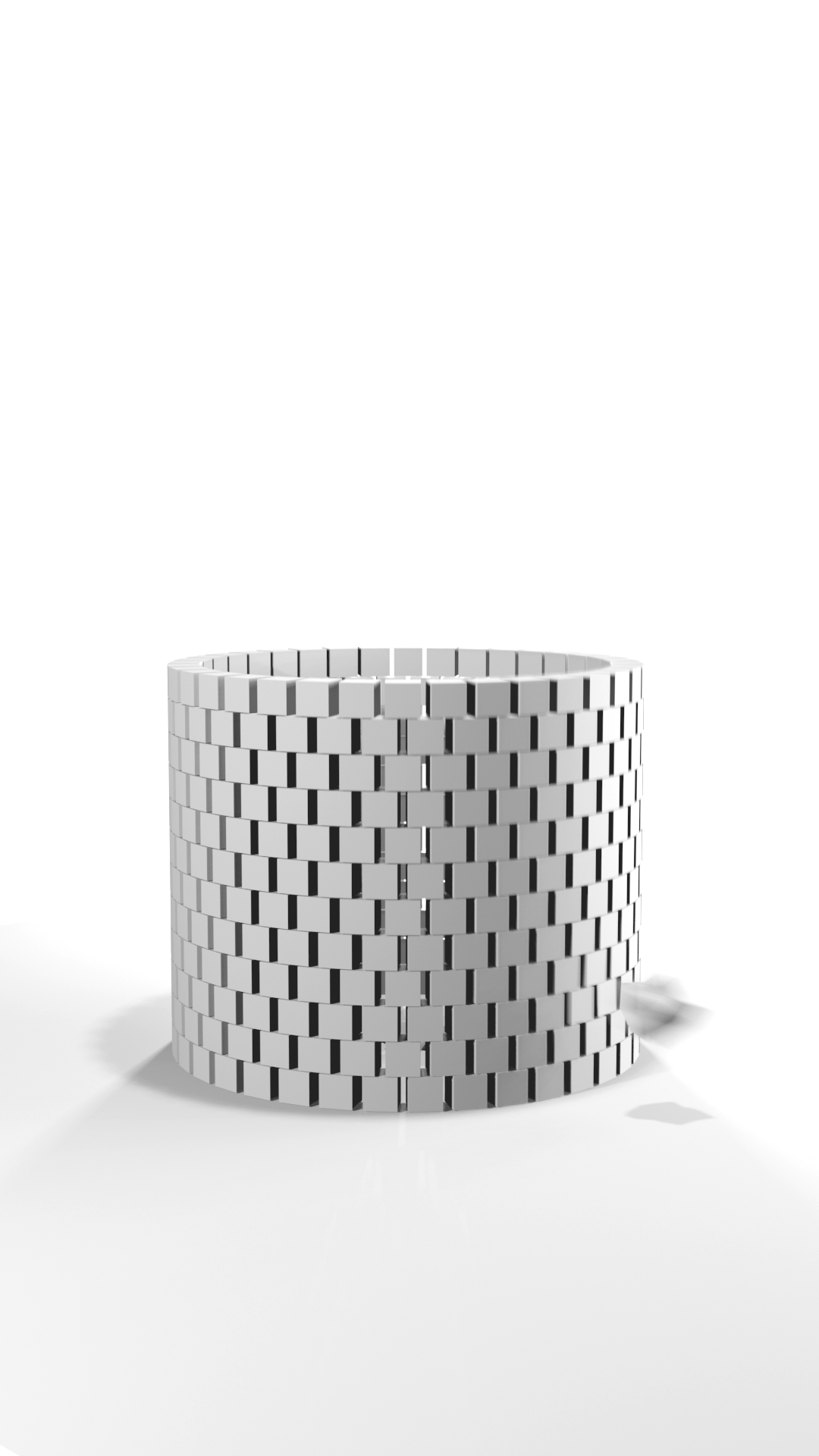}
  \includegraphics[width=0.24\textwidth]{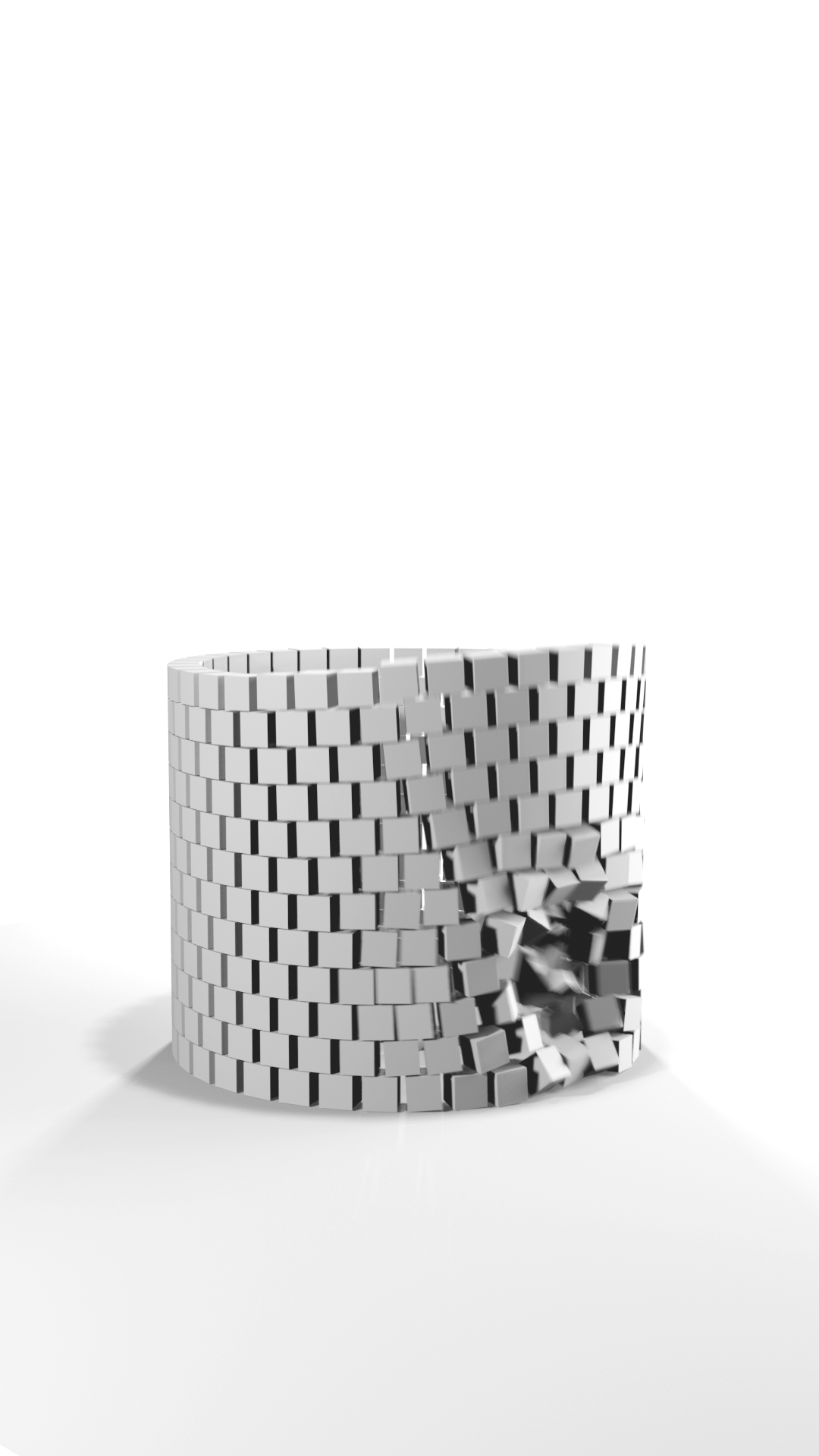}
  \includegraphics[width=0.24\textwidth]{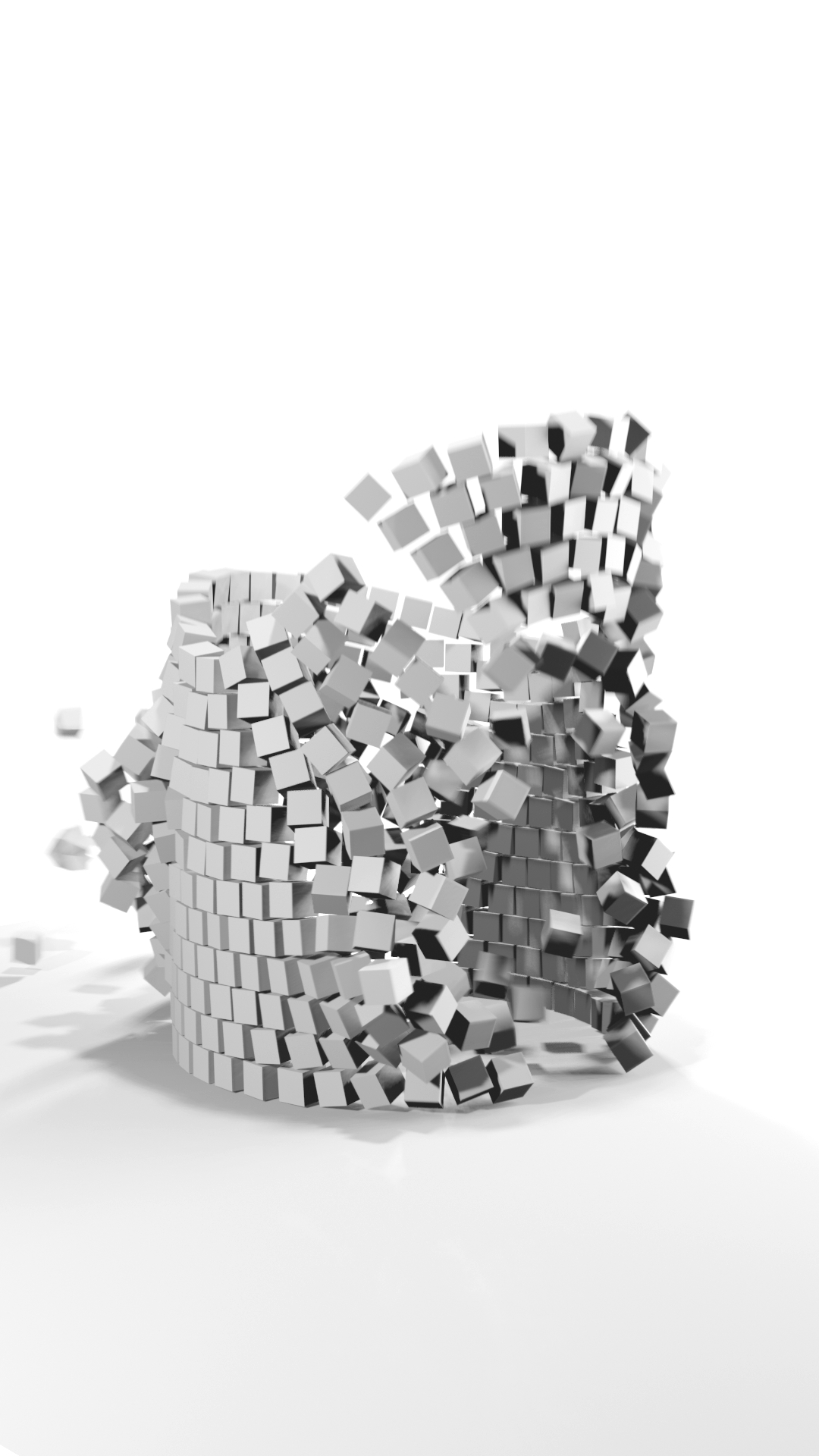}
  \includegraphics[width=0.24\textwidth]{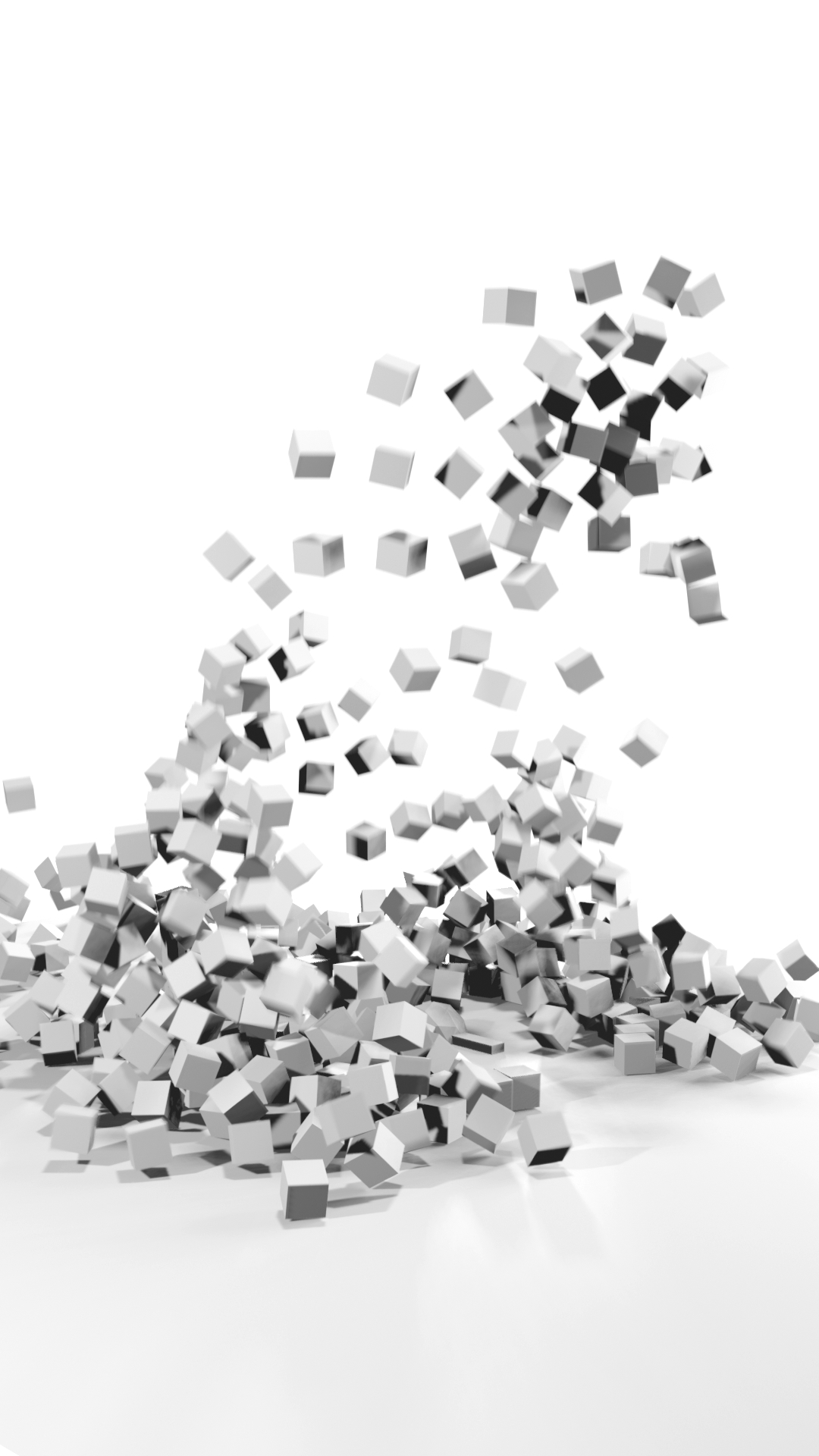}
 \end{center}
 \caption{
  The tower scenario from left to right:
  Rings of cubes form a steady tower, when a single sphere-like particle
  approaches its base with a high speed.
  The fast moving particle destroys the tower.
  \label{figure:results:particle-stacks:tower-setup}
 }
\end{figure}

We confirm our observations through the particle-tower setup, where we
arrange rings of 32 cubes into layers on top of each other. 
Similar to bricks, they yield a tower-like construction.
A small, heavy and very fast moving object crashes into the bottom of the tower
and makes it collapse (Figure~\ref{figure:results:particle-stacks:tower-setup}).

Initially, the tower is in a steady state.
Its bricks (particles) do not move.
As the projectile hits the tower, a shock travels through the structure as the
particles are incompressible.
Cubes in the tower use $|\triangleSet|=12$, the ground plane uses $|\triangleSet|=2$ and the
projectile uses $|\triangleSet|=80$.
We scale this experiment by increasing the height of the towers.

\begin{observation}
 The tower yields a computational character that is an extreme case of the
 stacks: Two clusters ``suddenly'' interact and the simple two-cluster topology
 is completely destroyed, before the simulation settles.
\end{observation}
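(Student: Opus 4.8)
The plan is to read the observation as the conjunction of three checkable claims --- before impact the configuration is exactly two clusters; the impact collapses this topology within a short burst of time steps; the debris then settles into the many-small-clusters regime already seen in the third phase of the stacks setup --- and to argue each from the clustering rules of Section~\ref{section:cluster-identification} and the contact model of Sections~\ref{section:time-step-size-calculation}--\ref{section:time-stepping}, backed where the statement is only qualitative by the same instrumented traces (cluster-size and active-cluster histograms) used for the stacks experiment.

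First I would establish the two-cluster initial state. In the steady tower every pair of adjacent cubes already has overlapping $\halo$-layers at $\snapshotCurrent$, so each such pair lands in case~2 of the time-step-size case distinction and contributes $\snapshotCollision = \timeMaxCurrent$; it is therefore removed from the effective time step size computation, which is exactly the mechanism behind the non-degeneration lemma. Moreover the broad phase extrapolates the resting cubes from zero velocity, so their space-time tubes are stationary and $c(p_1,p_2) = \bot$ for every pair of tower cubes; the pruned graph thus carries no edge inside the tower. The only moving object is the projectile, and initially it is close to only the bottom ring, so the single non-trivial cluster contains the projectile plus at most one ring --- the ``simple two-cluster topology'' (the static ground plane is replicated per cluster and by definition couples nothing).

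Next I would track the cluster count against the global step counter. Once the projectile's space-time tube overlaps the bottom ring, $c$ flips, the narrowing of Section~\ref{section:time-step-size-calculation} returns a tiny effective $\timeStepSize(\particleSet_i)$, and because the cubes are treated as incompressible the implicit velocity constraint of Section~\ref{section:time-stepping} propagates the impulse along the whole chain of resting base cubes within that step. Those cubes now carry nonzero velocity, so the next clustering draws in the ring above, and so on: the contact front advances by at least one ring per time step while \eqref{equation:cluster-step:damped-particle-time-step-size} keeps the involved particles on small steps. Within a number of steps proportional to the tower height the whole structure is in motion, and as soon as cubes separate with distinct velocities the one giant cluster fragments into many tiny ones by the connected-component argument of Section~\ref{section:cluster-identification}; this is the quantitative content of ``completely destroyed'', and the contrast with the stacks scenario --- where a slow tilt disperses the topology only gradually --- is what makes the tower an ``extreme case''. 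Finally, settling follows from the already-proven lemma that active time step sizes stay bounded away from zero under non-increasing kinetic energy: with Coulomb friction the scattered cubes cannot bounce indefinitely, the $\halo$-layer truncates any shrinking sequence of free paths, permanent floor contacts are removed from the time step size calculation, and Theorem~\ref{theorem:simulation-terminates} gives termination in a steady state that matches the third phase of the stacks run.

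The hard part will be making ``suddenly'' and ``completely destroyed'' genuinely quantitative. A fully rigorous statement --- say, the cluster count jumps from two to a number proportional to the tower height within $O(1)$ steps --- requires showing that the shock front really does advance at least one ring per time step, which rests on the interplay of the halo width $\halo$, the magnitude selected by the implicit impulse solve, and the damping in \eqref{equation:cluster-step:damped-particle-time-step-size}; in particular one must rule out that narrowing picks a step so small that the impulse has not yet reached the next ring. I expect the clean route is hybrid: prove the two-cluster initial state and the eventual settling rigorously as above, and support the intermediate collapse with the cluster-count and active-cluster traces, exactly the evidence already presented for the stacks scenario. A secondary subtlety is the bookkeeping of the replicated static floor, which must be handled so that it is never mistaken for a coupler between otherwise independent clusters.
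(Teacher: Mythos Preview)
The central gap is in your first step. You argue that because the tower cubes are at rest their space-time tubes are stationary and hence $c(p_1,p_2)=\bot$ for every pair, so ``the pruned graph thus carries no edge inside the tower.'' This inverts the rule of Section~\ref{section:cluster-identification}: the broad phase sets $c=\bot$ only when the space-time bounding volumes do \emph{not} overlap. Two adjacent resting cubes whose $\halo$-layers already touch have overlapping bounding boxes and overlapping bounding spheres regardless of velocity, so $c(p_1,p_2)=\top$ for every adjacent pair. The tower is therefore a single connected component from the outset, and that is exactly what makes it one of the two clusters in the observation. Your own opening sentence (adjacent cubes have overlapping $\halo$-layers, hence case~2) directly contradicts the next one (no edges inside the tower); the first is right, the second is not.

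The error propagates into your middle phase. There is no ring-by-ring advance of the clustering, because the whole tower is already one cluster when the projectile's tube first overlaps the bottom ring; the projectile simply joins that cluster, the implicit impulse solve of Section~\ref{section:time-stepping} acts on every contact in the tower within the same step, and the subsequent fragmentation comes from cubes acquiring distinct velocities and separating spatially, not from a front sweeping upward through $c$. Finally, note that the paper does not attempt a derivation here at all: this is an Observation, and the paper's own support is the runtime and scaling data of Figure~\ref{figure:results:particle-stacks:tower:global_vs_local_towers} together with a short qualitative reading of it. Your plan to lean on instrumented traces for ``suddenly'' and ``completely destroyed'' is therefore the right mode of argument; what needs repair is the account of why the pre-impact state is two clusters rather than many.
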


\noindent
With a dense packing of the tower bricks, the setup is either (almost) at rest
or the shock has separated off particles already.
Compared to the particle stacks, we have a simpler topology, and this
topology remains intact over a longer time span.
Therefore, the CPU usage of the local time stepping follows the global one, and
we get close to no scalability
(Figure~\ref{figure:results:particle-stacks:tower:global_vs_local_towers}).
As we only enrich one cluster with more and more particles, we see a linear
increase of the compute cost for both local and global time stepping. 
For reasonably large setups, we gain performance once we add a second
memory controller; which is stereotypical for bandwidth-bound codes.

\begin{figure}[htb]
 \begin{center}
  \includegraphics[width=0.45\textwidth]{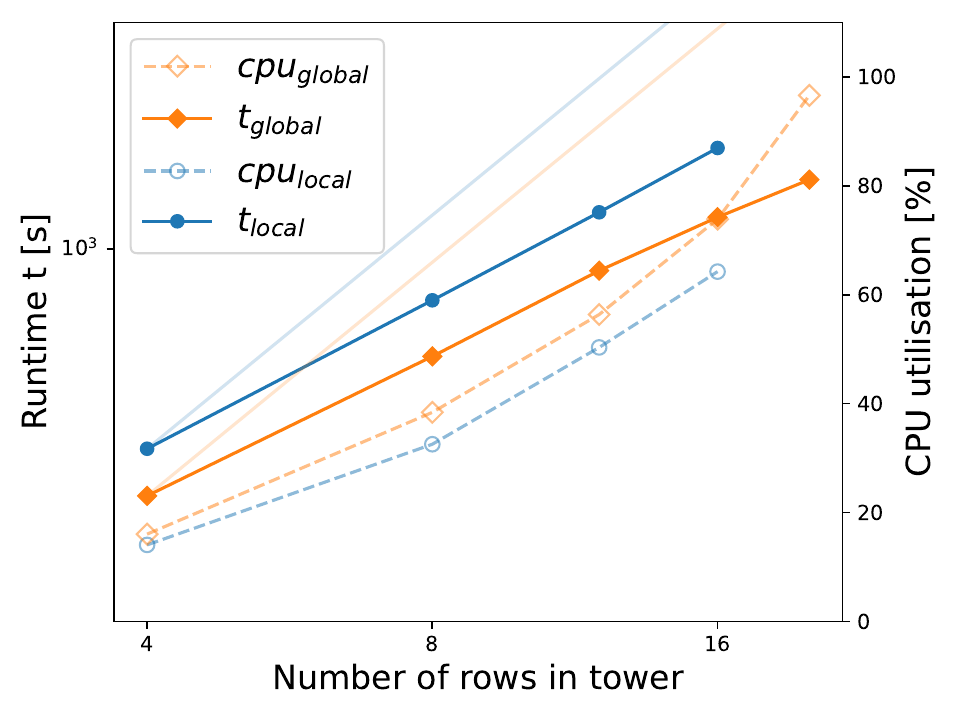}
  \includegraphics[width=0.45\textwidth]{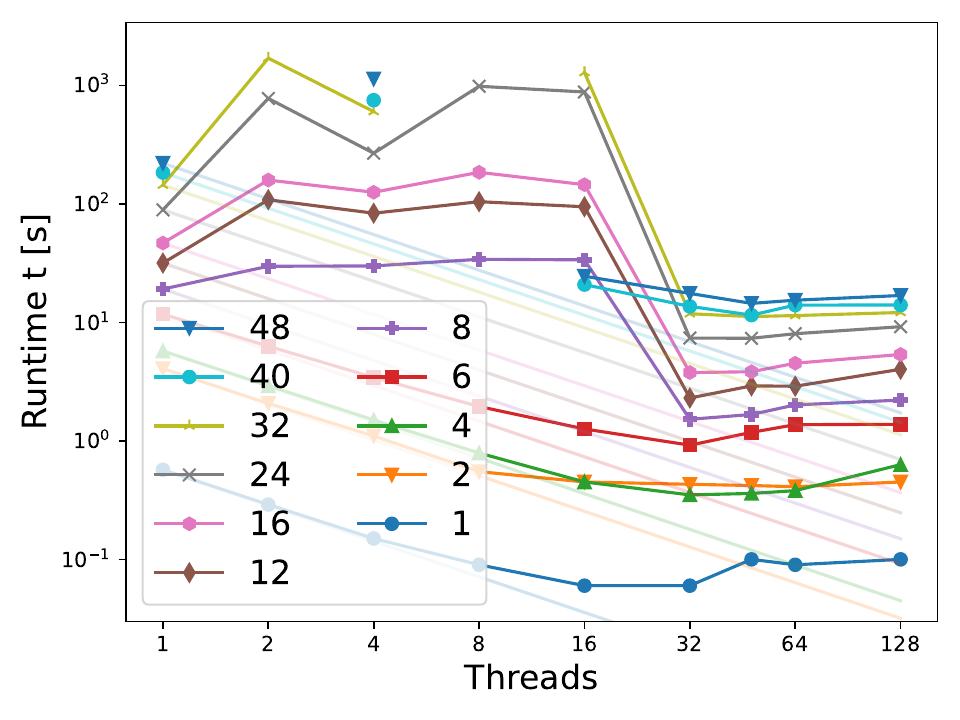}
 \end{center}
 \caption{
 Left: Benchmarking of local time stepping vs.~global time stepping on a whole
 node.
 Right: Various strong scaling measurements for the tower setup.
 Additional particles result from additional layers (rings) on top of the tower.
  \label{figure:results:particle-stacks:tower:global_vs_local_towers}
 }
\end{figure}

Just after the single projectile particle has hit the tower, we obtain very
small time steps and the tower's cubes hence decompose into various clusters.
At this point, clusters start to advance with slightly differing time steps.
However, the geometric topology changes again, particles are rolled back and we
hence do, compute time wisely, not benefit from the arising concurrency.

\paragraph{Discussion}
Our algorithmic design is guided by the geometric identification of clusters.
If such a cluster topology ceases to exist, we can not exploit a spatial
decomposition anymore to obtain a scaling code.
We also loose the opportunity to march different particles in time at different
speeds.
It is reasonable to assume that composites of many particles require a different
algorithmic approach compared to the techniques presented in the present paper.
Notably, it makes sense in such a case to work with larger, globalised clusters
and hence to avoid any overhead imposed through local time stepping.

\subsection{The hopper and the staircase}

\begin{figure}[htb]
 \begin{center}
  \includegraphics[width=0.24\textwidth]{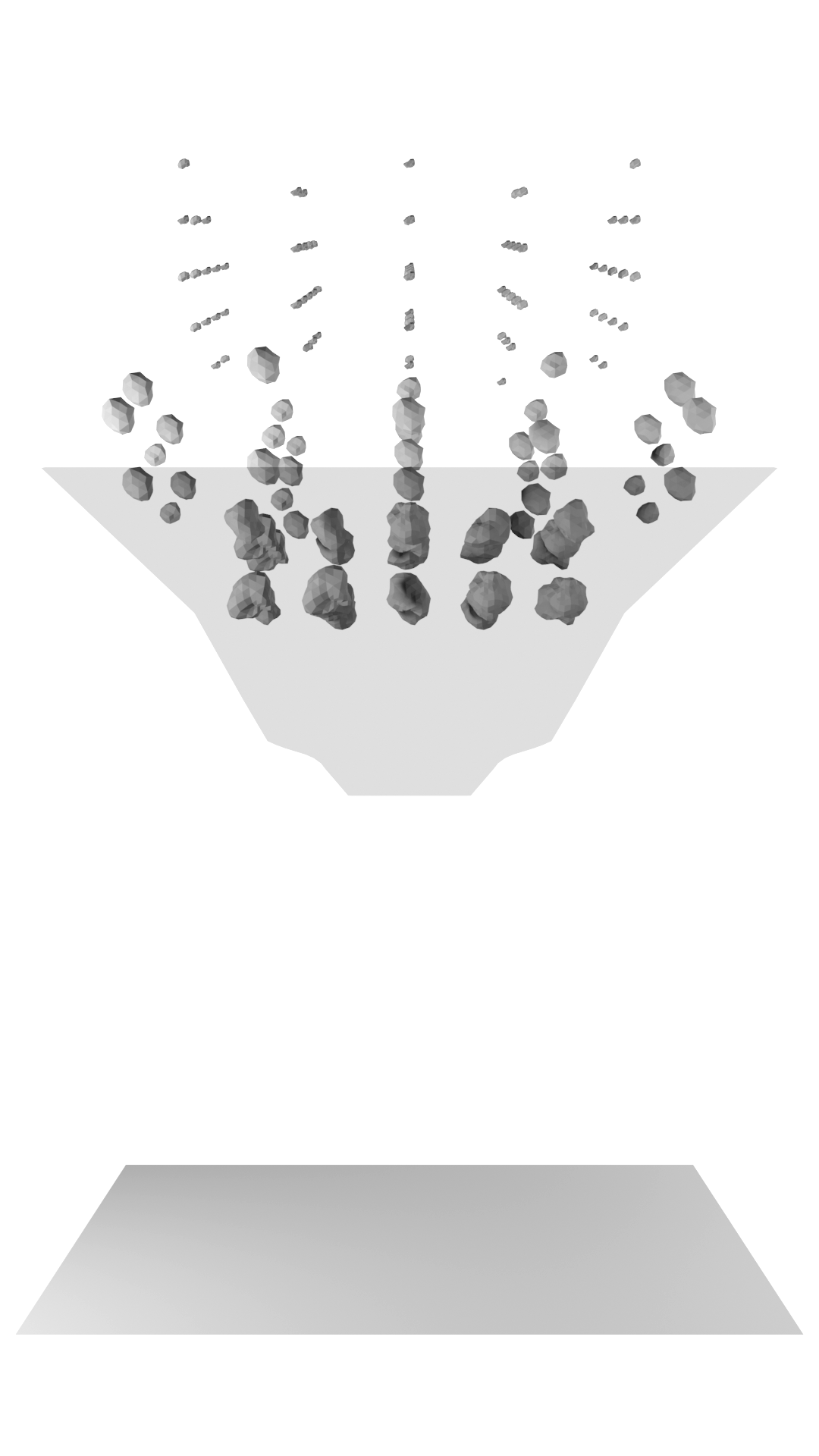}
  \includegraphics[width=0.24\textwidth]{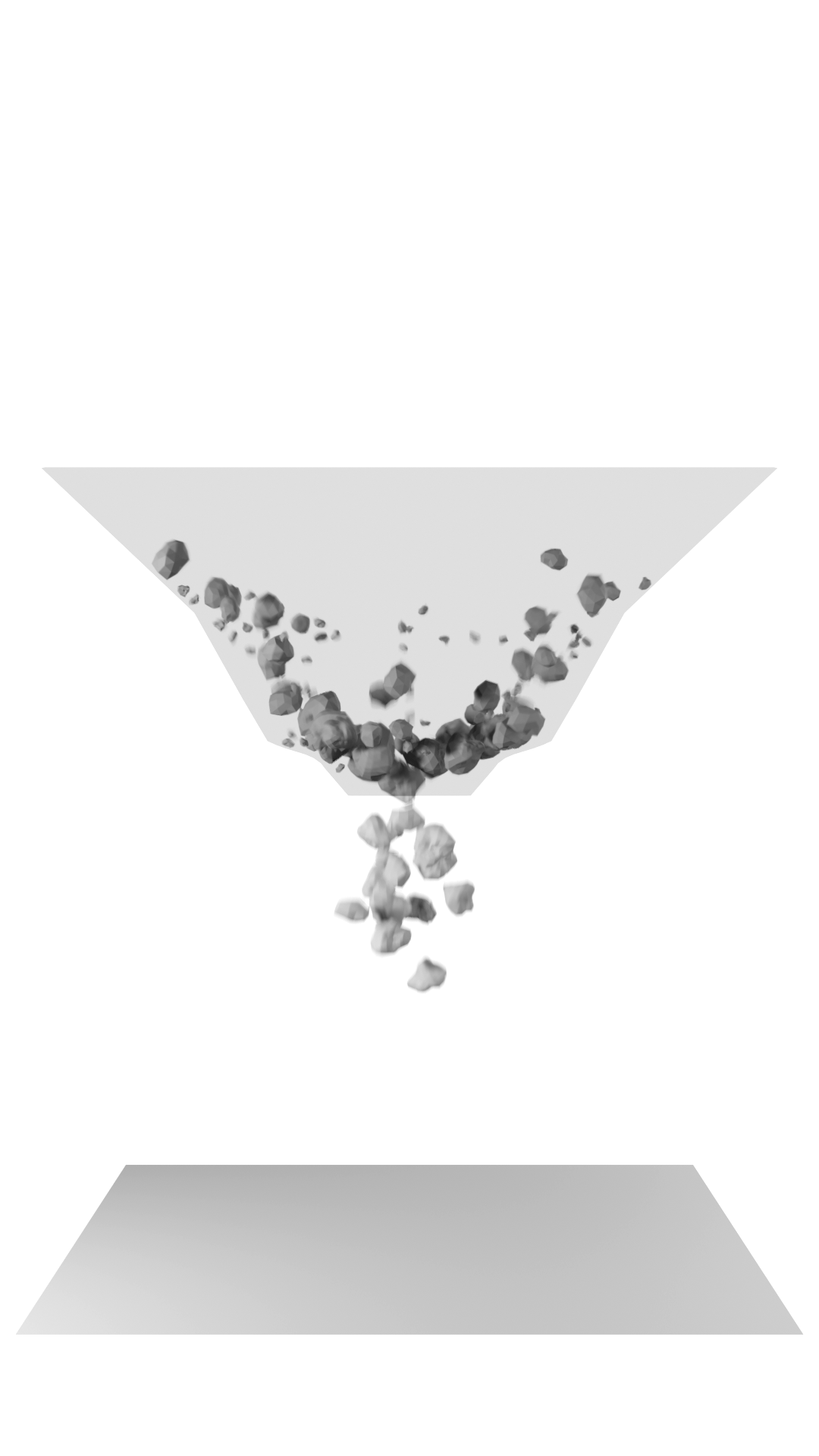}
  \includegraphics[width=0.24\textwidth]{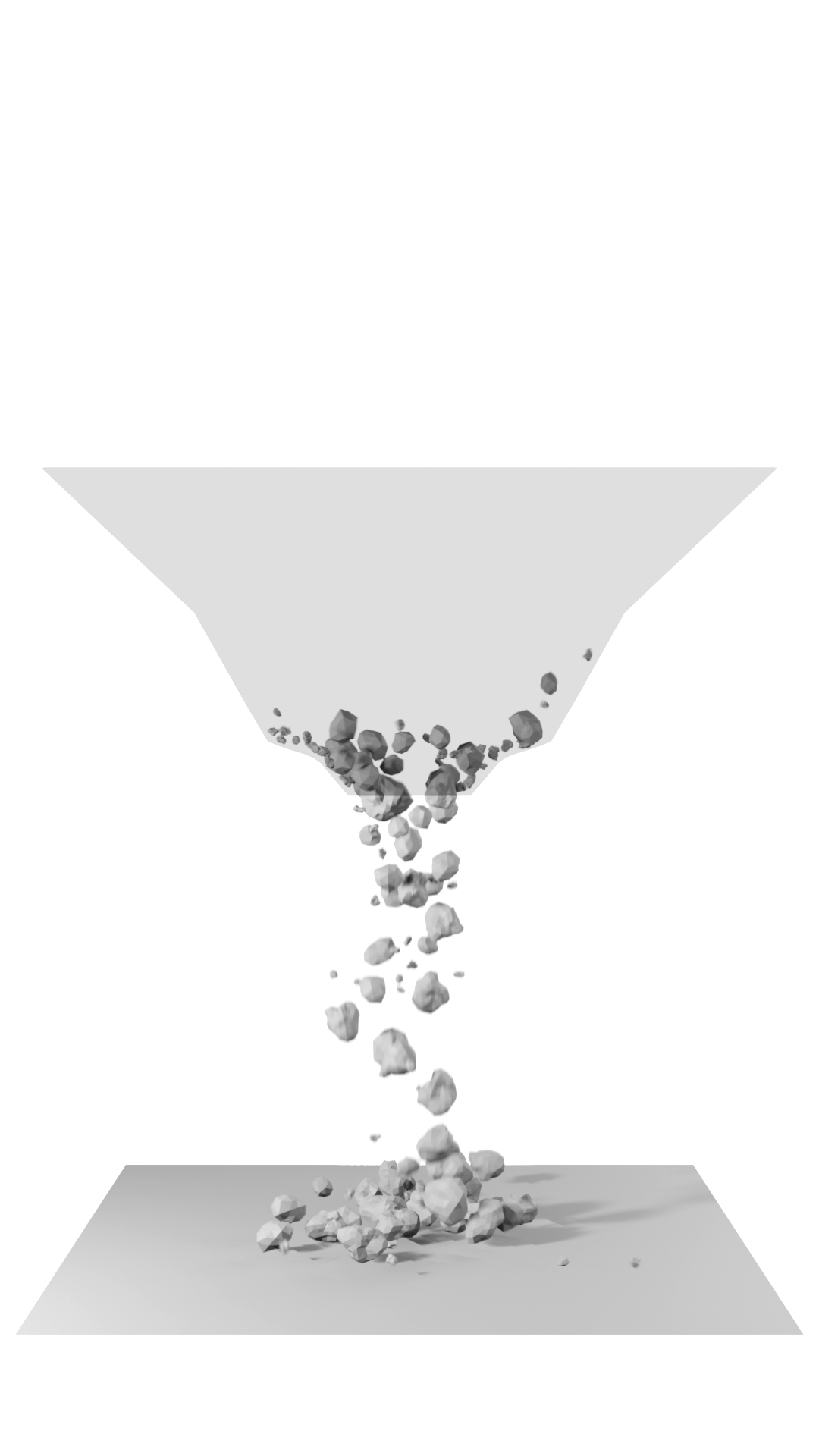}
  \includegraphics[width=0.24\textwidth]{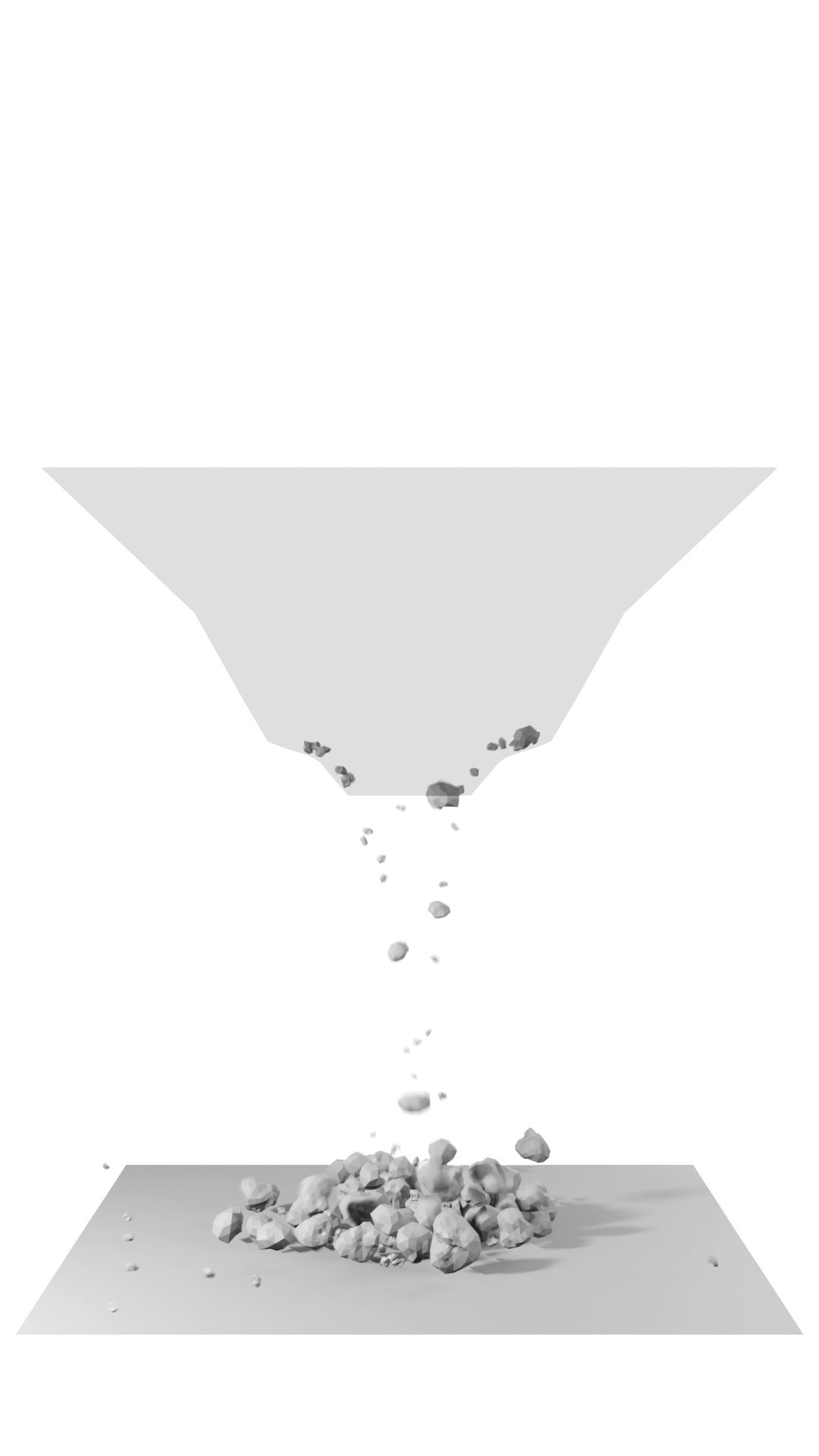}
 \end{center}
 \caption{
  Hopper setup from left to right:
  The initial state with particles are released above a hopper.
  They drop into the hopper where they block each other.
  Eventually, the hopper empties to create a pile directly underneath.
  \label{figure:results:hopper:setup}
 }
\end{figure}

We close our experimental studies with two sophisticated setups:
In our hopper scenario,
a collection of particles is dropped
through a hopper and comes to rest in a pile on a plane underneath.
Dropped particles begin in free fall with no interaction with surrounding
particles before they enter the hopper.
As they bounce from the hopper walls, and as more and more particles fall into
the mouth of the hopper over time, they interact both with each other and the
hopper's walls (Figure~\ref{figure:results:hopper:setup}).
Our hopper model consists of $|\particleSet|=1,856$ triangles and its opening at
the bottom can accommodate around four of the largest particles at the same
time.

\begin{figure}[htb]
 \begin{center}
  \includegraphics[width=0.24\textwidth]{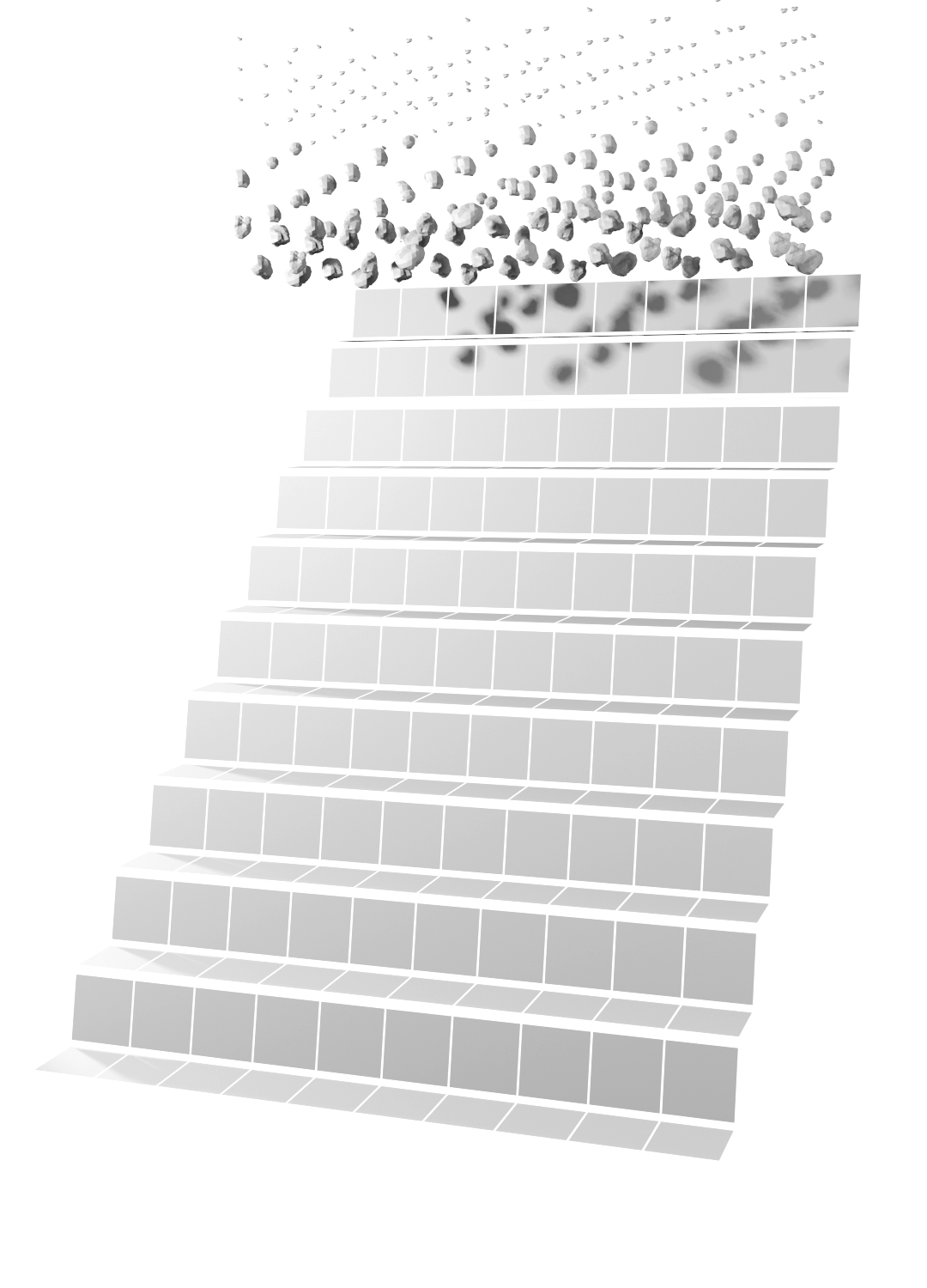}
  \includegraphics[width=0.24\textwidth]{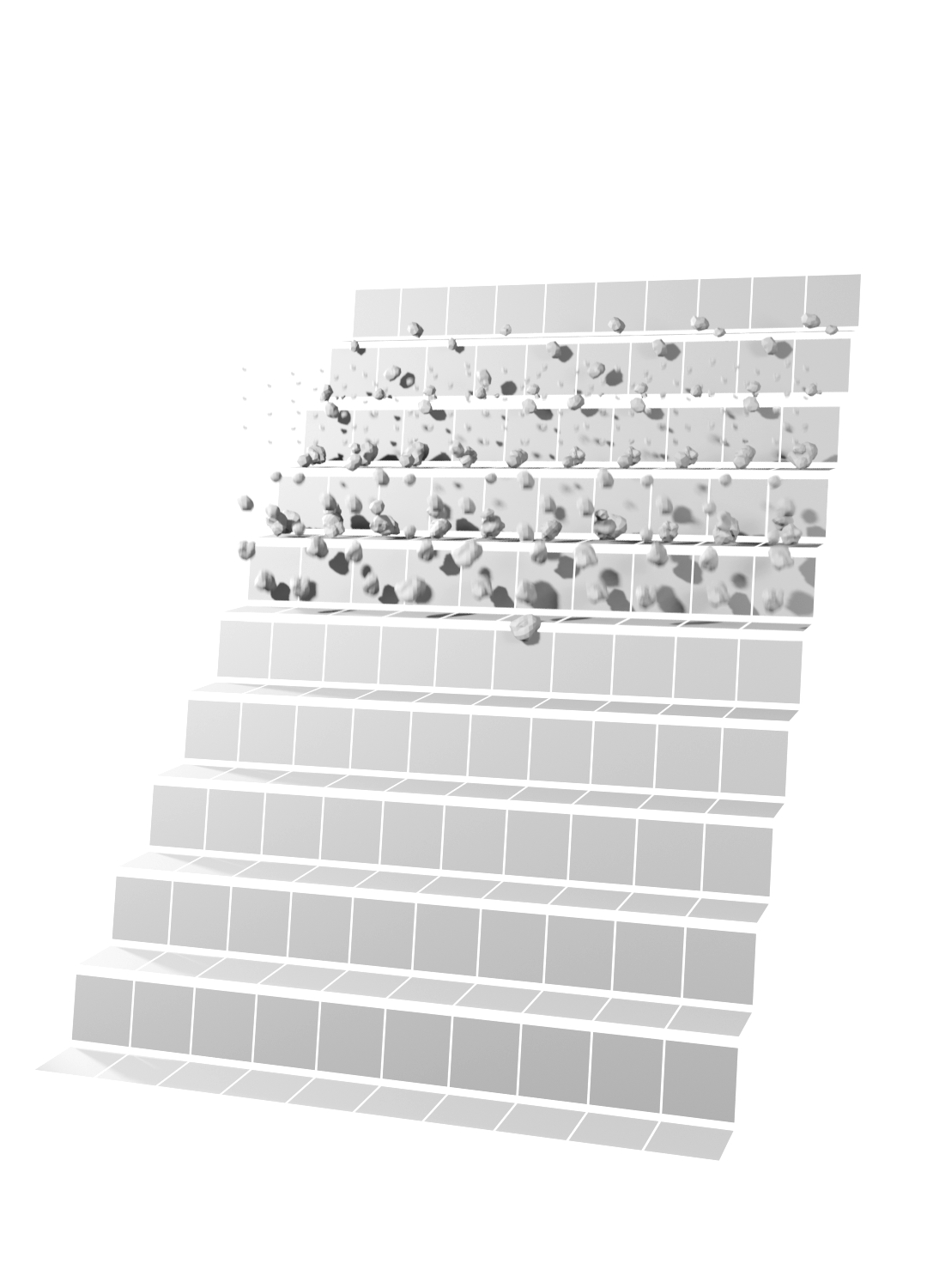}
  \includegraphics[width=0.24\textwidth]{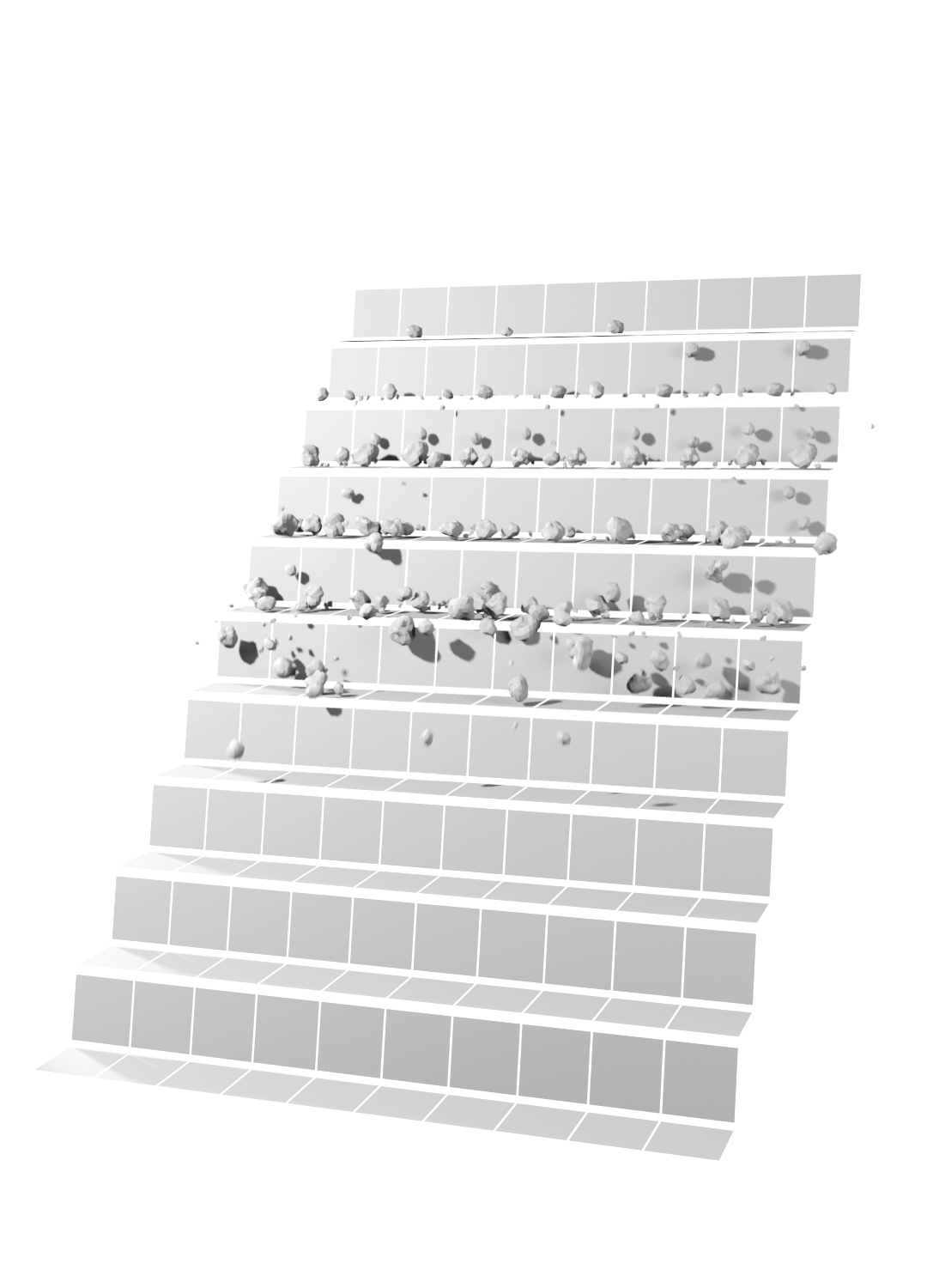}
  \includegraphics[width=0.24\textwidth]{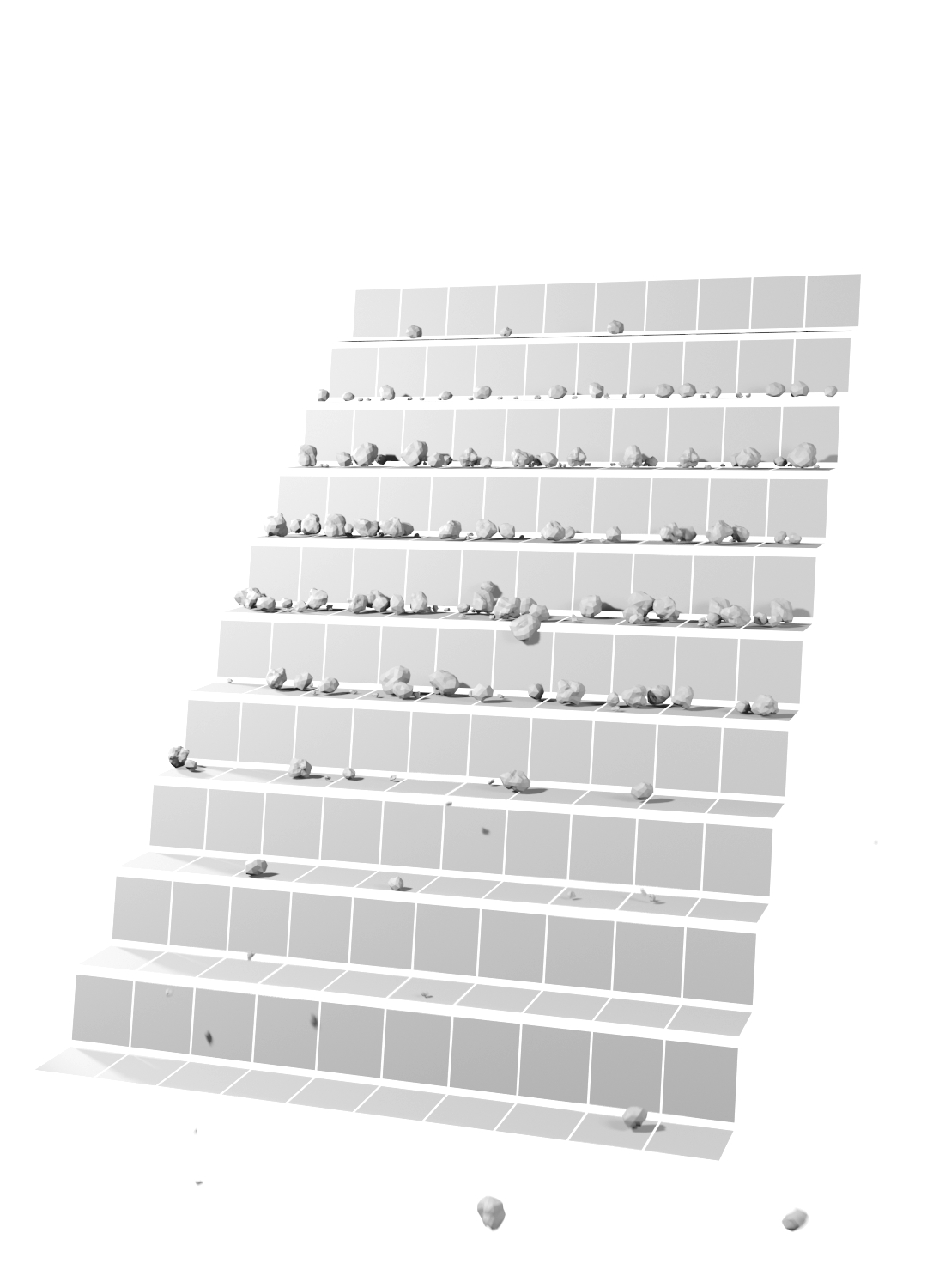}
 \end{center}
 \caption{
  Particles tumble down a staircase.
  \label{figure:results:hopper:staircase-setup}
 }
\end{figure}

In our staircase setup, we let the particles hop down a staircase (Figure~\ref{figure:results:hopper:staircase-setup}).
The staircase is made up of 20 planes, each consisting of two triangles.
Whenever particles hit the staircase, they either come to rest at their step or
continue toppling downwards.
They can hit further particles and knock them off their respective
steps where they might have come to rest already.
Eventually, all particles are either settled on the flat steps
or have fallen over the side or bottom.

In both scenarios, we use sphere-like, triangulated particle shapes.
They result from a triangulation of a sphere subject to
random, hierarchical noise:
We decompose the sphere equidistantly into
$|\mathbb{T}|$ triangles.
The distance of the triangles' vertices on the sphere is 
added Perlin noise, which offsets the vertex along the normal direction
of the surface.

If $r$ is the radius of the original sphere, we end up with distances from 
$
[r,r \cdot \eta _r]
$, i.e.~$\eta _r=1$ adds no noise and thus yields a perfect, triangulated sphere.
The higher $\eta _r $ the more ``degenerated'' the particle featuring
both low and high frequency geometry distortions on the surface.
Manipulating $r_{\text{min}} \leq r \leq r_{\text{max}}$ and $\eta _r$, we 
randomise the size and shape distribution of the particles \cite{multires2022}.

The quasi-spherical particles range from small grains ($|\particleSet|=80$ and
a radius of $r=0.025$) to larger rocks ($|\particleSet|=320$ with a radius of
$r=1$).
We run the simulation only for the time until the last particle has hit the 
hopper or all particles come to rest on the staircase or have fallen off the
last step.

\begin{figure}[htb]
 \begin{center}
  \includegraphics[width=0.45\textwidth]{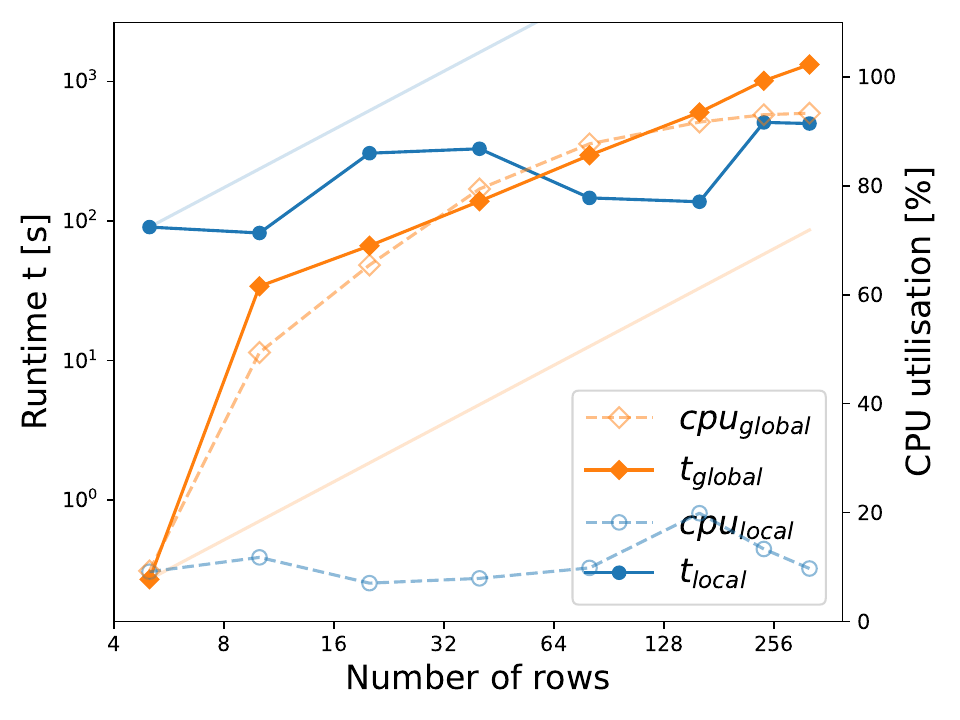}
  \includegraphics[width=0.45\textwidth]{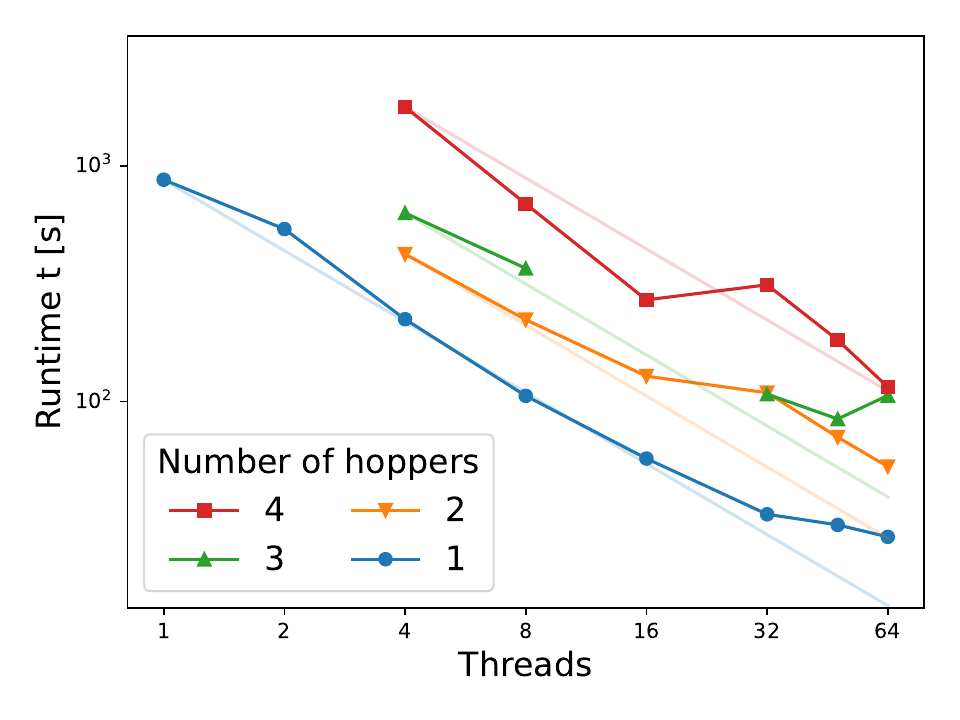}
 \end{center}
 \caption{
 Runtime and CPU utilisation for the staircase setup on the whole node (left)
 and strong scaling curves for multiple copies of the hopper setup (right).
 The right figure presents data for various numbers of particle, whereas
 the left one studies only the smallest particle count.
  \label{figure:results:staircase:runtime}
 }
\end{figure}

\begin{observation}
 For both setups, the local time stepping struggles to keep pace with global
 time stepping unless the setups grow large. 
 Strong and weak scalability are good. 
\end{observation}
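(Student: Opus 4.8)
The plan is to argue this observation empirically, in the same style as the preceding scaling observations: instrument the code, run the hopper and staircase scenarios across a spread of particle counts and core counts, and read the crossover behaviour off the runtime and CPU-utilisation traces. First I would pin down the ``struggles for small setups'' half. In a small hopper or on a short staircase, most of the simulated time is spent with only a handful of particles in free fall or bouncing against static geometry, so the pruned collision graph $c$ from Section~\ref{section:cluster-identification} has very few disconnected components and the reduction (\ref{equation:cluster-masking:min-collision-time}) leaves essentially one active cluster per step. In that regime the local scheme pays the full cost of clustering, consolidation and narrowing --- which Figure~\ref{figure:particle-pairs:scaling} already shows dominates the per-step work --- without any concurrency payoff, whereas the global baseline shares the same machinery minus the per-cluster time stamps and extra bookkeeping. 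I would demonstrate this by plotting single-core runtime of local versus global against $|\particleSet|$ and identifying the range of sizes where the local curve sits above the global one.

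Second, I would establish the ``catches up once large'' half. As hoppers are replicated or particles are added to the staircase, the number of simultaneously independent particle groups grows --- grains that have cleared the hopper mouth, particles at rest on different steps, clumps that have tumbled off --- so pruning produces many components, the masking in (\ref{equation:cluster-masking:min-collision-time}) admits more active clusters, and the per-step concurrency rises while the amortised overhead per particle stays bounded. The claim is then that beyond some size the local scheme's lower total step count (visible in a time-step-size histogram analogous to Figure~\ref{figure:particle-pairs:time-step-size}, with a fanned-out distribution and many cluster updates at the maximal $\timeStepSize$) outweighs its per-step overhead. Strong scalability I would read directly from strong-scaling curves on a fixed problem size (Figure~\ref{figure:results:staircase:runtime}), checking that they stay close to the ideal slope up to the node's core count; weak scalability by confirming that runtime remains roughly flat as particle count and core count are scaled together, which follows from the near-linear growth of the active-cluster count observed already for the particle pairs.

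The main obstacle is that this is an empirical rather than a combinatorial statement, so the real work is controlling confounders: NUMA placement on the two-socket EPYC, cache residency as $|\particleSet|$ grows, and the fact that TBB work stealing masks load imbalance in ways that can make the scaling curves look better or worse than the algorithm deserves. I would guard against this by pinning threads, reporting both per-socket and whole-node numbers, and breaking runtime down by algorithmic phase (as in Figure~\ref{figure:particle-pairs:scaling} and Figure~\ref{figure:particle-pairs:runtime}) so that ``good scalability'' is credited to genuinely parallel phases rather than to serial phases that merely happen to be cheap at the measured sizes. The secondary difficulty is pinning the crossover point robustly --- where the local and global curves run close together I would repeat the runs and quote the range of sizes over which the ordering flips rather than a single threshold.
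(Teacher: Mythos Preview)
Your empirical plan is sound and matches the paper's style: this observation is justified by runtime traces, scaling curves and time-step-size histograms, and you correctly reach for Figure~\ref{figure:results:staircase:runtime} and a fanned-out histogram analogous to Figure~\ref{figure:particle-pairs:time-step-size}. Your methodological care about NUMA pinning, phase breakdowns and repeated runs near the crossover goes beyond what the paper actually reports.

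Where you and the paper part company is the \emph{mechanism} behind the small-setup shortfall. You attribute it to too few disconnected components in $c$ --- essentially one active cluster per step, hence no concurrency to harvest against the fixed overhead. The paper's account is almost the opposite: both scenarios start with roughly $|\particleSet|$ clusters (one per free-falling particle), so cluster count is not the initial bottleneck. Rather, once particles pile up in the hopper mouth or on the steps, the cluster topology changes continually and the consolidation of Section~\ref{section:cluster-consolidation} repeatedly rolls particles back; it is this recluster-and-rollback overhead that must be amortised, and only a large $|\particleSet|$ supplies enough clusters still advancing with large $\timeStepSize(\particleSet_i)$ to offset it. The paper also qualifies the scalability claim more than you do: the free-fall phases scale, but the computationally dominant middle phase --- where the ``anarchic rearrangement of clusters'' happens --- does not scale well and leaves the strong-scaling curves visibly non-smooth. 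Your plan would still produce correct data, but the causal story you would read off it is not quite the one you anticipate, and your uniform ``good scalability'' should be tempered to a phase-dependent statement.
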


\paragraph{Cluster topology, active clusters and concurrency level.}

Both simulations kick off with around $|\particleSet |$ clusters.
As
the particles assemble within the hopper or pile up on the staircase, we obtain
larger clusters.
This is counteracted by small step sizes from complex interactions reducing the
possible interaction range of any single particle in a given step.
A global time stepping scheme hence suffers from the stiffness of individual
particle interactions, whereas the local time stepping always manages to advance
some clusters aggressively forward in time. 
Nevertheless, the permanent topology changes and rollbacks imply that the local
time stepping's overhead is very big.
We have to simulate a significant number of particles to make the local time
stepping overtake its global cousin
(Figure~\ref{figure:results:staircase:runtime}).

The free fall ahead of any particle hitting the geometry, or a free fall once
the particles leave the hopper resembles the behaviour in previous studies.
These phases scale.
The computationally expensive middle phase does not scale that well, and our 
scaling curves become non-smooth, reflecting the anarchic rearrangement of
clusters.
The algorithm's concurrency changes permanently.

\paragraph{Time step size distribution.}

The free fall prior to any particle hitting the obstacle allows the 
particles to advance at the same pace independent of local or global time
stepping.
As all particles start with a hard-coded initial downfall velocity (we omit
gravity), we can predict a reasonable initial $\timeStepSize$.

\begin{figure}[htb]
 \begin{center}
  \includegraphics[width=0.45\textwidth]{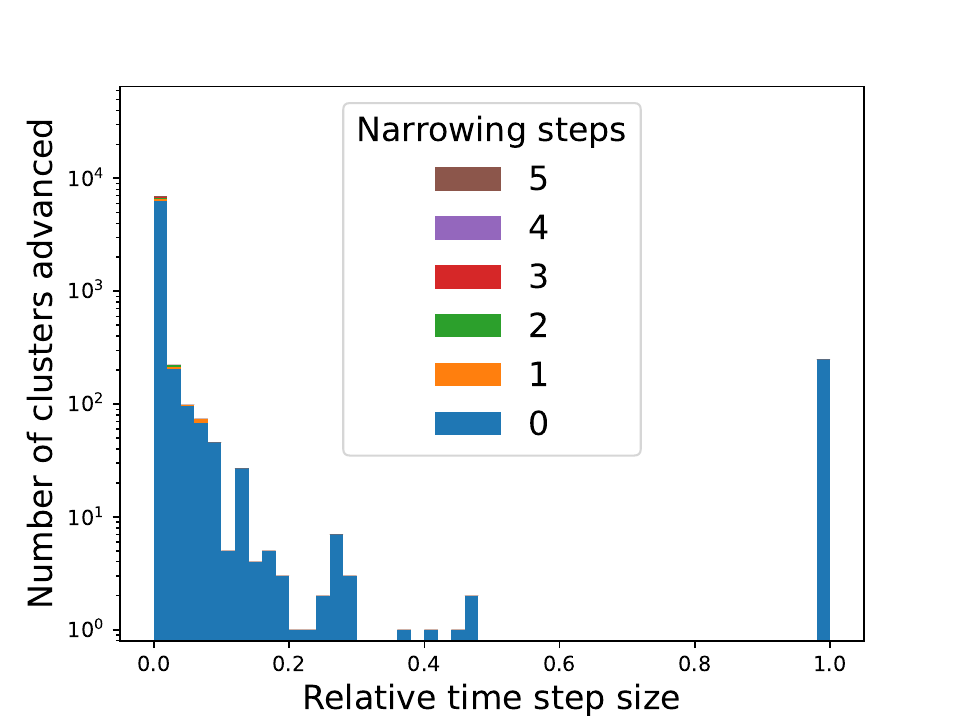}
  \includegraphics[width=0.45\textwidth]{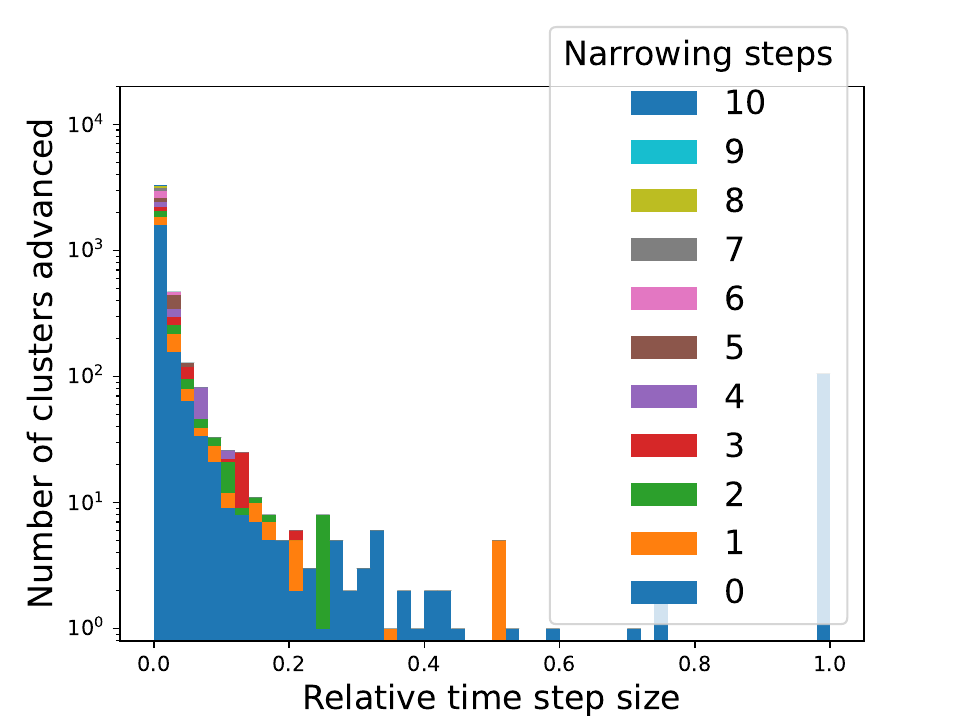}
 \end{center}
 \caption{
   Time step size distribution for the hopper (left) and the staircase (right)
   setup using local time stepping.
  \label{figure:particle-hopper:time_step_size}
 }
\end{figure}

Once the particles hit the hopper or staircase, we get a more complicated time
step size distribution (Figure~\ref{figure:particle-hopper:time_step_size}):
Clusters advance with different time step sizes, while notably the
staircase enforces quite a lot of narrowing. 
We obtain huge clusters at time, where we have to try out smaller and
smaller time step sizes, before the algorithm finds an effective
time step size.

\begin{figure}[htb]
 \begin{center}
  \includegraphics[width=0.45\textwidth]{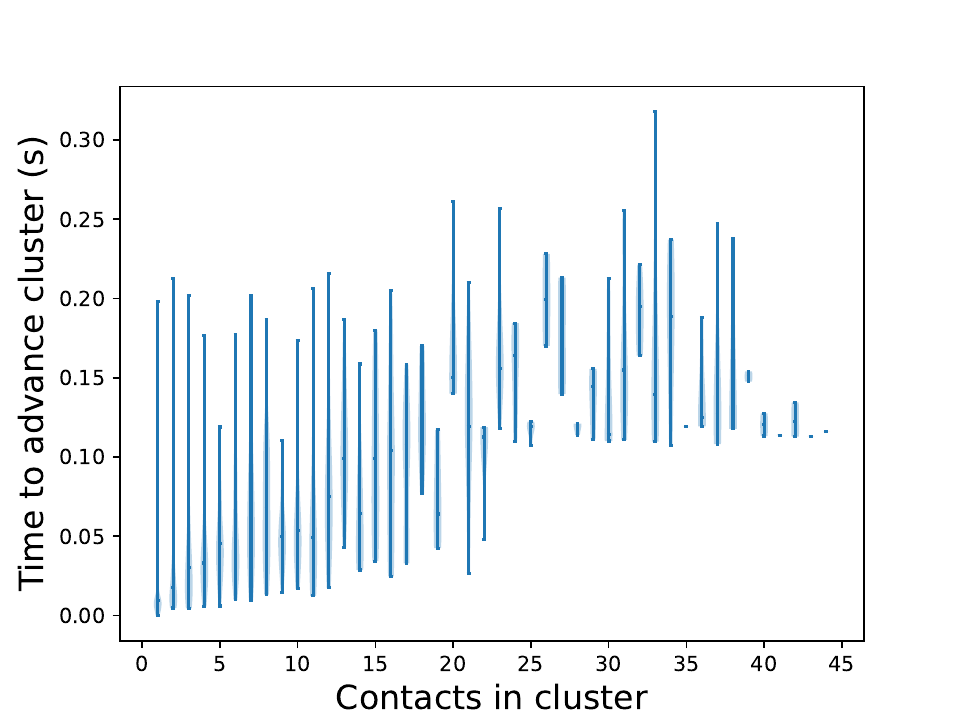}
  \includegraphics[width=0.45\textwidth]{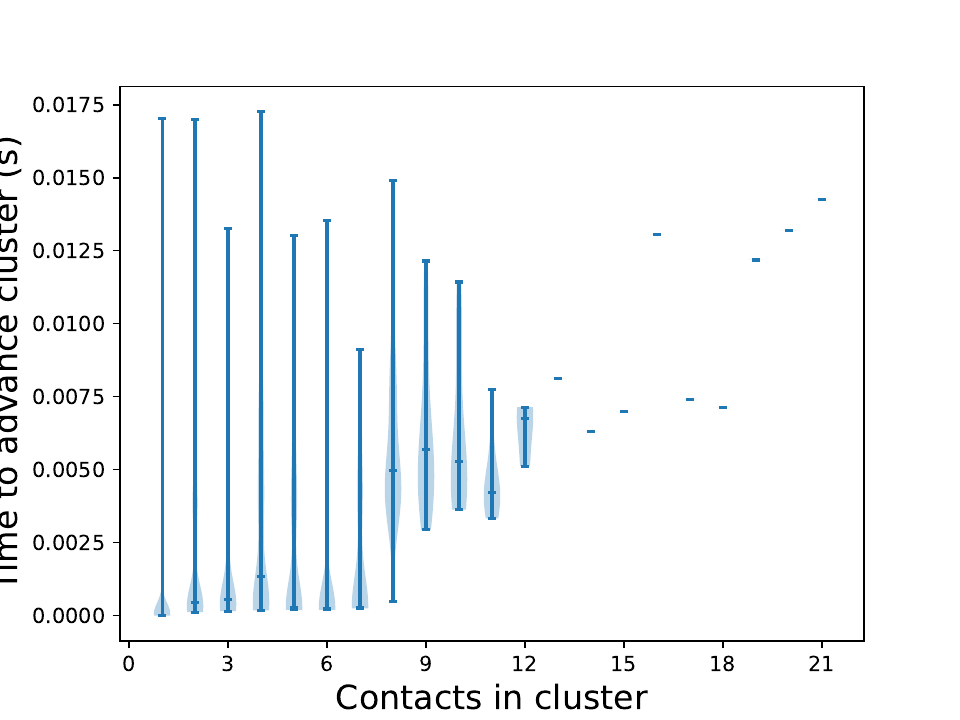}
 \end{center}
 \caption{
   The local time stepping's cost per contact resolution for the hopper (left) and the staircase (right).
  \label{figure:particle-hopper:cost_per_contact}
 }
\end{figure}

\paragraph{Contact points and particle interactions.}

With complex particle geometries, the cost to solve the momentum exchange
equations becomes unpredictable
(Figure~\ref{figure:particle-hopper:cost_per_contact}).
The staircase, on average, produces fewer contacts per cluster, as the clusters
are smaller.
Yet, it becomes impossible to see a clear correlation between the number of
contact points per cluster and the time required to solve the arising equation
system.
Due to a comparison with previous results, we can conclude that the complex
geometric shapes have to cause this effect.

%
%

\section{Conclusion}
\label{section:conclusion}

%
%
Local time stepping can be characterised in several ways:
First, we have to decide between optimistic and
conservative time stepping.
Optimistic time stepping follows a trial-and-error approach.
We start with a time step that is as large as possible, but, in the case of
failure, i.e.~penetration of physically incompressible objects, roll back,
reduce the time step size and recompute until we have found an admissible step size.
In conservative time stepping, we underestimate the admissible time
step size all the time such that we can guarantees an
admissible step size without any penetration.
Second, local time stepping can be characterised through its
granularity.
Individual particles could progress with their own time step size,
or we could cluster them and make the clusters advance in time with the
same time step size.
Global time stepping equals a degenerated cluster approach, where one cluster
comprises all rigid bodies within the system.
Finally, we have to differentiate local time stepping which allows arbitrary
time step sizes from bucketing approaches, where sizes are picked from a
finite set of discretised time slaps \cite{localSPH,localTimeStepDEM}.
The latter approach naturally leads into subcycling.

Our approach strives for a compromise between the extremes,
constructing a bespoke, efficient variant:
The clustering in combination with the narrowing yields a pessimistic approach,
while the global scheme remains optimistic and hence requires rollbacks
occasionally.
We try to advance clusters hosting many particles in one rush, yet allow each
particle to join another cluster in every single time step.
At the same time, our flavour of local time stepping does not discretise time
into fixed subintervals.
As we consolidate particles of one cluster such that they start from the same
time stamp and advance with the same time step size, we however constrain the
progress in time such that particles do not advance totally anarchically.

Throughout the involved algorithmic subsets, we employ a multiscale language to
speed the calculations up:
Particles are represented by different shapes of increasing geometric
complexity, we look at problems within a space-time setting, where we zoom into
the time region of interest, and we employ various geometric representations to
rule out potential overlaps of the particles' $\halo$-regions early.
This way, we can harvest the efficiency promised by local time
stepping---particles far away from potential collisions advance quickly \cite{adaptiveLocal18}.

While the efficiency gains obtained due to local time stepping are impressive,
one major challenge is omitted:
Load balancing local time stepping schemes is difficult.
Without subcycling and any fixed cluster association, our algorithm
yields algorithmic phases of varying concurrency all the time, and the load per
compute step is difficult to predict.
On a shared memory node, we rely on work stealing to level out imbalances. 
This works yet requires sufficient workload per step, which we do not obtain in
all scenarios.

Two algorithmic degrees of freedom to address this issues are not
further discussed in the present paper:
DEM would allow us to mask out clusters artificially.
In phases with many active clusters, it might be reasonable not to advance all
algorithms, but to rule out some clusters which are then updated later when
otherwise only few clusters are active.
At the same time, it might be reasonable to abandon the synchronisation of the
compute phases.
We run through the six compute steps per time step one by one.
Future implementation might want to try to overlap the individual steps.

Future work also will have to develop algorithms that scale beyond compute node
boundaries.
This could either be achieved through tasking approaches which allow for task
migration beyond nodes.
Alternatively, it might be reasonable to use larger clusters spanning multiple
nodes.
As highlighted, global time stepping is equivalent to working with one cluster
only in our formalism.
Excessively large clusters introduce a hybrid between local and global time
stepping, and might be key to scale up local time stepping schemes over parallel
machines.

\ifthenelse{ \boolean{useSISC} }{
}{
  \section*{Acknowledgements}

The work was funded by an EPSRC DTA PhD scholarship (award no. 1764342).
It made use of the facilities of the Hamilton HPC Service of Durham University.
The research aligns and has been supported by EPSRC's Excalibur
programme through its cross-cutting project EX20-9 \textit{Exposing Parallelism: Task Parallelism}
(Grant ESA 10 CDEL) and the DDWG project \textit{PAX--HPC} (Gant EP/W026775/1).
Tobias' group appreciates the support by Intel's Academic Centre of
Excellence at Durham University.

}

%
%

\ifthenelse{ \boolean{useSISC} }{
  \bibliographystyle{siamplain}
}{
  \bibliographystyle{plain}
}

\bibliography{paper}

\end{document}